\date{}
\newcommand{\TT}{{\cal T}}
\newcommand{\PP}{{\cal P}}
\newcommand{\St}{{\cal S}}
\newtheorem{theorem}{Theorem}
\newtheorem{lemma}{Lemma}
\newtheorem*{remark}{Remark}
\begin{document}

\title{Near Optimal Parallel Algorithms for Dynamic DFS in\\ Undirected Graphs~\footnote{The preliminary version of this 
paper have been accepted to appear at SPAA 2017}}
%\titlerunning{Multi-Source Dual Fault Tolerant BFS}

\author{Shahbaz Khan 
\thanks{Dept. of CSE, IIT Kanpur, India, 
email: \texttt{shahbazk@cse.iitk.ac.in}}
\thanks{This research work was supported by
Google India under the Google India PhD Fellowship Award.}
}

%\renewcommand{\shortauthors}{S. Khan}

%\numberofauthors{1} 
%\author{
%\alignauthor
%Shahbaz Khan\titlenote{This work was partially supported 
%by Google India under the Google India PhD Fellowship Award.}\\
%       \affaddr{Department of CSE, IIT Kanpur}\\
%       \affaddr{Kanpur-208016, India}\\
%       \email{shahbazk@cse.iitk.ac.in}
%}

%\Copyright{Shahbaz Khan}%mandatory, please use full first names. LIPIcs license is "CC-BY";  http://creativecommons.org/

%\begin{document}
\maketitle
\begin{abstract}
Depth first search (DFS) tree is a fundamental data structure for solving various 
graph problems. The classical algorithm~\cite{Tarjan72} for building a DFS tree requires 
$O(m+n)$ time for a given undirected graph $G$ having $n$ vertices and $m$ edges.
%We present optimal (upto $poly\log n$ factors) fully dynamic algorithms for maintaining a DFS tree of 
%an undirected graph in the parallel, distributed and semi-streaming environments. 
%These are the first parallel, distributed and semi-streaming algorithms for maintaining a DFS tree 
%in the dynamic setting as follows.
Recently, Baswana et al.~\cite{BaswanaCCK15} presented a simple algorithm for updating the 
DFS tree of an undirected graph after an edge/vertex update in $\tilde{O}(n)$
\footnote{$\tilde{O}()$ hides the poly-logarithmic factors.} time.
However, their algorithm is strictly sequential. 
We present an algorithm achieving similar bounds 
that can be  easily adopted to the parallel environment. 

In the parallel environment, a DFS tree can be computed from scratch using $O(m)$ processors in 
expected $\tilde{O}(1)$ time~\cite{AggarwalAK90} on an EREW PRAM, whereas the best 
deterministic algorithm takes $\tilde{O}(\sqrt{n})$ time~\cite{AggarwalAK90,GoldbergPV93} 
on a CRCW PRAM. Our algorithm can be used to develop optimal time (upto $poly\log n$ factors) %(upto $poly\log n$ factors) 
deterministic parallel algorithms for maintaining fully dynamic DFS and fault tolerant DFS
of an undirected graph. 

%As a result, our algorithm can be used to develop optimal (upto $poly\log n$ factors) 
%fully dynamic and fault tolerant algorithms for maintaining a DFS tree of an undirected graph 
%in the EREW PRAM model of parallel computation. 

\begin{enumerate}
	\item {\em Parallel Fully Dynamic DFS}:	\\
	Given an arbitrary online sequence of vertex or edge updates, 
	we can maintain a DFS tree of an undirected graph in $\tilde{O}(1)$ %$O(\log^3 n)$ 
	time per update using $m$ processors on an EREW PRAM. 

	\item {\em Parallel Fault tolerant DFS}:\\	
	An undirected graph can be preprocessed to build a data structure of size $O(m)$, such that 
	for any set of $k$ updates (where $k$ is constant) in the graph, a DFS tree of the updated graph 
	can be computed in $\tilde{O}(1)$ %$O(k\log^{2k+1} n)$ 
	time using $n$ processors on an EREW PRAM.
	For constant $k$, this is also work optimal 
	(upto $poly\log n$ factors). 
\end{enumerate} 

%Moreover, our fully dynamic algorithm provides, in a seamless manner, 
%nearly optimal pass semi-streaming algorithm and nearly optimal round distributed 
%algorithm (in restricted model) for fully dynamic DFS in undirected graphs. 
%These are the first parallel, semi-streaming and distributed  algorithms for maintaining 
%a DFS tree in the dynamic setting.

Moreover, our fully dynamic DFS algorithm provides, in a seamless manner, 
nearly optimal (upto $poly\log n$ factors) algorithms for maintaining a DFS tree in 
the semi-streaming environment and a restricted distributed model. 
These are the first parallel, semi-streaming and distributed  algorithms for maintaining 
a DFS tree in the dynamic setting.

\end{abstract}

%\begin{CCSXML}
%<ccs2012>
%<concept>
%<concept_id>10003752.10003809.10003635.10010038</concept_id>
%<concept_desc>Theory of computation~Dynamic graph algorithms</concept_desc>
%<concept_significance>500</concept_significance>
%</concept>
%<concept>
%<concept_id>10003752.10003809.10010170</concept_id>
%<concept_desc>Theory of computation~Parallel algorithms</concept_desc>
%<concept_significance>500</concept_significance>
%</concept>
%<concept>
%<concept_id>10002950.10003624.10003633.10010917</concept_id>
%<concept_desc>Mathematics of computing~Graph algorithms</concept_desc>
%<concept_significance>300</concept_significance>
%</concept>
%<concept>
%<concept_id>10003752.10003809.10010055</concept_id>
%<concept_desc>Theory of computation~Streaming, sublinear and near linear time algorithms</concept_desc>
%<concept_significance>100</concept_significance>
%</concept>
%<concept>
%<concept_id>10003752.10003809.10010172</concept_id>
%<concept_desc>Theory of computation~Distributed algorithms</concept_desc>
%<concept_significance>100</concept_significance>
%</concept>
%</ccs2012>
%\end{CCSXML}

%\ccsdesc[500]{Theory of computation~Dynamic graph algorithms}
%\ccsdesc[500]{Theory of computation~Parallel algorithms}
%%\ccsdesc[300]{Mathematics of computing~Graph algorithms}
%\ccsdesc[100]{Theory of computation~Streaming, sublinear and near linear time algorithms}
%\ccsdesc[100]{Theory of computation~Distributed algorithms}
%%\printccsdesc
%\keywords{parallel, dynamic, DFS, graph, algorithm, streaming, distributed}% mandatory: Please provide 1-5 keywords

%\@setabstract
%\newpage
\section{Introduction}
Depth First Search (DFS) is a well known graph traversal technique. 
Right from the seminal work of Tarjan~\cite{Tarjan72}, DFS traversal 
has played a central role in the design of efficient algorithms for many 
fundamental graph problems, namely, %bi-connected components, 
strongly connected components, topological sorting~\cite{Tarjan76}, 
dominators in directed graph~\cite{Tarjan74},
%bipartite matching~\cite{HopcroftK73},  planarity testing~\cite{HopcroftT74}, 
edge and vertex connectivity~\cite{EvenT75} etc. 

Let $G=(V,E)$ be an undirected connected graph having $n$ vertices and $m$ edges.
%$n=|V|$ vertices and $m=|E|$ edges. 
The DFS traversal of $G$ starting from any vertex $r\in V$ 
%traverses is a recursive algorithm to traverse a graph. This traversal
produces a spanning tree rooted at $r$ %(or forest, in case $G$ is not connected),
called a DFS tree, in $O(m+n)$ time. 
%For a DFS tree, a non-tree edge of the graph is called a {\it back edge} 
%if one of its endpoints is an ancestor of the other. 
%Otherwise, it is called a {\it cross edge}. Now,
For any rooted spanning tree of $G$, a non-tree edge of the graph
is called a {\it back edge} if one of its endpoints is an ancestor of the 
other in the tree. Otherwise, it is called a {\it cross edge}. 
A necessary and sufficient condition for any rooted spanning tree to be a 
DFS tree is that every non-tree edge is a back edge. Thus, 
many DFS trees are possible for any given graph from a given root $r$. 
However, if the traversal is performed strictly according to the order of 
edges in the adjacency lists of the graph, the resulting DFS tree will
be unique. Ordered DFS tree problem is to compute the order in which the vertices are 
visited by this unique DFS traversal. %while building this unique DFS tree.

%For any rooted spanning tree of $G$, all non-tree edges of the graph
%can be %partitioned 
%classified into two categories, namely, back edges and cross edges
%as follows. A non-tree edge is called a {\it back edge} if one of its 
%endpoints is an ancestor of the other in the tree. Otherwise, it is called a
%{\it cross edge}. A necessary and sufficient condition for any rooted 
%spanning tree to be a DFS tree is that every non-tree edge is a back edge.
%%Thus, it can be seen that many DFS trees are possible for any given graph. 
%%However, if the traversal is performed according to the order specified 
%%by the adjacency lists of the graph, the resulting DFS tree will
%%be unique. The ordered DFS tree problem
%%is to compute the order in which the vertices get visited 
%%while building this unique DFS tree.
%%%% COMMENTED SEPT 9
%%Thus, it can be seen that many DFS trees are possible for any given graph. 
%%However, if the traversal of the graph is performed according to the order specified 
%%by the adjacency lists of the graph, the resulting DFS tree will
%%be unique. The ordered DFS tree problem
%%is to compute the order in which the vertices get visited when the traversal
%%is performed strictly according to the adjacency lists.

An algorithmic graph problem is modeled 
in a dynamic environment as follows. There is an online sequence of 
updates on the graph, and the objective is to update the solution of 
the problem efficiently after each update. % using some clever data structure.
In particular, the time taken to update the solution has to be much 
smaller than that of the best static algorithm for the problem. 
A dynamic graph algorithm is said to be {\em fully dynamic} if it handles both 
insertion and deletion updates, otherwise it is called {\em partially dynamic}. 
%A partially dynamic algorithm is said
%to be incremental or decremental if it handles only insertion or only 
%deletion updates respectively.
Another, and more restricted, variant of a dynamic environment is the 
fault tolerant environment. Here the aim is to build a compact data structure, 
for a given problem, that is resilient to failures of vertices/edges and can 
efficiently report the solution after a given set of failures.

Recently, Baswana et al.~\cite{BaswanaCCK15,BaswanaCCK16} presented a fully dynamic algorithm for maintaining 
a DFS tree of an undirected graph in $\tilde{O}(\sqrt{mn})$ time per update. %for any edge/vertex update. 
They also presented an algorithm for updating the DFS tree after a single update in $\tilde{O}(n)$ time.
%They also proved a lower bound of $\Omega(n)$ per update for maintaining fully dynamic DFS under vertex 
%updates.
Prior to this work, only partially dynamic algorithms were known for 
maintaining a DFS tree~\cite{BaswanaC15,BaswanaK14,FranciosaGN97}. 
%DFS trees~\cite{BaswanaC15,BaswanaK14,FranciosaGN97}. 
%
%Prior to this work, only partially dynamic algorithms were known to maintain DFS trees for DAGs~\cite{BaswanaC15,FranciosaGN97} 
%and undirected graphs~\cite{BaswanaK14}. 

%Both these algorithms handle both edge and vertex updates, 
%which was not true for any earlier results \cite{BaswanaC15,BaswanaK14,FranciosaGN97} on dynamic DFS. 

%Interestingly, these algorithms handle both edge and vertex updates, 
%which was not true for any earlier results \cite{BaswanaC15,BaswanaK14,FranciosaGN97} on dynamic DFS. 

Now, major applications of dynamic graphs in the real world involve a huge amount of data, 
which makes recomputing the solution after every update infeasible. Due to this large size of data, 
it also becomes impractical for solving such problems on a single sequential machine because of 
both memory and computation costs involved. 
Thus, it becomes more significant to explore these dynamic graph problems on a 
computation model that efficiently handles large storage and computations involved.
%
%Thus, it becomes more significant to explore parallel 
%and distributed algorithms to handle such data in the dynamic setting.
% 
In the past three decades a lot of work has been done to address dynamic graph problems in 
parallel~\cite{BorosEGK00,Ferragina95,OuR97,SchloegelKK02,SherlekarPR85}, 
semi-streaming \cite{AhnG13,AssadiKLY16,GuhaMT15,HuangP16,McGregor14}, 
and distributed (also called dynamic networks)~\cite{AwerbuchCK08,JinSLC14,KingKT15,LiuEK08,SwaminathanG98} 
environments.

In this paper, we address the problem of maintaining dynamic DFS tree efficiently in the parallel
environment and demonstrate its applications in semi-streaming and distributed environments.
  
%In this paper, we address the problem of maintaining a fully dynamic DFS tree efficiently in the parallel, distributed
%and semi-streaming environments.  
%%-------------------------------------------------------------------------
%\subsection{Existing results on dynamic, parallel, streaming and distributed DFS}
%%-------------------------------------------------------------------------
%-------------------------------------------------------------------------
\subsection{Existing results}% on parallel, distributed and streaming DFS}
%-------------------------------------------------------------------------
\label{sec:intro_exist_results}

%In spite of the simplicity of a DFS tree, designing an efficient dynamic, parallel, 
%distributed or semi-streaming algorithm for a DFS tree has turned out to be quite challenging. 
%Reif~\cite{Reif85} conjectured that computation of DFS tree of a graph is inherently sequential. 
In spite of the simplicity of a DFS tree, designing efficient parallel, 
distributed or streaming algorithms for a DFS tree has turned out to be quite challenging. 
Reif~\cite{Reif85} showed that the ordered DFS tree problem is a $P$-Complete problem. 
For many years, this result seemed to imply that the general DFS tree problem,
that is, the computation of any DFS tree of the graph is also inherently sequential. 
However, Aggarwal et al.~\cite{AggarwalA88,AggarwalAK90} proved that the general
DFS tree problem is in {\em RNC}
\footnote{{\em NC} is the class of problems solvable using $O(n^{c_1})$ processors
	in parallel in $O(\log^{c_2} n)$ time, for any constants $c_1$ and $c_2$.
	The class {\em RNC} extends {\em NC} to allow access to randomness. } 
 by designing a randomized EREW PRAM 
\footnote{Exclusive Read Exclusive Write (EREW) restricts any two processors 
	to simultaneously read or write the same memory cell.
	Concurrent Read Concurrent Write (CRCW) does not have this restriction.} 
algorithm that takes $\tilde{O}(1)$ time. 
%However,
But 
 the fastest deterministic algorithm for computing general DFS tree in parallel 
 still 
takes $\tilde{O}(\sqrt{n})$ time~\cite{AggarwalAK90,GoldbergPV93} in CRCW PRAM~\footnote{
It essentially shows DFS to be NC equivalent of minimum-weight perfect matching, 
which is in RNC whereas its best deterministic algorithm requires $\tilde{O}(\sqrt{n})$ time.
}
, even for undirected graphs.  
Moreover, the general DFS tree problem has been shown to be in {\em NC} for 
some special graphs including
DAGs~\cite{GhoshB84,Zhang86} 
and planar graphs~\cite{Hagerup90,Kao88,Smith86} (see~\cite{Freeman91} for a survey).
%More details of these results can be found in~\cite{Freeman91}.
In fact for random graphs in $G(n,p)$ model
\footnote{$G(n,p)$ denotes a random graph where every edge of the graph exists independently with probability $p$.}
\cite{ErdosR59}, 
%(where every possible edge of the graph exists with probability $p$) , 
Dyer and Frieze~\cite{DyerF91a} proved that even ordered DFS tree problem is in {\em RNC}.
%can be computed in $\tilde{O}(1)$ expected time.
Whether general DFS tree problem is in {\em NC} is still a 
long standing open problem.

In the semi-streaming environment, the input graph is accessed in the form of a stream of graph edges,
where the algorithm is allowed only $O(n)$ local space. The DFS tree can be trivially computed 
using $O(n)$ passes over the input graph in the semi-streaming environment, 
each pass adding one vertex to the DFS tree. 
However, computing the DFS tree in $\tilde{O}(1)$ passes is considered hard~\cite{Farach-ColtonHL15}.
To the best of our knowledge, it remains an open problem to compute the DFS tree using even 
$o(n)$ passes in any relaxed streaming environment~\cite{ConnellC09,Ruhl03}.

Computing a DFS tree in a distributed setting was widely studied in 1980's and 1990's.
A DFS tree of the given graph can be computed in $O(n)$ rounds, with various trade offs 
of number of messages passed and size of each message. 
If the size of a message is allowed to be $O(n)$, the DFS tree can be built using $O(n)$ messages~\cite{KumarIS90,MakkiH96,SharmaI89}.
However, if the size of a message is limited to $\tilde{O}(1)$, 
the number of messages required is $O(m)$~\cite{Cidon88,LakshmananMT87,Tsin02}.
%Note that some of these algorithms also works on stricter models for distributed computation. 
%The details on these results can be found in~\cite{Tsin02}.

Thus, to maintain a DFS tree in dynamic setting, each update requires 
$\tilde{O}(\sqrt{n})$ time on a CRCW PRAM in deterministic parallel setting, 
$O(n)$ passes in the semi-streaming setting 
and $O(n)$ rounds in the distributed setting, which is very inefficient.
%Thus, it is evident that in the deterministic parallel or distributed setting, 
%recomputing a DFS tree from scratch after every update in the graph becomes impractical. 
Hence, exploring the dynamic maintenance of a DFS tree in parallel, semi-streaming and distributed  
environments seems to be a long neglected problem of practical significance.

We present optimal algorithms (up to $poly\log n$ factors) for maintaining a fully dynamic DFS tree
for an undirected graph under both edge and vertex updates on these models. %parallel, distributed and semi-streaming environments.

\subsection{Our Results}
\label{sec:results}
We consider an extended notion of updates wherein an update could be either
insertion/deletion of a vertex or insertion/deletion of an edge. 
Furthermore, an inserted vertex can be added with any set of incident edges to the graph.

In the parallel setting, our main result can be succinctly described as follows. %in the following theorem.
\begin{theorem} Given an undirected graph and its DFS tree, it can be preprocessed to
	 build a data structure of size $O(m)$ in $O(\log n)$ time using $m$ processors on an EREW PRAM 
	 such that for any update in the graph, a DFS tree of the updated graph can be computed in 
	 $O(\log^3 n)$ time using $n$ processors on an EREW PRAM.
	\label{main-result}
\end{theorem}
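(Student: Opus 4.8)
\emph{The plan is to parallelize the single--update algorithm of Baswana et al.~\cite{BaswanaCCK15}, whose correctness I may take as given.} Their algorithm reduces every kind of update (edge deletion, edge insertion that turns out to be a cross edge, and vertex deletion, each handled after trivially discarding back--edge insertions) to a single primitive: \emph{rebuilding} a DFS tree from a collection of intact DFS subtrees hung off a tree path, using the \emph{rerooting} operation that reverses the path from the root of a subtree $T(v)$ to a chosen descendant $w$ while leaving every other hanging subtree untouched. So my first step is to keep the current DFS tree in its Euler--tour representation, together with auxiliary structures that let me (i) decide ancestor--descendant relationships in $O(1)$, (ii) for a given subtree $T(v)$ and a tree path $P$ towards the root, find the edge from $T(v)$ to $P$ whose endpoint on $P$ is \emph{lowest}, and (iii) reroot a subtree. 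Granting these, a single update affects only $O(1)$ explicitly broken pieces, and the whole task becomes: attach the hanging subtrees to the surviving path in DFS order.

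\emph{Preprocessing.} I would build the Euler tour of the given DFS tree by list ranking via pointer jumping, which runs in $O(\log n)$ time with $O(n)$ processors on an EREW PRAM; from the entry/exit indices this immediately yields $O(1)$ ancestor queries, and each subtree is identified with a contiguous Euler--tour range. For the extremal--attachment queries (ii) I would sort the $m$ edges by the Euler--tour positions of their endpoints (Cole's parallel merge sort, $O(\log m)$ time with $m$ processors) and layer range--minimum structures over them, so that ``among all edges leaving the range of $T(v)$, the one reaching lowest on $P$'' is answered in $O(\log n)$ time; all of this is $O(m)$ space and is constructed within the stated $O(\log n)$ time and $m$ processor budget. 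Rerooting a path of length $\ell$ is a parallel path reversal, done in $O(\log \ell)=O(\log n)$ time by the same Euler--tour/list machinery, and it updates only the $O(\ell)$ affected parent pointers, each assigned to one of the $n$ processors.

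\emph{The update.} After identifying the surviving path $P$ and the set of hanging subtrees, I would \emph{batch} the attachment queries: in parallel, each subtree locates its lowest attachment edge on $P$, is rerooted at the corresponding endpoint, and is spliced below that point of $P$; the reversed spine then extends $P$ and a new round of attachment queries is issued for the subtrees that now hang off this extended spine. Since the reattachment targets along the spine are monotone (a subtree can only slide strictly lower), the reattachment structure is laminar, and the whole new DFS tree is determined by resolving this nested chain and then writing out the parent pointers with the $n$ processors.

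\emph{The main obstacle} is precisely the depth of this chain: the reversed spine produced by a single deletion can have length $\Omega(n)$, so processing subtrees one at a time along it---as the sequential algorithm does in $\tilde{O}(n)$ total work---would give linear, not polylogarithmic, depth. The crux of the proof is therefore to collapse this inherently iterative rerooting cascade into $O(\log n)$ parallel rounds. I expect to achieve this by a doubling / tree--contraction argument on the attachment chain: rather than advancing one attachment at a time, I compute, for geometrically growing prefixes of the spine, the combined effect of all reattachments using the monotonicity of the laminar structure, so that the number of rounds is $O(\log n)$. With each round dominated by a batch of $O(\log n)$--time range queries and an $O(\log n)$--time path reversal, the total running time is $O(\log^3 n)$ on $n$ processors, matching the claim; verifying that the contracted rounds faithfully reproduce the sequential attachment order (so that every non--tree edge of the output is a back edge) is the step that will require the most care.
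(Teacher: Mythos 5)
You correctly identify the obstacle --- the rerooting cascade of the sequential algorithm can have depth $\Omega(n)$ --- but the fix you propose does not address where that depth actually comes from. The cascade is not a chain of attachments along a single fixed spine $P$: when a hanging subtree $\tau$ is rerooted at its attachment point $x$ and traversed, the path from $x$ to $root(\tau)$ becomes a \emph{new} internal spine, and the subtrees hanging from \emph{that} path must issue their lowest-edge queries against this newly created path, not against $P$. One of those subtrees can contain almost all of $\tau$'s vertices, so the recursion depth is driven by repeatedly rerooting a nearly undiminished subtree, not by the length of any one path. Consequently your claims that ``a subtree can only slide strictly lower'' and that the reattachment structure is laminar along the spine are unsubstantiated (and not true in the sense you need), and a pointer-doubling / tree-contraction pass over prefixes of the original spine has nothing to contract: the work you must parallelize lives inside the recursively spawned subtrees, on paths that do not exist until the previous level of the recursion has been resolved.

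The paper closes this gap with a two-level potential argument that your proposal has no counterpart to. It maintains the invariant that every component of the unvisited graph is either a single subtree ($C1$) or a path plus subtrees attached to it ($C2$), and runs $\log n$ phases of $\log n$ stages each, where a phase halves the size of the largest surviving subtree and a stage halves the length of the component's path. Guaranteeing this progress requires a case analysis of where the component's new root lies: the easy cases are handled by path halving, disintegrating, and disconnecting traversals, but the genuinely hard case --- the new root lying inside a \emph{heavy} subtree of a $C2$ component --- needs the ``heavy subtree traversal'' with its three scenarios ($l$, $p$, $r$ traversals, chosen via the applicability lemma) plus a further special case. That machinery is precisely what prevents the recursion from descending into a nearly-equal-size subproblem, and it is the missing idea in your proof; the data-structure portion of your proposal (Euler tours, sorted adjacency by postorder, batched range queries in $O(\log n)$ time) is essentially the same as the paper's and is fine.
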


With this result at the core, we easily obtain the following results.% for the dynamic DFS tree in an undirected graph.

\begin{enumerate}
	\item {\em Parallel Fully Dynamic DFS}:	\\
	Given any arbitrary online sequence of vertex or edge updates, 
	we can maintain a DFS tree of an undirected graph in $O(\log^3 n)$ time per update 
	using $m$ processors on an EREW PRAM. 

	\item {\em Parallel Fault tolerant DFS}:\\	
	An undirected graph can be preprocessed to build a data structure of size $O(m)$ such that 
	for any set of $k (\leq \log n)$ updates in the graph, a DFS tree of the updated graph 
	can be computed in $O(k\log^{2k+1} n)$ 
	time using $n$ processors on an EREW PRAM.

%	\item {\em Parallel Fault tolerant DFS}:	
%	An undirected graph can be preprocessed to build a data structure of size $O(m)$
%	such that after any edge/vertex failure in the graph, the DFS tree of the updated graph
%	can be build in $O(\log^3 n)$ time using $n$ processors on an EREW PRAM. 
\end{enumerate}  

Our fully dynamic algorithm and fault tolerant algorithm (for constant $k$), 
clearly take optimal time (up to $poly\log n$ factors) for maintaining a DFS tree.
Our fault tolerant algorithm (for constant $k$) is also work optimal
(upto $poly\log n$ factors) since a single update can lead to $\Theta(n)$ changes in the DFS tree.
Moreover, our result also establishes that maintaining a fully dynamic DFS tree for an undirected graph is in {\em NC} 
(which is still an open problem for DFS tree in the static setting).

\subsection{Applications of Parallel Fully Dynamic DFS}
Our parallel fully dynamic DFS algorithm can be seamlessly adapted to the semi-streaming and distributed environments
as follows.
%For the distributed and semi-streaming algorithms, we present the following results.

%For the distributed and semi-streaming algorithms, our main result can be succintly described in the following theorem.
%Let $S$ be a set of elements and $q(S)$ be a query on $S$ returning an element from set $S$. 
%We call $q$ as {\em easily reducible} if for any mutually exhaustive subsets $S_1,S_2,...,S_k$ of $S$ (i.e., $\cup_{i=1}^k S_i = S$),
% $q(S)$ can be computed using $q(S_1),q(S_2),...,q(S_k)$.
%
%\begin{theorem}
%	Given an undirected graph $G=(V,E)$ and its DFS tree $T$, 
%	for any update in the graph, a DFS tree of the updated graph can be computed by sequentially performing
%	$O(\log^2 n)$ sets of $O(n)$ easily reducible queries on $E$, in addition to local computation requiring only 
%	the current DFS tree $T$.
%	\label{thm:generic}
%\end{theorem}
%
%\begin{theorem}
%	Given an undirected graph and its DFS tree $T$, for any vertex $r'\in T$ the DFS tree can be rerooted 
%	at $r'$ in $O(\log n)$ phases each having $O(\log n)$ stages, where each stage sequentially performs $O(1)$ sets of 
%	$O(n)$ independent queries on $\cal D$, in addition to local computation requiring only the current DFS tree $T$.
%	\textbf{SOME VARIANT OF THIS.}
%	\label{thm:rerootG}
%\end{theorem}
%
%With this result at the core, we easily obtain the following results for the dynamic DFS tree in an undirected graph.

\begin{enumerate}
	\item {\em Semi-streaming Fully Dynamic DFS}: \\
	Given any arbitrary online sequence of vertex or edge updates, we can maintain a DFS tree of an undirected graph 
	using $O(\log^2 n)$ passes over the input graph per update 
	by a semi-streaming algorithm using $O(n)$ space.
%	Given an undirected graph and its DFS tree, for any given update in the graph
%	the updated DFS tree can be computed using $O(\log^2 n)$ passes over the input graph 
%	by a semi-streaming algorithm using $O(n)$ space.

	\item {\em Distributed Fully Dynamic DFS}: \\
	Given any arbitrary online sequence of vertex or edge updates,  we can maintain a DFS tree of an undirected graph 
	in $O(D\log^2 n)$ rounds per update in the synchronous $\mathcal{CONGEST}(n/D)$ model 
	\footnote{${\mathcal CONGEST}(B)$ model is the standard $\mathcal{CONGEST}$ model~\cite{Peleg00} where message size is relaxed to $B$ words.}
%%In the dynamic setting, we add an additional constraint of a space restriction of $O(n)$ size at each node.
%%In the absence of this restriction, the whole graph can be saved at each node, where an algorithm can trivially 
%%propagate the update to each node and the updated solution can be computed locally.
%%%Furthermore, a single update may change the state of almost all the vertices in several graph problems. 
%%%Thus, in this model we have a $\Omega(D)$ lower bound on the number of rounds in the worst case per update for such problems. 
%%Our model includes a {\em preprocessing} stage followed by an alternating sequence of {\em update} and 
%%{\em recovery} stages, where after any update the model allows the algorithm to complete recovery before the next update.
%%(similar model was used by~\cite{HenzingerKN13}).
%%Also, we allow the deletion updates to be {\em abrupt}, i.e., the deleted link/node becomes unavailable for use instantly after the update. 
	using $O(nD\log^2 n+m)$ messages of size $O(n/D)$ requiring $O(n)$ space on each processor, where $D$ is diameter of the graph. 
	%Owing to the space restriction it can also be used as a routing scheme. 

%	\item {\em Distributed Fully Dynamic DFS tree in $k-$machine model}:
%	Given any arbitrary online sequence of vertex or edge updates,  we can maintain a DFS tree in $O(\log^2 n)$ rounds per update 
%	in the distributed $k$ machines setting using $O(k^2 \log^2 n)$ messages each of size $O(n)$ requiring $O(m/k)$ space on each processor.

\end{enumerate}  

%Figure~\ref{fig:results} illustrates our results and the existing results in the right perspective.
Our semi-streaming algorithm clearly takes optimal number of passes 
(up to $poly\log n$ factors) for maintaining a DFS tree.
Our distributed algorithm that works in a restricted ${\mathcal CONGEST}(B)$ model, 
also arguably requires optimal rounds (up to $poly\log n$ factors) because it requires 
$\Omega(D)$ rounds to propagate the information of the update throughout the graph. 
Since almost the whole DFS tree may need to be updated due to a single 
update in the graph, every algorithm for maintaining a DFS tree in the distributed setting will require $\Omega(D)$ rounds
\footnote{For an algorithm maintaining the whole DFS tree at each node, even our message size is optimal.
	This is because an update of size $O(n)$ (vertex insertion with arbitrary set of edges) 
	will have to be propagated throughout the network in the worst case.
	In $O(D)$ rounds, it can only be propagated using messages of size $\Omega(n/D)$.
	(see Section~\ref{sec:distributed} for details).}.
%Thus, our distributed algorithm takes optimal number of rounds (up to $poly\log n$ factors) for maintaining dynamic DFS. 
This essentially improves the state of the art for the classes of graphs with $o(n)$ diameter.

\subsection{Overview}
\label{sec:overview}
We now describe a brief overview of our result. 
Baswana et al.~\cite{BaswanaCCK15} proved that updating a DFS tree after any update in the graph 
is equivalent to {\em rerooting} disjoint subtrees of the DFS tree.
They also presented an algorithm to reroot a DFS tree $T$ (or its subtree), originally rooted at $r$
to a new root $r'$, in $\tilde{O}(n)$ time.
%The algorithm can be viewed as performing the DFS traversal starting from $r'$ as follows.
It starts the traversal from $r'$ traversing the path connecting $r'$ to $r$ in $T$. 
Now, the subtrees hanging from this path are essentially the components of the \textit{unvisited graph} 
(the subgraph induced by the unvisited vertices of the graph) due to the absence of {\em cross edges}. 
In the updated DFS tree, every such subtree, say $\tau$, shall hang from an edge emanating from $\tau$ 
to the path from $r'$ to $r$. Let this edge be $(x,y)$, where $x\in \tau$. 
Thus, we need to recursively reroot $\tau$ to the new root $x$ and hang it from $(x,y)$ 
in the updated DFS tree. Note that this rerooting can be independently performed for different 
subtrees hanging from tree path from $r'$ to $r$. 

At the core of their result, they use a property of the DFS tree, that they called {\em components} property, 
to find the edge $(x,y)$ efficiently, using a data structure ${\mathcal D}_0$.
However, as evident from the discussion above, their rerooting procedure can be strictly sequential in the worst case.
This is because the size of a subtree $\tau$ to be rerooted can be almost equal to that of the original tree $T$.
As a result, $O(n)$ sequential reroots may be required in the worst case. 
Our main contribution is an algorithm that performs this rerooting 
efficiently in parallel.

Our algorithm ensures that rerooting is completed in $\tilde{O}(1)$ steps as follows. 
At any point of time, we ensure that every component $c$ of the {\em unvisited graph} 
is either of type $C1$, having a single subtree of $T$, or of type $C2$, having a path $p_c$ 
and a set of subtrees of $T$ having edges to $p_c$. Note that in~\cite{BaswanaCCK15} every component 
of the unvisited graph is of type $C1$.
We define three types of traversals, namely, {\em path halving} (also used by~\cite{BaswanaCCK15}), 
{\em disintegrating traversal} and {\em disconnecting traversal}.
We prove that using a combination of $O(1)$ such traversals, for every component $c$ of the unvisited graph,
either the length of $p_c$ is halved or the size of largest subtree in $c$ is halved.
%Also, each of these traversals can be performed in $O(\log n)$ time using $|c|$ processors
%on an EREW PRAM.
%
%Our algorithm also uses the {\em components} property and a data structure $\mathcal D$ 
%(answering similar queries as ${\mathcal D}_0$). 
%However, since our algorithm ensures that each vertex is queried by $\mathcal D$ only $\tilde{O}(1)$ time 
%(unlike~\cite{BaswanaCCK15}), our data structure $\mathcal D$ is much simpler than ${\mathcal D}_0$.
Moreover, these traversals can be performed in $O(\log n)$ time on $|c|$ processors
using the {\em components} property and a data structure $\mathcal D$ 
(answering similar queries as ${\mathcal D}_0$). 
However, since our algorithm ensures that each vertex is queried by $\mathcal D$ only $\tilde{O}(1)$ times 
(unlike~\cite{BaswanaCCK15}), our data structure $\mathcal D$ is much simpler than ${\mathcal D}_0$.

%However, our data structure 	${\mathcal D}$ is much simpler than ${\mathcal D}_0$ because our algorithm 
%ensures that each vertex is queried by $\mathcal D$ only $\tilde{O}(1)$ times
%(unlike~\cite{BaswanaCCK15} which queries it $O(n)$ times).

%Furthermore, this traversal needs to be performed in such a way that the 
%each component so formed maintains some structural properties that can be exploited to 
%further divide the components using the same procedure. 
%For this purpose we use much simpler dat
%After any update, our algorithm uses $m$ processors to build a much simpler data structure $\mathcal D$ 
%(answering queries similar to ${\mathcal D}_0$),  and update the corresponding DFS tree using $n$ processors 
%on an EREW PRAM.

Furthermore, both our algorithm and the algorithm by~\cite{BaswanaCCK15} use the non-tree edges of the graph only to 
answer queries on data structure $\mathcal D$ (or ${\mathcal D}_0$).
The remaining operations (except for queries on $\mathcal D$) required by our algorithm can be 
performed using only edges of $T$ in $O(n)$ space. 
%Another significant property of our algorithm is that the 
%non-tree edges are used only to answer the query on data structure $\mathcal D$  (also a property of~\cite{BaswanaCCK15}). 
As a result, our algorithm being efficient in parallel setting (unlike~\cite{BaswanaCCK15}), 
can also be adapted to the semi-streaming  and distributed model as follows.
In the semi-streaming model, the passes over the input graph are used only to answer the queries on $\mathcal D$,
where the parallel queries on $\mathcal D$ made by our algorithm can be answered simultaneously using a single pass. 
Our distributed algorithm only needs to store the current DFS tree at each node and 
the adjacency list of the corresponding vertex abiding the restriction of $O(n)$ space at each node. 
Again, the distributed computation is only used to answer queries on $\mathcal D$. 

\section{Preliminaries}
\label{sec:prelim}

Let $G=(V,E)$ be any given undirected graph on $n=|V|$ vertices and $m=|E|$ edges. 
The following notations will be used throughout the paper.

\begin{itemize}
\item $par(w):$~ Parent of $w$ in $T$.  %[USED HERE FOR T, VERIFY for SODA journal IMP]
\item  $T(x):$ The subtree of $T$ rooted at vertex $x$. %[USED]
\item  $path(x,y):$ Path from vertex $x$ to vertex $y$ in $T$.
%\item  $dist_T(x,y):$ The number of edges on the path from $x$ to $y$ in $T$.
\item  $LCA(x,y):$ Lowest common ancestor of $x$ and $y$ in $T$.
\item $root(T'):$~ Root of a subtree $T'$ of $T$, i.e., $root\big(T(x)\big)=x$.
\item $T^*:$~ The DFS tree computed by our algorithm for the updated graph.
\end{itemize}

A subtree $T'$ is said to be {\em hanging} from a path $p$ if the $root(T')$ is a child of 
some vertex on the path $p$ and does not belong to the path $p$. 
%At any time during the algorithm, we shall refer to the subgraph induced the unvisited vertices
%of the graph as the {\em unvisited graph}.
Unless stated otherwise, a component refers to a connected component of the unvisited graph.
We refer to a path $p$ in a DFS tree $T$ as an {\em ancestor-descendant} path if one of its endpoints 
is an ancestor of the other in $T$.

For our distributed algorithm, we use the synchronous ${\cal CONGEST}(B)$ model~\cite{Peleg00}. 
For the dynamic setting, Henzinger et al.~\cite{HenzingerKN13} presented a model 
that has a {\em preprocessing} stage followed by an alternating sequence of non-overlapping stages for 
{\em update} and {\em recovery} (see Section~\ref{sec:distributed} for details). 
%After any graph update (update stage), the model allows the algorithm to completely update the solution (recovery stage)
%before the next update is performed on the graph (similar model was used by~\cite{HenzingerKN13}).
We use this model with an additional constraint of space restriction of $O(n)$ size at each node.
In the absence of this restriction, the whole graph can be stored at each node, 
where an algorithm can trivially propagate the update to each node and the updated solution can be computed locally.
Also, we allow the deletion updates to be {\em abrupt}, i.e., the deleted link/node becomes unavailable 
for use instantly after the update. 

In order to handle disconnected graphs, we add a dummy vertex $r$ to the graph
and connect it to all the vertices. Our algorithm maintains a DFS tree rooted at 
$r$ in this augmented graph, where each child subtree of $r$ is a DFS tree of a connected component 
in the DFS forest of the original graph. 
% and it maintains a DFS tree rooted at $r$ at
%each stage. This is done to avoid dealing with a DFS forest if the graph is not connected. 
%It can be observed easily that DFS forest of the original graph can be 
%obtained by considering each subtree rooted at any child of $r$ as a DFS tree 
%of a connected component of the graph. 

We shall now define some queries that are performed by our algorithm on the data structure $\cal D$
(similar queries on ${\cal D}_0$ also used in \cite{BaswanaCCK15}).
%Baswana et al.~\cite{BaswanaCCK15} also defined similar queries on ${\cal D}_0$ (see Section~\ref{sec:overview}).
Let $v,w,x,y\in V$, where $path(x,y)$ and $path(v,w)$ (if required) are ancestor-descendant paths in $T$.
Also, no vertex in $path(v,w)$ is a descendant of any vertex in $path(x,y)$.
We define the following queries.

\begin{enumerate}
\item $Query\big(w,path(x,y)\big):$ among all the edges from $w$ that are incident on $path(x,y)$ %from $w$ 
					 in $G$, return an edge that is incident nearest to $x$ on $path(x,y)$.%
\item $Query\big(T(w),path(x,y)\big):$ among all the edges from $T(w)$ that are incident on $path(x,y)$ %from
					in $G$, return an edge that is incident nearest to $x$ on $path(x,y)$.
\item $Query\big(path(v,w),path(x,y)\big):$ among all the edges from $path(v,w)$ that are incident on 
					 $path(x,y)$ in $G$, return an edge that is incident nearest to $x$ on $path(x,y)$.%
\end{enumerate}

Let the  {\em descendant} vertices of the three queries described above be $w$, $T(w)$ and $path(v,w)$ respectively.
%Let the first parameter of first two queries, i.e., $w$ and $T(w)$, 
%and the second parameter of the third query , i.e., $path(v,w)$, be called the {\em descendant} vertices 
%for the corresponding query.
A set of queries on the data structure $\cal D$ are called {\em independent} if the 
{\em descendant} vertices of these queries are disjoint.

%A set of queries on the data structure $\cal D$ described above are called {\em independent} if the 
%vertices of the first parameter of queries of first two types, i.e., $w$ and $T(w)$, 
%and the second parameter of query of the third type, i.e., $path(v,w)$, are disjoint.
%Similarly, we call  a set of $LCA(x,y)$ queries as {\em independent} if atleast one of $x$ or $y$ in each query is distinct.

%\subsubsection*{Components Property} 
%Baswana et al.~\cite{BaswanaCCK15} described the {\em components} property of the DFS tree as follows.
%Baswana et al.~\cite{BaswanaCCK15} also described how updating a DFS tree after any update in the graph 
%can be reduced to rerooting a subtree of the DFS tree using the components property as follows (see Appendix~\ref{sec:OverviewApp} for details).

Baswana et al.~\cite{BaswanaCCK15} described the {\em components} property of a DFS tree as follows.
%and used it 
%to update a DFS tree after any update in the graph as follows. 

%\section{Handling an update}
%\label{sec:Overview1}
% DFS traversal allows the flexibility to traverse the unvisited neighbors of a vertex
%in any order (limiting the order results in ordered DFS tree as described earlier). 
%Our algorithm exploits this flexibility and the following property of DFS traversal
%to update the DFS tree efficiently.

\begin{figure}[!ht]
\centering
\includegraphics[width=0.35\linewidth]{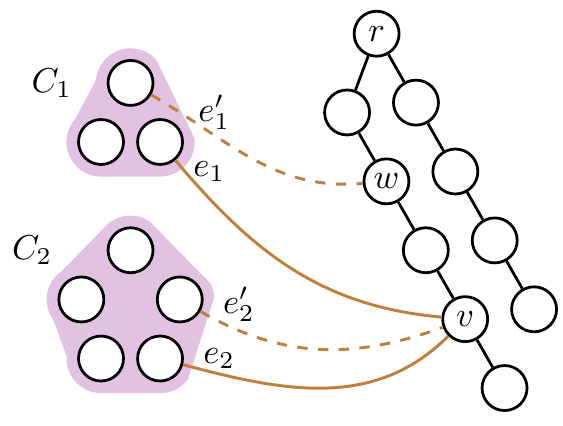}
\caption{Edges $e_1'$ and $e_2'$ can be ignored during the DFS traversal (reproduced from~\cite{BaswanaCCK16}).}
\label{fig:component-property}
\end{figure}

\begin{lemma}[Components Property~\cite{BaswanaCCK15}]
\label{lemma:L()}
Let $T^*$ be the partially built DFS tree and $v$ be the vertex currently being visited. 
Let $C_1,..,C_k$ be the connected components of the subgraph induced by the unvisited vertices.  
For any two edges $e_i$ and $e'_i$ from $C_i$ that are incident respectively on $v$ and some 
ancestor (not necessarily proper) $w$ of $v$ in $T^*$, it is sufficient to 
consider only $e_i$ during the DFS traversal, i.e., the edge $e'_i$ can be safely ignored. 
\end{lemma}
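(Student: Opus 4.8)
The plan is to argue by focusing on what DFS does when it is free to continue its traversal from the current vertex $v$ into the unvisited subgraph. The key structural fact I would invoke is that $C_i$ is a \emph{connected} component of the subgraph induced by the unvisited vertices, and that both $v$ and its ancestor $w$ lie on the current root-to-$v$ path in the partially built tree $T^*$. I want to show that whichever DFS tree we build by exploring $C_i$ through the edge $e_i$ (the edge incident on $v$), the ignored edge $e_i'$ (incident on the ancestor $w$) will automatically become a \emph{back edge} rather than a cross edge, and hence its omission from consideration violates no DFS invariant.

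First I would set up the invariant that characterizes a DFS tree: a rooted spanning tree of $G$ is a valid DFS tree if and only if every non-tree edge is a back edge, i.e., connects a vertex to one of its ancestors (this is stated earlier in the excerpt). So the goal reduces to showing that after the traversal completes, $e_i'$ connects two vertices in an ancestor–descendant relationship in $T^*$. Second, I would observe the crucial point about connectivity: since $C_i$ is connected and entirely unvisited, once DFS enters $C_i$ (say via $e_i$ from $v$), the entire component $C_i$ will be explored as a single contiguous subtree hanging below $v$, \emph{before} the traversal backtracks above $v$. This is the heart of the matter: no vertex of $C_i$ can be attached elsewhere in $T^*$ except as a descendant of $v$, precisely because there are no edges leaving $C_i$ to already-visited vertices other than through the boundary edges like $e_i,e_i'$, and connectivity forces the whole of $C_i$ into one subtree.

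Third, given that all of $C_i$ sits in the subtree $T^*(v)$ rooted at $v$, consider the edge $e_i'$. Its endpoint in $C_i$ is now a descendant of $v$, and $v$ is itself a descendant of $w$ (since $w$ is an ancestor of $v$). Therefore the endpoint of $e_i'$ in $C_i$ is a descendant of $w$, so $e_i'$ is a back edge. Consequently, whether or not we ever ``use'' $e_i'$ during the traversal, the resulting tree still satisfies the DFS invariant, which is exactly the statement that $e_i'$ can be safely ignored. I would phrase this as: ignoring $e_i'$ cannot cause DFS to create a cross edge, because the edge is guaranteed to be a back edge in the final tree.

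The main obstacle I anticipate is making rigorous the claim that \emph{all} of $C_i$ ends up in a single subtree below $v$ once we commit to entering through $e_i$. The subtlety is that DFS is being guided by an adversarial (or at least unspecified) ordering of edges, so I must argue this using only connectivity and the absence of cross edges in a DFS tree, not any particular edge order. The careful step is: once some vertex of $C_i$ is first visited, the standard DFS property guarantees that the traversal does not backtrack past that vertex's entry point until the entire connected unvisited region reachable from it — namely all of $C_i$ — has been visited and placed as its descendants. I would want to state this as a clean sub-lemma (or cite it as a standard property of DFS on the still-unvisited induced subgraph) and then apply it. Everything after that is immediate from the ancestor relationships, so the connectivity/containment argument is where the real content lies.
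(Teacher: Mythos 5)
Your proposal is correct and follows essentially the same route as the paper, which justifies the lemma in one line by observing that $e_i'$ will appear as a back edge in the resulting DFS tree: you simply make explicit the underlying reason, namely that the connected unvisited component $C_i$ is absorbed entirely into the subtree of $T^*$ below $v$ before DFS backtracks past $v$, so the $C_i$-endpoint of $e_i'$ becomes a descendant of $w$. The containment sub-lemma you flag as the "real content" is indeed the standard DFS property the paper implicitly relies on, so no gap remains.
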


%\begin{lemma}[Components Property~\cite{BaswanaCCK15}]
%\label{lemma:L()}
%Let $T^*$ be the partially built DFS tree and $v$ be the vertex currently being visited. 
%Let $C$ be any connected component in the subgraph induced by the unvisited vertices.  
%For any two edges $e$ and $e'$ from $C$ that are incident respectively on $v$ and some 
%ancestor (not necessarily proper) $w$ of $v$ in $T^*$, it is sufficient to 
%consider only $e$ during the DFS traversal, i.e., the edge $e'$ can be safely ignored. 
%\end{lemma}
%Ignoring $e'$ during the DFS traversal, as stated in the components property, is justified 
%because $e'$ will appear as a back edge in the resulting DFS tree (refer to Figure~\ref{fig:component-property}).
%The edge $e$ can be found by querying the data structure $\cal D$ (or ${\cal D}_0$ in \cite{BaswanaCCK15}).

Ignoring $e'_i$ during the DFS traversal, as stated in the components property, is justified 
because $e'_i$ will appear as a back edge in the resulting DFS tree (refer to Figure~\ref{fig:component-property}).
The edge $e_i$ can be found by querying the data structure $\cal D$ (or ${\cal D}_0$ in \cite{BaswanaCCK15}).
The DFS tree is then updated after any update in the graph by reducing it to {\em rerooting} disjoint subtrees of the DFS tree 
using the components property. Rerooting a subtree $T(v)$ at a new root $r'\in T(v)$ involves restructuring the tree $T(v)$
to be now rooted at $r'$ such that the new tree is also a DFS tree of the subgraph induced by $T(v)$.
%Consider the insertion of a cross edge $(x,y)$, where $T_1$ and $T_2$ are the subtrees of $T$ hanging from $LCA(x,y)$
%such that $x\in T_1$ and $y\in T_2$. The DFS tree can be restored by simply rerooting $T_1$ at $y$ 
%and hanging it from $x$ using the edge $(x,y)$ in the final DFS tree. Similarly, any update it the graph can be reduced to 
%rerooting disjoint subtrees of the DFS tree.
This reduction will henceforth be referred as the {\em reduction} algorithm and is described in the following 
section. 
%This reduction will henceforth be referred as the {\em reduction} algorithm and can be succinctly described as follow
%(see Appendix~\ref{sec:OverviewApp} for details).

%\begin{theorem}
%Given an undirected graph $G$ and its DFS tree $T$, any graph update can be reduced to 
%independently rerooting disjoint subtrees of $T$ by performing $O(1)$ sets of independent queries on the 
%data structure $\cal D$ and $O(1)$ sets of LCA queries on $T$, where each set has at most $n$ queries.
%\label{thm:convertApp}
%\end{theorem}

\section{Reduction Algorithm}
\label{sec:OverviewApp}
We now describe how updating a DFS tree after any kind of update in the graph is equivalent 
to a simple procedure, i.e., {\em rerooting} disjoint subtrees of the DFS tree. 
Note that similar reduction was also used by Baswana et al.~\cite{BaswanaCCK15} but we describe it here for the sake of completeness
as follows (see Figure~\ref{figure:overview-updates}).

\begin{figure*}[!ht]
\centering
\includegraphics[width=\linewidth]{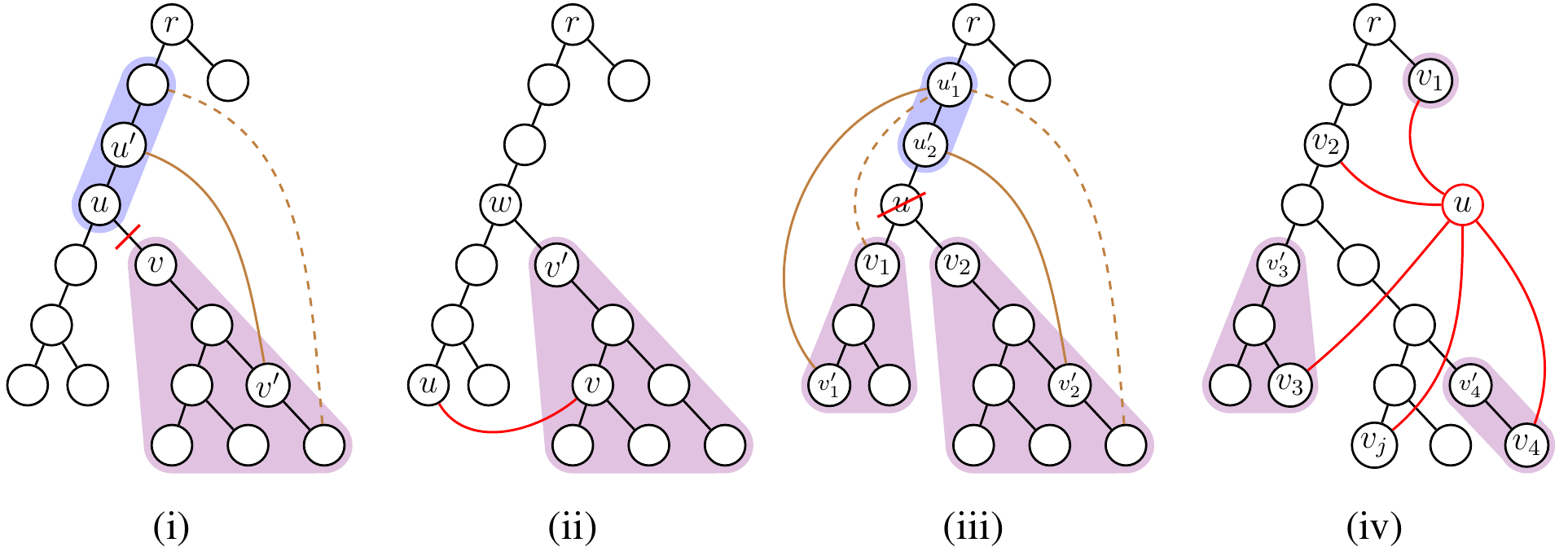}
\caption{Updating the DFS tree after a single update: (i) deletion of an edge, (ii) insertion of an edge, 
(iii) deletion of a vertex, and (iv) insertion of a vertex. 
The algorithm reroots the marked subtrees (marked in violet) and hangs it from the inserted edge (in case of insertion)
or the lowest edge (in case of deletion) on the marked path (marked in blue) from the marked subtree.
(reproduced from~\cite{BaswanaCCK15})
%In case of insertion the rerooted subtree is hanged from the inserted edge.
%or the lowest edge (in (i) and (iii)) from the marked subtree to the marked path (marked in blue). 
}
\label{figure:overview-updates}
\end{figure*}

\begin{enumerate}
\item \textbf{Deletion of an edge $(u,v)$:}\\
If $(u,v)$ is a back edge in $T$, simply delete it from the graph.
Otherwise, let $u = par(v)$ in $T$. 
The algorithm finds the lowest edge $(u',v')$ on the $path(u,r)$ from $T(v)$, where $v'\in T(v)$. 
The subtree $T(v)$ can then be rerooted to the new root $v'$ and hanged from $u'$ using $(u',v')$ 
to get the final tree $T^*$.
\item \textbf{Insertion of an edge $(u,v)$:}\\
In case $(u,v)$ is a back edge, simply insert it in the graph.
Otherwise, let $w$ be the LCA of $u$ and $v$ in $T$ and $v'$ be the child of $w$ such that $v\in T(v')$.
The subtree $T(v')$ can then be rerooted to the new root $v$ and hanged from $u$ using $(u,v)$ to get the 
final tree $T^*$.
\item \textbf{Deletion of a vertex $u$:}\\
Let $v_1,...,v_c$ be the children of $u$ in $T$. 
For each subtree $T(v_i)$, the algorithm finds the lowest edge $(u'_i,v'_i)$ on the $path(par(u),r)$ 
from $T(v_i)$, where $v'_i\in T(v_i)$. Each subtree $T(v_i)$ can then be rerooted to the new root $v'_i$ 
and hanged from $u'_i$ using $(u'_i,v'_i)$ to get the final tree $T^*$.
\item \textbf{Insertion of a vertex $u$:}\\
Let $v_1,...,v_c$ be the neighbors of $u$ in the graph. 
%Make $u$ a child of some $v_j$ in $T^*$.
Arbitrarily choose a neighbor $v_j$ and make $u$ the child of $v_j$ in $T^*$. 
For each $v_i$, such that $v_i\notin path(v_j,r)$, let $T(v'_i)$ be the subtree 
hanging from $path(v_j,r)$ such that $v_i\in T(v'_i)$. %In case $v'_{i_1} = v'_{i_2}$, ignore any one of the corresponding $v_{i_1}$ or $v_{i_2}$.
Each subtree $T(v'_i)$ can then be rerooted to the new root $v_i$ and 
hanged from $u$ using $(u,v_i)$ to get the final tree $T^*$.
\end{enumerate}

In case of a vertex update, multiple subtrees may be required to be rerooted by the algorithm. 
Let these subtrees be $T_1,...,T_c$. Notice that each of these subtrees can be rerooted
independent of each other, and hence in parallel. However, in order to perform the 
{\em reduction} algorithm efficiently in parallel, 
we require a structure to answer the following queries efficiently in parallel. (a) Finding LCA of two vertices in $T$. 
(b) Finding the highest edge from a subtree $T(v)$ to a path in $T$ (a query on data structure $\cal D$). 
In addition to these we also require several other types of queries to be efficiently answered in parallel setting as 
testing if an edge is back edge, finding vertices on a path, child subtree of a vertex containing a given vertex etc. 
However, these can easily be answered using LCA queries as described in Section~\ref{sec:ds}. 
Thus, we have the following theorem.

%\newtheorem*{lemR3}{Theorem~\ref{theorem:semi-streaming-apps}}
%\begin{lemR3}
%	Given a random graph $G(n,m)$, there exists a single pass semi-streaming algorithm for maintaining a data structure that answers
%	topological ordering and strong connectivity 
%	%, strong bi-connectivity and strong 2-edge connectivity 
%	queries in $G$ incrementally, 
%	requiring $O(n\log n)$ space.
%	% and total $O(m + n^2\log^2 n)$ time for a stream of edge insertions.
%\end{lemR3}

\begin{theorem}
%\newtheorem*{ThmConvert}{Theorem~\ref{thm:convertApp}}
%\begin{ThmConvert}
Given an undirected graph $G$ and its DFS tree $T$, any graph update can be reduced to 
independently rerooting disjoint subtrees of $T$ by performing $O(1)$ sets of independent queries on the 
data structure $\cal D$ and $O(1)$ sets of LCA queries on $T$, where each set has at most $n$ queries.
%\end{ThmConvert}
\label{thm:convertApp}
\end{theorem}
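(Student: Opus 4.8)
The plan is to prove the statement by a case analysis over the four update types enumerated in this section (edge deletion, edge insertion, vertex deletion, vertex insertion), verifying for each that (i) the subtrees to be rerooted are pairwise disjoint, (ii) the edges needed to hang them are obtained by $O(1)$ sets of \emph{independent} queries on ${\cal D}$, and (iii) every remaining auxiliary operation reduces to LCA queries on $T$. The correctness of each reduction (that ignoring the other incident edges is safe) is exactly the components property of Lemma~\ref{lemma:L()}, so the work of this theorem is purely the accounting of queries and the disjointness bookkeeping.

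First I would dispose of the two edge updates. For a deletion of $(u,v)$ (Case 1), a single LCA query decides whether $(u,v)$ is a back edge (it is iff $LCA(u,v)\in\{u,v\}$); if so the update is trivial, and otherwise exactly one subtree $T(v)$ is rerooted, whose hanging edge is produced by the single query $Query\big(T(v),path(u,r)\big)$ on ${\cal D}$. An insertion (Case 2) likewise needs only a back-edge test and, in the nontrivial case, the LCA $w$ of $u,v$ together with the child $v'$ of $w$ on $path(w,v)$; the unique subtree $T(v')$ is rerooted and hung directly from the inserted edge $(u,v)$, so no query on ${\cal D}$ is required. In both cases there is a single subtree, so disjointness is vacuous and only $O(1)$ queries are used.

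The substance lies in the vertex updates. For a deletion of $u$ (Case 3) the rerooted subtrees are the child subtrees $T(v_1),\dots,T(v_c)$ of $u$, which are pairwise disjoint by definition, and their hanging edges are produced by the single set $\{\,Query\big(T(v_i),path(par(u),r)\big)\,\}_{i\le c}$; since the descendant vertices $T(v_1),\dots,T(v_c)$ are disjoint, this set is independent in the sense defined above. For an insertion of $u$ (Case 4) I would first fix the attachment vertex $v_j$, discard every neighbor lying on $path(v_j,r)$ (whose edge to $u$ is automatically a back edge), and for each remaining neighbor $v_i$ locate the subtree $T(v'_i)$ hanging from $path(v_j,r)$ that contains it. The subtrees actually rerooted are the \emph{distinct} members of $\{T(v'_i)\}$, which are pairwise disjoint because distinct subtrees hanging from a common path do not overlap; when several neighbors fall in one such subtree it is rerooted only once, the remaining incident edges to $u$ becoming back edges after the reroot. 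Here the hanging edges are again the known inserted edges $(u,v_i)$, so ${\cal D}$ is not consulted.

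The hard part will be the bookkeeping for item (iii): the reductions implicitly invoke several primitives — testing whether an edge is a back edge, extracting the vertices of an ancestor-descendant path, and identifying the child subtree (or the subtree hanging from a path) that contains a given vertex. I would argue, deferring the explicit constructions to Section~\ref{sec:ds}, that each of these is answerable with $O(1)$ LCA queries on $T$ (for instance, the child of $w$ towards $v$ is a level-ancestor query, itself reducible to LCA on a suitably augmented tree), so that the total cost across all four cases is $O(1)$ sets of independent ${\cal D}$ queries and $O(1)$ sets of LCA queries, each set containing at most $n$ queries. The subtle point to get right is Case 4, where a careless reduction would attempt to reroot overlapping subtrees or issue dependent ${\cal D}$ queries; restricting attention to the distinct hanging subtrees and observing that their attachment edges are already known keeps both the disjointness and the independence invariants intact.
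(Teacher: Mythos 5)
Your proposal is correct and follows essentially the same route as the paper: the paper's argument for this theorem is precisely the four-case reduction of Section~\ref{sec:OverviewApp} (edge deletion/insertion, vertex deletion/insertion), the observation that the subtrees to be rerooted are pairwise disjoint so the $\cal D$-queries form $O(1)$ independent sets, and the deferral of the auxiliary primitives (back-edge test, child-subtree containing a vertex, subtrees hanging from a path) to LCA-based implementations in Section~\ref{sec:ds}. Your extra care in Case~4 about rerooting only the distinct hanging subtrees and letting the remaining edges to $u$ become back edges is a correct and slightly more explicit rendering of what the paper leaves implicit.
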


\begin{remark}
The implementation of reduction algorithm is simpler in distributed and semi-streaming environments, 
where any operation on the DFS tree $T$ can be performed locally without any distributed computation or 
passes over the input graph respectively. Hence, for these environments the reduction algorithm requires 
only $O(1)$ sets of independent queries on the data structure $\cal D$.
\end{remark}

% NOTATAIONS:::
% \tau_c heaviest tree of the component c
% p_c  path of the component c
% T(v_H) Smallest subtree with size > n_0/2^i

\section{Rerooting a DFS tree}
\label{sec:pdfs}
We now describe the algorithm to reroot a subtree  $T(r_0)$ of the DFS tree $T$, 
from its original root $r_0$ to the new root $r^*$. 
%Let $n_0$ denote the number of vertices in the subgraph induced by $T(r_0)$. 
Also, let the data structure $\cal D$ be built on $T$ (see Section~\ref{sec:prelim}).
%Given the current DFS tree $T$ of the graph and data structure $\cal D$ built on it, 
%the following algorithm requires $n_0$ processors 
%to perform this rerooting in $O(\log^3 n)$ time on EREW PRAM. 
Also, we maintain the following invariant: at any moment of the algorithm, 
every component $c$ of the unvisited graph can be of the following two types: 

\begin{itemize}
\item[C1:] Consists of a single subtree $\tau_c$ of the DFS tree $T$. 
%		   We also have a vertex $r_c\in \tau_c$ from which the DFS tree of the component $c$ 
%		   would be rooted in the final DFS tree $T^*$.
\item[C2:] Consists of a single {\em ancestor-descendant} path $p_c$ and a set 
			${\cal T}_c$ of subtrees of the DFS tree $T$ having at least one edge to $p_c$. 
%			We also have a vertex $r_c\in {\cal T}_c\cup p_c$ 
%			from which the DFS tree of the component $c$ would be rooted in the final DFS tree $T^*$. 
			Note that for any $\tau_1,\tau_2\in{\cal T}_c$, there is no edge between $\tau_1$ and $\tau_2$ since $T$ is a DFS tree.
\end{itemize}
Moreover, for every component $c$ we also have a vertex $r_c\in c$ from which the DFS tree of the component $c$ 
would be rooted in the final DFS tree $T^*$.

\newcommand{\Th}{{\mathbb{T}}}

The algorithm is divided into $\log n$ {\em phases}, where each phase is further divided 
into $\log n$ {\em stages}.
%For notational convenience, we refer to the phases as $\PP_0,\PP_1.\PP_2,...$ and the stages during a phase
%as $\St_0,\St_1,\St_2,...$ . Furthermore, at the beginning of every phase and every stage, the structure of 
%every component $c$ of the unvisited graph can be of the following types:
At the end of phase $\PP_i$, every subtree of any component $c$ ($\tau_c$ or subtrees in ${\cal T}_c$) 
has at most $n/2^i$ vertices.
During phase $\PP_i$, every component has at least one {\em heavy} subtree (having $> n/2^{i}$ vertices). 
If no such tree exists, we move the component to the next phase. We denote the set of these heavy subtrees by $\Th_c$.
% During phase $\PP_i$, every component has exactly one {\em heavy} subtree 
% (having $\geq n/2^{i}$ vertices). If no such tree exists we move the component to the next phase.
For notational convenience, we refer to the heaviest subtree of every component $c$ as 
$\tau_c$, even for components of type $C2$.
Hence, for any component of type $C1$ or $C2$, we have $\tau_c\in \Th_c$. 
Clearly the algorithm ends after $\log n$ phases as every component of the unvisited graph would be empty.  

At the end of stage $\St_j$ of a phase, the length of $p_c$ in each component $c$ is at most $n/2^j$. 
If $|p_c|\leq n/2^{j}$, we move the component $c$ to the next stage.
Further, for any component $c$ of type $C1$, the value of $|p_c|$ is zero, 
so we move such components to the last stage of the phase, i.e., $\St_{\log n}$. 
Clearly at the end of $\log n$ stages, each component would be of type $C1$.

In the beginning of the algorithm, we have the component induced by $T(r_0)$ of type $C1$ where $r_c=r^*$.
%the algorithm begins at the end of final stage $\St_{\log n}$ of the first phase $\PP_1$, 
%where the component induced by $T(r_0)$ is of type $C1$ and $r_c=r^*$. 
Note that during each stage, different connected components of the unvisited graph can be 
processed independent of each other in parallel.
%\textbf{ 
%The number of processors assigned to each component is equal to the number of vertices in the component. 
%}% Thus, if each stage of the algorithm can be completed in $t$
%% time, the total time taken by the algorithm is $O(t \log^2 n)$.
 
\begin{figure*}[!ht]
\centering
\includegraphics[width=.85\linewidth]{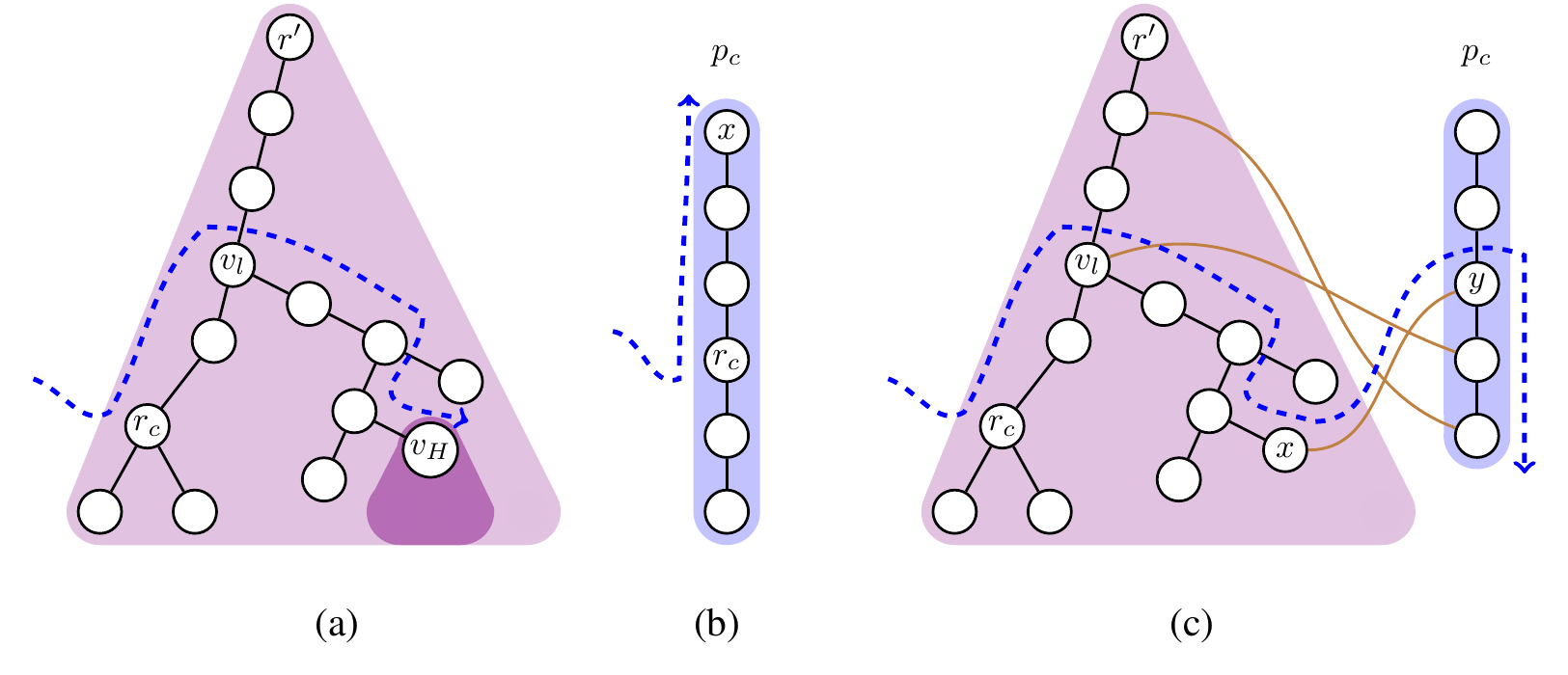}
\caption{The three easier types of traversals shown in blue dotted lines,
(a) Disintegrating traversal, (b) Path Halving and (c) Disconnecting traversal.}
%\caption{The four easier types of traversals shown in blue dotted lines,
%	(a) Disintegrating traversal, (b) Path Halving, (c) Disconnecting traversal, (d) Root Traversal}
\label{fig:pdfs1}
\end{figure*}

\subsection*{Algorithm}
We now describe how a component $c$ in phase $\PP_i$ and stage $\St_j$ is traversed by our algorithm.
The aim is to build a partial DFS tree for the component $c$ rooted at $r_c$, 
that can be attached to the partially built DFS tree $T^*$ of the updated graph.
Note that this has to be performed in such a manner that every component of the unvisited part of $c$
is of type $C1$ or $C2$ only.

Now, in order to move to the next phase, we need to ensure that for every component $c'$ of the unvisited part of $c$, 
$|\tau_{c'}|\leq n/2^i$.  As described above, after $\log n$ stages every component $c'$ is of type $C1$.
Thus, we perform a {\em disintegrating traversal} of $\tau_c$ which ensures
that every component of the unvisited part of $c$ can be moved to the next phase.

During $\St_j$, in order to move to the next stage, we need to ensure that for every component $c'$ of the unvisited part of $c$,
either $|p_{c'}|\leq n/2^j$ (moving it to next stage) or  $|\tau_{c'}|\leq n/2^i$ (moving it to next phase). 
The component is processed based on the location of $r_c$ in $c$ as follows.
If $r_c\in p_c$, we perform {\em path halving} which ensures the components move to the next stage.
If $r_c\in \tau\notin \Th_c$, we perform a {\em disconnecting traversal} of $\tau$ followed by {\em path halving} of $p_c$
such that the unvisited components of $\tau$ are no longer connected to residual part of $p_c$,
 moving them to the next phase.
The remaining components of $c$ moves to the next stage due to path halving.
%If $r_c\in T(v_H)$, we can simply perform a {\em root traversal}. 
%Finally, the most difficult case is when $r_c\in \tau_c\backslash T(v_H)$, where we perform 

We shall refer to disintegrating traversal, path halving and disconnecting traversal as
the {\em simpler} traversals. The difficult case is when $r_c\in \tau\in \Th_c$. 
Here, some trivial cases can be directly processed by the {\em simpler} traversals mentioned above. 
For the remaining cases we perform {\em heavy subtree traversal} of $\tau$ which shall ensure that 
the unvisited part of $c$ reduces to those requiring {\em simpler} traversals. 
Refer to Procedure~\ref{alg:parallel_dfs} in Appendix~\ref{appn:pseudocode} 
for the pseudo-code of the main algorithm.
%Refer to Procedure~\ref{alg:parallel_dfs} in Appendix~\ref{appn:pseudocode} for the pseudo-code.

We now describe the different types of traversals in detail.
For any component $c$, we refer to the smallest subtree of $\tau\in \Th_c$ that has more than 
$n/2^i$ vertices as $T(v_H)$. Since $n/2^{i-1}\geq |\tau|> n/2^i$, $v_H$ is unique.
Also, let $r'=root(\tau)$ (if $r_c\in \tau$) and $v_l=LCA(r_c,v_H)$.

\subsection{Disintegrating Traversal}
\label{sec:c1}
Consider a component $c$ of type $C1$ with new root $r_c\in \tau_c$
in phase $\PP_i$  ($n/2^i<|\tau_c|\leq n/2^{i-1}$). We first find the vertex $v_H$.
%\textbf{We first find the vertex $v_H$ in parallel by comparing the size of each subtree using 
%the data structure described in Theorem~\ref{thm:DSx}. 
%}
We then traverse along the tree path $path(r_c,v_H)$, adding it to $T^*$ 
\big(see Figure~\ref{fig:pdfs1} (a)\big). %Let $v_l$ be the LCA of $r_c$ and $v_H$ in $T$ and $r'=root(\tau_c)$. 
Now, the unvisited part of $c$ consists of $path(par(v_l),r')$ (say $p$) 
and the subtrees hanging from $path(r_c,r')$ and $path(v_l,v_H)$.
Notice that $p$ is an ancestor-descendant path of $T$ and each subtree has at most $n/2^i$ vertices. 
Each subtree not having an edge to $p$ corresponds to a separate component of type $C1$. 
The path $p$ and the remaining subtrees (having an edge to $p$) form a component of type $C2$. 
For each component $c^*$, we also need to find the new root $r_{c^*}$ for the updated DFS tree of the component.
Using the components property, we know $r_{c^*}$ has the lowest edge from $c^*$ on the path $p^*$,
where $p^*$ is the newly attached path to $T^*$ described above. 
Both these queries (finding an edge to $p$ and the lowest edge on $p^*$) can be answered by our data structure $\cal D$
(see Section~\ref{sec:prelim}). Thus, every component $c^*$ can be identified and moved to next phase. 
Refer to Procedure~\ref{alg:dfs_c1} in Appendix~\ref{appn:pseudocode} for the pseudo code.

\begin{remark}
If $r_c = r'$, this traversal can also be performed on a subtree 
from a component $c$ of type $C2$ achieving similar result. 
This is possible because no new path $p$ would be formed and we 
still get components of type $C1$ and $C2$ (being connected to a single path $p_c$).
\end{remark}

\subsection{Path Halving}
Consider a component of type $C2$ with $r_c\in p_c=path(x,y)$.
We first find the farther end of $p_c$, say $x$, where $|path(r_c,x)|\geq |path(r_c,y)|$.
We then traverse from $r_c$ to $x$ adding $path(r_c,x)$ to the tree $T^*$ \big(see Figure~\ref{fig:pdfs1} (b)\big).
The component $c'$ of type $C2$ thus formed will have $p_{c'}$ of length at most half of $p_c$. 
Now, the subtrees in $c$ having an edge to $p_{c'}$ would be a part of $c'$.
The remaining subtrees would form individual components of type $C1$. 
Again, the new root of each component can be found using $\cal D$
by querying for the lowest edge on the $path(r_c,x)$ added to $T^*$.
Refer to Procedure ~\ref{alg:dfs_c2_3} in Appendix~\ref{appn:pseudocode} for the pseudo code.

\subsection{Disconnecting Traversal}
\label{sec:discon}
Consider a component of type $C2$ with $r_c\in \tau$, where $\tau\notin \Th_c$. 
We traverse $\tau$ from $r_c$ to reach $p_c$, which is then followed by path halving of $p_c$.
The goal is to ensure that the unvisited part of $\tau$ is not connected to the 
unvisited part of $p_c$ (say $p'$) after path halving, moving it to the next phase.
The remaining subtrees of $c$ with $p'$ will move to the next stage as a result of path halving of $p_c$.

%We now describe the disconnecting traversal of $\tau\notin \Th_c$ from $r_c$ to reach $p_c$, 
%which is then followed by path halving moving each component to the next stage or the next phase.
%Further, we ensure that the unvisited part of $\tau$ is not connected to the 
%unvisited part of $p_c$ (say $p'$) after path halving.
Now, if at least one edge from $\tau$ is present on the upper half of $p_c$,
we find the highest edge from $\tau$ to $p_c$ \big(see Figure~\ref{fig:pdfs1} (c)\big).
Otherwise, we find the lowest edge from $\tau$ to $p_c$.
Let it be $(x,y)$, where $y\in p_c$ and $x\in \tau$.
This ensures that on entering $p_c$ through $y$, path halving would ensure that all the edges from $\tau$ to $p_c$
are incident on the traversed part of $p_c$ (say $p$). 
%Now, if at least one edge from $\tau$ is present on the lower half of $p_c$,
%we find the lowest edge from $\tau$ to $p_c$.
%Otherwise, we find the highest edge from $\tau$ to $p_c$ \big(see Figure~\ref{fig:pdfs1} (c)\big).
%Let it be $(x,y)$, where $y\in p_c$ and $x\in \tau$.
%This ensures that on entering $p_c$ through $y$, path halving would ensure that all the edges from $\tau$ to $p_c$
%are incident on the traversed part of $p_c$ (say $p$). 

We perform the traversal from $r_c$ to $x$ similar to the {\em disintegrating traversal} along 
$path(r_c,x)$, attaching it to $T^*$. 
Since none of the components of unvisited part of $\tau$ are connected to $p'$, 
all the components formed would be of type $C1$ or $C2$ as described in Section~\ref{sec:c1}. 
However, while finding the new root of each resulting component $c'$, 
we also need to consider the lowest edge from the component on $p$.
Further, since $\tau\notin \Th_c$, size of each subtree in the resulting components is at most $n/2^i$.
Thus, the resultant components of $\tau$ are moved to the next phase
(see Procedure~\ref{alg:dfs_c2_2} in Appendix~\ref{appn:pseudocode} for pseudo code).

\begin{remark} 
If $r_c\in T(v_H)$, this traversal can also be performed on a $\tau\in \Th_c$ getting a similar result.
This is because each subtree in resultant components of $\tau$ will have size at most $n/2^i$
moving it to the next phase. 
However, if $r_c\notin T(v_H)$ the resultant component $c'$ of type $C2$ formed can
have a heavy subtree and a path $p_{c'}$ of arbitrary length.
This is not permitted as it will move the component to some earlier stage in the same phase.
\end{remark}

\setcounter{scenario}{0}

\begin{figure*}[!ht]
	\centering
	\includegraphics[width=\linewidth]{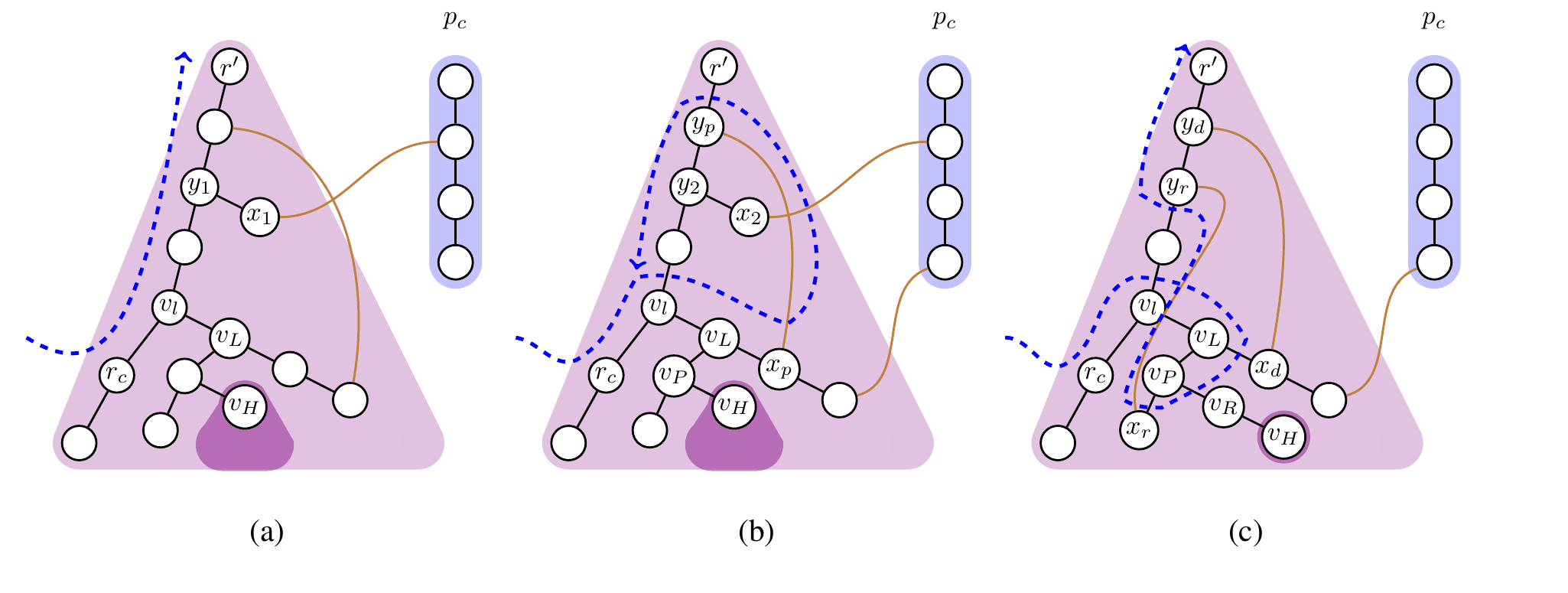}
	\caption{The three scenarios for Heavy Subtree Traversal (blue dotted lines) showing (a) $l$ traversal, 
		(b) $p$ traversal and (c) $r$ traversal.
		%		(a) The $\mathbb{L}$ traversal, (b) The $\mathbb{P}$ traversal and (c) The $\mathbb{R}$ traversal.
	}
	\label{fig:pdfs}
\end{figure*}

\subsection{Heavy Subtree Traversal}
\label{sec:heavyP}
Consider a component $c$ of type $C2$ with $r_c\in\tau$, where $\tau \in \Th_c$.
As described earlier, if $r_c=root(\tau)$ or $r_c\in T(v_H)$, 
the heavy subtree $\tau$ can be processed using 
disintegrating or disconnecting traversals respectively. 
Otherwise, we traverse it using one of three scenarios. Our algorithm checks each scenario 
in turn for its applicability to $\tau$, 
eventually choosing a scenario to perform an $l,p$ or $r$ traversal (see Figure~\ref{fig:pdfs}). 
This traversal ensures that it is followed by a {\em simpler} traversal described earlier,
so that each component will either move to the next phase or the next stage.
We shall also prove that these scenarios are indeed exhaustive, i.e., for any $\tau$, 
one of the scenarios is indeed applicable ensuring that each component moves to the next stage/phase. 
The following lemma describes the conditions when a scenario is applicable. 
%(see Appendix \ref{sec:proofs} for proof).

%\textbf{Each of the traversals described earlier requires $O(1)$ steps 
%to move to the next stage/phase. In order to achieve the same for heavy subtrees 
%in $\TT_c$, we describe three scenarios for traversing a heavy subtree $\tau$ (see Figure~\ref{fig:pdfs}).
%Our algorithm checks each scenario in turn for its applicability to $\tau$ in
%$O(1)$ steps, eventually choosing a scenario that will lead to a {\em disconnecting traversal} in $O(1)$ steps.
% We shall prove that these scenarios are indeed exhaustive, i.e., for any $\tau$, the traversal 
%can be done using one of these scenarios in $O(1)$ steps to reach the next stage/phase.
%}

\newcommand{\Aa}{${\cal A}_1$}
\newcommand{\Ab}{${\cal A}_2$}
\newcommand{\Ac}{${\cal A}_3$}

\begin{lemma}[Applicability Lemma]
After a traversal of path $p^*$ in a subtree $\tau\in \Th_c$, 
every component of unvisited part of $c$ can be moved to the next phase/stage 
using a {\em simpler} traversal if
\begin{enumerate}%[leftmargin=.5cm]
\item[\Aa:] Traversal of $p^*$ produces components of type $C1$ or $C2$ only, 
\item[\Ab:] The subtree $T(v_H)$ is connected to $p_c$ (if in component of type $C2$), 
\item[\Ac:] The lowest edge on $p^*$ from the component containing $p_c$ is not a 
back edge from the subtree containing $T(v_H)$ with its end point outside $T(v_H)$.
\end{enumerate}
%(i) Traversal of $p^*$ produces components of type $C1$ or $C2$ only, 
%(ii) The subtree $T(v_H)$ is connected to $p_c$ (if in component of type $C2$), 
%(iii) The lowest edge on $p^*$ from the component containing $p_c$ is not a 
%back edge from the subtree containing $T(v_H)$ with its end point outside $T(v_H)$.
\label{lem:non-applicable}
\end{lemma}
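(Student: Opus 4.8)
The plan is to establish the sufficiency of ${\cal A}_1$–${\cal A}_3$ by exhibiting, for each component carved out of $c$ by the traversal of $p^*$, a \emph{simpler} traversal that advances it to the next phase or stage. Throughout I would locate the new root of every resulting component via the components property (Lemma~\ref{lemma:L()}): it is the endpoint, lying inside that component, of the lowest edge incident on the freshly attached path $p^*$, since this is exactly how ${\cal D}$ assigns roots in the disintegrating and path-halving steps. Condition ${\cal A}_1$ is invoked first and globally: it guarantees that every resulting component is of type $C1$ or $C2$, so that each is in admissible form and one of the simpler traversals can even be contemplated.

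Next I would split the resulting components into the \emph{$\tau$-fragments} (those lying entirely inside $\tau$) and the single component $c_p$ that inherits the surviving portion of $p_c$. Every $\tau$-fragment has all of its subtrees contained in $\tau$, hence of size at most $|\tau|\le n/2^{i-1}$; such a $C1$ fragment is advanced to the next phase by a disintegrating traversal (Section~\ref{sec:c1}), and if its single subtree already has at most $n/2^i$ vertices it is eligible for the next phase outright. If $T(v_H)$ happens to lie in such a fragment, it too is disintegrated (it is heavy but of size at most $n/2^{i-1}$); otherwise ${\cal A}_2$ places $T(v_H)$ inside $c_p$, connected to $p_c$.

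The heart of the argument is the component $c_p$, which is of type $C2$. Writing $\tau'$ for the subtree of $c_p$ containing $T(v_H)$ and $r_{c_p}$ for its new root, I would case-split on the location of $r_{c_p}$: if $r_{c_p}$ lies on the residual path, \emph{path halving} advances $c_p$ to the next stage; if $r_{c_p}\in T(v_H)$, then since $\tau'$ is heavy and $r_{c_p}$ lies in its heavy core, the remark of Section~\ref{sec:discon} lets a disconnecting traversal advance $c_p$ (its $\tau'$-fragments to the next phase, the residue to the next stage); and if $r_{c_p}$ lies in a non-heavy subtree of $c_p$, a disconnecting traversal again applies. The only remaining possibility, $r_{c_p}\in\tau'$ but strictly outside $T(v_H)$, is precisely the configuration no simpler traversal handles, and it is exactly the one forbidden by ${\cal A}_3$, since it would force the lowest edge on $p^*$ from $c_p$ to be a back edge emanating from the subtree containing $T(v_H)$ with its endpoint outside $T(v_H)$.

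The main obstacle I anticipate is establishing exhaustiveness of this case analysis: ruling out that $r_{c_p}$ lands in a heavy subtree \emph{other} than $\tau'$ outside that subtree's core, a configuration ${\cal A}_3$ does not literally mention. I would attack this using the fact that $T$ is a DFS tree, so distinct subtrees in ${\cal T}_c$ share no edge; hence every edge incident on the part of $p^*$ lying inside $\tau$ emanates from a fragment of $\tau$, leaving $\tau'$ as the only heavy fragment there. The delicate point is the portion of $p^*$ (if any) that descends onto $p_c$, where a different subtree could in principle supply the lowest edge; I would argue that the scenario defining $p^*$ is chosen so that the edge selecting $r_{c_p}$ off the path comes from within $\tau$, leaving $\tau'$ as the sole candidate. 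Pinning down this routing and confirming that ${\cal A}_1$–${\cal A}_3$ jointly leave no uncovered configuration is where the real work lies; the remainder is a direct appeal to the already-established simpler traversals.
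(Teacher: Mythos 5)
Your strategy is essentially the paper's: ${\cal A}_1$ guarantees the resulting components are of type $C1$/$C2$, new roots are assigned via the lowest edge on $p^*$ (components property), only the fragment containing $T(v_H)$ can be heavy (any subtree of $\tau$ disjoint from $T(v_H)$ has fewer than $n/2^{i-1}-n/2^{i}=n/2^{i}$ vertices), ${\cal A}_2$ confines that fragment to the component with $p_c$, and a case split on the location of that component's new root is closed by ${\cal A}_3$. Your resolution of the exhaustiveness worry in the last paragraph --- distinct subtrees of ${\cal T}_c$ share no edge, so the lowest edge onto $p^*\subseteq\tau$ must come from $p_c$ or a fragment of $\tau$, leaving $\tau'$ as the only heavy candidate --- is exactly the observation the paper relies on implicitly.

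There is, however, one concrete gap in your final case analysis. You assert that ``$r_{c_p}\in\tau'$ but strictly outside $T(v_H)$'' is precisely the configuration that no simpler traversal handles and exactly the one forbidden by ${\cal A}_3$. Neither half is accurate: ${\cal A}_3$ excludes that configuration only when the lowest edge $e'$ is a \emph{back} edge, whereas if $e'$ is the tree edge $\big(par(v_h),v_h\big)$ into the root $v_h$ of the heavy fragment $T(v_h)=\tau'$, the new root is $v_h\in\tau'$ and (generically) outside $T(v_H)$, yet ${\cal A}_1$--${\cal A}_3$ all hold. The lemma then obliges you to exhibit a simpler traversal for this configuration, and your analysis supplies none. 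The paper closes precisely this case with a \emph{disintegrating} traversal: by the remark in Section~\ref{sec:c1}, when the new root equals $root(\tau')$ the disintegrating traversal may be applied to a subtree of a $C2$ component, sending every resulting piece of $\tau'$ to the next phase while leaving the residual $p_c$-component in admissible form. Adding the branch $r_{c_p}=root(\tau')$ (handled by disintegration, whether $e'$ is the tree edge or a back edge incident to $root(\tau')$ itself) and restricting the genuinely forbidden configuration to a new root in $\tau'$ outside both $T(v_H)$ and $\{root(\tau')\}$ reached via a back edge brings your argument in line with the paper's proof.
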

\begin{proof}
	Consider any traversal satisfying the above criteria, which forms components of type $C1$ and $C2$ only.
	For each such component $c'$, we find the lowest edge $e'$ from $c'$ to the traversed path, giving the new root $r_{c'}$.
	Every component which does not contain $p_c$ or $T(v_H)$ can be directly 
	moved to the next phase with root $r_{c'}$, because the remaining subtrees of $\tau$ 
	(not containing $T(v_H)$) cannot be heavy. 
	In case the component containing $T(v_H)$ is of type $C1$ it can be moved to the last stage of the phase. 
	In case the component $c'$ containing $p_c$ does not contain $T(v_H)$, we have $r_{c'}\in p_c$ 
	or $r_{c'}\in\tau'$ (a non-heavy subtree of $\tau$), moving $c'$ to the next stage after performing 
	{\em path halving} or {\em disconnecting} traversal of $\tau'$ respectively.
	Due to the second condition, this only leaves the component $c'$ of type $C2$ having both $p_c$ and 
	a subtree $T(v_h)\in \TT_{c'}$ which contains $T(v_H)$. 
	The third condition prevents $e'$ from being a back edge with $r_{c'}\in T(v_h)$ and $r_{c'}\notin T(v_H)$.
	The remaining cases can be processed by {\em path halving} (if $r_{c'}\in p_c$), 
	{\em disconnecting} traversal ($r_{c'}\notin T(v_h)$ or $r_{c'}\in T(v_H)$) or {\em disintegrating} traversal 
	($r_{c'}=v_h$, includes $e'$ being a tree edge) respectively.
\end{proof}

%\noindent
%\textbf{Note: }
\begin{remark}
Applicability lemma is employed when $p^*$ does not traverse through $T(v_H)$, 
in which case the unvisited component trivially moves to the next stage/phase.
In such a case, the traversal preceding $p^*$ was clearly applicable.
Condition \Ab~ ensures that the heavy subtree containing $T(v_H)$
does not form a component $c'$ with arbitrary length of path $p_{c'}$, as this can move it to 
some previous stage which is not allowed. Because of the same reason {\em disconnecting} traversal is not
used on such heavy subtrees. 
\end{remark}

We now briefly describe the three scenarios, namely, $l,p$ and $r$ traversals and define 
a few notations related to them (shown in Figure~\ref{fig:pdfs}).
The $l,p$ and $r$ traversals follow the path shown in figure (using blue dotted lines) 
which shall henceforth be referred as $p^*_L$, $p^*_P$ and $p^*_R$ respectively.
Both $p$ and $r$ traversals use a back edge during the traversal, 
denoted by $(x_p,y_p)$ and $(x_r,y_r)$ respectively. 
Further, we refer to the subtrees containing $v_H$ that hangs from $p^*_L$, $p^*_P$ and $p^*_R$ as 
$T(v_L)$, $T(v_P)$ and $T(v_R)$ respectively. 
We shall refer to the subtree hanging from the traversed path ($p^*_L$, $p^*_P$ or $p^*_R$) 
with an edge to $p_c$ as {\em eligible} subtrees.
In each scenario we ensure \Aa~ and \Ab~ %the first two conditions of {\em applicability} lemma 
by construction, 
implying that the scenario will not be applicable only if the third condition is violated.
Thus, we only need to find the lowest edge on traversed path from the {\em eligible} subtrees, 
to determine the applicability of a scenario.
Also, the edges $(x_p,y_p)$ and $(x_r,y_r)$ are chosen in such a way that if $l$ and $p$ traversals are not applicable, 
then $r$ traversal always satisfies {\em applicability} lemma, with the lowest edge from component containing $p_c$ 
being $(x_d,y_d)$, where $x_d\in \tau_d \neq T(v_R)$.

\subsubsection*{Scenario 1: {\large $l$} traversal}
Consider the traversal shown in Figure~\ref{fig:pdfs}~(a), where $p^*_L=path(r_c,r')$. 
%Consider the traversal of $p^*_L=path(r_c,r')$ shown in Figure~\ref{fig:pdfs} (a).
Since, this traversal does not create a new non-traversed path, 
the first two conditions of {\em applicability} lemma are satisfied.
We find the lowest edge on $p^*_L$ (highest edge on $path(r_c,r')$) 
from $p_c$ and the {\em eligible} subtrees, say $(x_1,y_1)$, where $y_1\in p^*_L$. 
In case this edge satisfies the third condition of {\em applicability} lemma,
we perform the traversal otherwise move to the next scenario.

\begin{remark}
This scenario is not applicable only if $(x_1,y_1)$ is a back edge with $x_1\in T(v_L)$ 
and $x_1\notin T(v_H)$.
\end{remark}

\subsubsection*{Scenario 2: {\large $p$} traversal}
Consider the traversal shown in Figure~\ref{fig:pdfs}~(b), where 
$p^*_P=path(r_c,x_p)\cup (x_p,y_p) \cup path(y_p,par(v_l))$.
%We first choose $(x_p,y_p)$ and $(x_d,y_d)$ as follows.
To perform this traversal, we choose $(x_p,y_p)$ along with $(x_d,y_d)$ such that
either $p$ traversal is applicable, or $r$ traversal is applicable using $(x_d,y_d)$.
We now describe how such a choice of $(x_p,y_p)$  and $(x_d,y_d)$ can be made.
%This can be achieved by satisfying the following properties,
%The presence of the back edge $(x_1,y_1)$ can be used to show that $(x_p,y_p)$ and $(x_d,y_d)$ 
%can be chosen satisfying the following properties (see Appendix~\ref{sec:choose_ep_ed} for details) .
%This can be achieved by satisfying the following properties,
Let $\tau_d$ and $\tau_p$ (if any), be the subtrees hanging from $path(v_L,v_H)$ 
containing $x_d$ and $x_p$ respectively.

\subsubsection*{Choice of {\large $(x_p,y_p)$}  and {\large $(x_d,y_d)$} }
\label{sec:choose_ep_ed}
We find the highest edge on $path(r_c,r')$ from the {\em eligible} subtrees hanging from $p^*_L$
\big(except $T(v_L)$\big) and the {\em eligible} subtrees hanging from $path(v_L,v_H)$.
%considering only the subtrees with an edge to $p_c$.
This edge is stored as $(x_d,y_d)$, where $y_d\in path(r_c,r')$, 
is the edge to be used in Scenario 3.
The corresponding subtree to which $x_d$ belongs is $\tau_d$.
Next, we find the edge $(x_p,y_p)$, with $x_p\in T(v_L)$ and $y_p\in path(y_d,r')$, 
which has the lowest $LCA(x_p,v_H)$ on $path(v_L,v_H)$. 
%However, if $LCA(x_p,v_H)=LCA(x_d,v_H)$ we simply choose $(x_d,y_d)$ as $(x_p,y_p)$.
%Note that since $x_1\in T(v_L)$, $(x_d,y_d)$ will exist even if $x_d\notin T(v_L)$.
However, if $(x_d,y_d)$ does not exist, $(x_p,y_p)$ is found by considering whole of $path(r_c,r')$
instead of $path(y_d,r')$. %, which have to exist (again due to $(x_1,y_1)$).
The existence of back edge $(x_1,y_1)$ implies the following properties of the computed edges 
$(x_p,y_p)$ and $(x_d,y_d)$. 

\begin{lemma}
	The edge $(x_p,y_p)$, which is a back edge, always exists and when used for $p$ traversal satisfies
	\Aa~ and \Ab. %the first two conditions of {\em applicability} lemma.
	\label{lem:pEdgeExist}
\end{lemma}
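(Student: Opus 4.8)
The plan is to prove Lemma~\ref{lem:pEdgeExist} by analyzing the construction of $(x_p,y_p)$ and leveraging the assumption that Scenario~1 ($l$ traversal) was not applicable. Recall from the Remark following Scenario~1 that $l$ traversal fails precisely when the back edge $(x_1,y_1)$ exists with $x_1 \in T(v_L)$ and $x_1 \notin T(v_H)$; this is exactly the hypothesis under which we reach Scenario~2. First I would establish \emph{existence}: since $(x_1,y_1)$ itself is a back edge from $T(v_L)$ incident on $path(r_c,r')$, it is a candidate for the edge we seek. I would verify that $x_1 \in T(v_L)$ means $(x_1,y_1)$ qualifies as an edge with $x_p \in T(v_L)$ and $y_p \in path(r_c,r')$, and then argue that either $(x_1,y_1)$ itself or some other qualifying edge survives the constraint $y_p \in path(y_d,r')$. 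The delicate point is that the search for $(x_p,y_p)$ is restricted to $path(y_d,r')$ rather than all of $path(r_c,r')$, so I must confirm that a valid back edge still exists in this truncated range (and handle the case where $(x_d,y_d)$ does not exist, in which case the whole of $path(r_c,r')$ is used and existence is immediate from $(x_1,y_1)$).

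Next I would show the edge is necessarily a \textbf{back edge}. Since $T$ is a DFS tree and $(x_p,y_p)$ connects a vertex $x_p \in T(v_L)$ to a vertex $y_p$ lying on the ancestor-descendant path $path(r_c,r')$, and since $v_L$ is itself a descendant hanging from this path, every non-tree edge from $T(v_L)$ to an ancestor path must be a back edge (there are no cross edges in a DFS tree). I would make this precise by noting that $y_p$ is an ancestor of $x_p$ in $T$, which is forced by the ancestor-descendant structure of $p^*_L$ together with the fact that $T(v_L)$ hangs from $path(r_c,r')$.

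The core of the argument is verifying the two applicability conditions \Aa~ and \Ab~ for the resulting $p$ traversal along $p^*_P = path(r_c,x_p)\cup(x_p,y_p)\cup path(y_p,par(v_l))$. For \Aa, I would argue that traversing this path produces only components of type $C1$ or $C2$: the traversal follows tree edges except for the single back edge $(x_p,y_p)$, and by the components property (Lemma~\ref{lemma:L()}) each subtree hanging off $p^*_P$ either forms an isolated $C1$ component or attaches to the single residual ancestor-descendant path, yielding a $C2$ component --- critically, no configuration with two disjoint paths or cross-connections can arise because $T$ is a DFS tree. For \Ab, I must show $T(v_H)$ remains connected to $p_c$ in its component; here the choice of $(x_p,y_p)$ as the edge into $T(v_L)$ minimizing $LCA(x_p,v_H)$ on $path(v_L,v_H)$ is what guarantees that after the traversal enters $T(v_L)$ at $x_p$, the subtree $T(v_H)$ still hangs below via a path that preserves its edge to $p_c$.

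The main obstacle I anticipate is the \textbf{second condition \Ab} and the interaction between the restricted search range $path(y_d,r')$ and the position of $v_H$. The subtle issue is that $(x_p,y_p)$ must be chosen so that the traversal does \emph{not} isolate $T(v_H)$ from $p_c$ into a $C2$ component of uncontrolled path length (exactly the failure mode flagged in the Remark after the Disconnecting Traversal). Making $LCA(x_p,v_H)$ as low as possible on $path(v_L,v_H)$ is the mechanism that controls this, and the harder part of the proof will be showing that the existence of $(x_1,y_1)$ guarantees a valid such $(x_p,y_p)$ within the allowed range while simultaneously keeping $T(v_H)$ attached to $p_c$. I would carefully case-split on whether $(x_d,y_d)$ exists and track where $v_H$ sits relative to $y_d$ and $y_p$ to close this gap.
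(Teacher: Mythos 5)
Your plan identifies the right raw materials (the witness edge $(x_1,y_1)$ from the failure of Scenario~1, and the $LCA$-minimization in the choice of $x_p$), but the deductions that the paper actually needs are either missing or pointed at the wrong condition. The central difficulty in verifying \Aa~ is that the $p$ traversal leaves \emph{two} untraversed path segments: $p_c$ and $p'=path(par(y_p),r')$, the residue of $path(r_c,r')$ above $y_p$. A component containing both would have two disjoint paths and hence be neither $C1$ nor $C2$; your claim that ``no configuration with two disjoint paths can arise because $T$ is a DFS tree'' is not a proof and is where the whole content of the lemma lives. The paper shows $p'$ ends up separated from $p_c$ by two facts: (i) every subtree queried when computing $(x_d,y_d)$ has its highest attachment at or below $y_d$, and $y_p$ lies on $path(y_d,r')$, so none of them reaches $p'$; and (ii) $T(v_P)$ has no edge to $path(y_d,r')$ at all, because any such edge $(x,y)$ with $x\in T(v_P)$ would have $LCA(x,v_H)$ strictly lower than $LCA(x_p,v_H)$, contradicting the minimality used to select $x_p$. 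You invoke this minimality, but you aim it at \Ab~ (keeping $T(v_H)$ attached to $p_c$), whereas in the paper \Ab~ is essentially immediate ($T(v_H)$ sits inside the single subtree $T(v_P)$ hanging from $p^*_P$, which joins the component of $p_c$ if it is eligible) and the minimality is what rescues \Aa.

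Two smaller gaps: for existence you correctly flag that the search is truncated to $path(y_d,r')$ but do not close it; the closing step is one line --- $(x_1,y_1)$ is the highest edge from a \emph{superset} of the subtrees defining $(x_d,y_d)$, so $y_1$ is at least as high as $y_d$ and $(x_1,y_1)$ is itself a valid candidate in the truncated range. For the back-edge claim, your ``no cross edges in a DFS tree'' argument only shows the endpoints are ancestor-related; it does not exclude the degenerate possibility that the selected edge is the \emph{tree} edge $(v_L,par(v_L))$. The paper excludes this because $LCA(x_p,v_H)$ is at most as low as $LCA(x_1,v_H)$, which is strictly below $v_L$ (Scenario~1 failing forces $x_1\in T(v_L)$ with $x_1\neq v_L$), hence $x_p\neq v_L$.
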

\begin{proof}
	Recall that $(x_1,y_1)$ is the highest edge on $path(r_c,r')$ from the 
	{\em eligible} subtrees hanging from the $p^*_L$ , whereas $(x_d,y_d)$ is the 
	highest edge on $path(r_c,r')$ from a more restricted set of subtrees.
	Thus, $y_1$ is at least as high as $y_d$ on $path(r_c,r')$.
	Now, since $(x_1,y_1)$ is a back edge with $x_1\in T(v_L)$ (see remark in Scenario 1), 
	$(x_1,y_1)$ is also a valid edge for $(x_p,y_p)$ ensuring its existence of $p$ edge.
	Further, $(x_p,y_p)$ is also a back edge because $(x_1,y_1)$ is a back edge and 
	$LCA(x_p,v_H)$ is at most as high as $LCA(x_1,v_H)$ ensuring $x_p\neq v_L$. 
	
	Now, consider the $p$ traversal using $(x_p,y_p)$, which produces an untraversed path, 
	say $p'=path(par(y_p),r')$. To prove that this traversal produces only components of 
	type $C1$ and $C2$ (the condition \Aa), %(first condition of {\em applicability} lemma), 
	we only need to prove 
	that any {\em eligible} subtree (subtree hanging from $p^*_P$ with an edge to $p_c$) is not connected to $p'$. 
	This is because $p'$ itself is not connected directly to $p_c$ by an edge (as $x_1\notin p_c$). 
	We first prove this property for the subtrees queried for finding $(x_d,y_d)$. 
	Since $y_p$ is at least as high as $y_d$ on $path(r_c,r')$, any such subtree will not be connected to $p'$.
	Now, we are left to prove this property for the remaining subtrees of $T(v_L)$ hanging from $p^*_P$ 
	with an edge to $p_c$. The only such subtree is $T(v_P)$, 
	the subtree hanging from $p^*_P$ which contains $T(v_H)$ (satisfying \Ab). 
%	This is also the second condition of {\em applicability} lemma. 
	Since, among all the edges $(x,y)$ from $T(v_L)$ to $path(y_d,r')$,
	$x_p$ is the vertex with lowest $LCA(x,v_H)$, the subtree $T(v_P)$ is not connected to $p'$.
	This is because for any such edge $(x,y)$, where $x\in T(v_P)$, would have $LCA(x,v_H)$ lower than
	$v_P$, which is clearly lower than $LCA(x_p,v_H)$ on $path(v_L,v_H)$. 
\end{proof}

\begin{lemma}
	On performing $r$ traversal using any edge $(x_r,y_r)$, which satisfies (i) $x_r\in T(v_P)\cup \tau_p$ (if any), 
	and (ii) the conditions \Aa~ and \Ab, %the first two conditions of {\em applicability} lemma, 
	the $r$ traversal is applicable with the  
	lowest edge from an {\em eligible} subtree to $p^*_R$ being $(x_d,y_d)$ (from the {\em eligible} subtree $\tau_d$), 
	if either $\tau_d\neq \tau_p$ or $\tau_d$ is not traversed by $p^*_R$. 
	\label{lem:dEdgeExist}
\end{lemma}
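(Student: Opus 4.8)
The plan is to obtain the statement as a direct corollary of the Applicability Lemma (Lemma~\ref{lem:non-applicable}). Hypothesis (ii) already grants conditions \Aa\ and \Ab\ for the $r$ traversal, so the entire task is to verify condition \Ac: that the lowest edge on $p^*_R$ from the component containing $p_c$ is not a back edge emanating from the subtree containing $T(v_H)$ (here $T(v_R)$) with an endpoint outside $T(v_H)$. Once \Ac\ is in place, Lemma~\ref{lem:non-applicable} immediately gives applicability, and the same analysis will exhibit $(x_d,y_d)$ as that lowest edge. Hence everything reduces to identifying the lowest edge on $p^*_R$ reaching the $p_c$-component.

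First I would pin down that this lowest edge is $(x_d,y_d)$. Recall $(x_d,y_d)$ is defined as the \emph{highest} edge on $path(r_c,r')$ coming from the restricted family of eligible subtrees --- those hanging from $p^*_L$ other than $T(v_L)$, together with those hanging from $path(v_L,v_H)$ --- a family that deliberately omits the subtree continuing to $v_H$. The $r$ traversal is built precisely so that this extremal attachment point $y_d$ becomes the deepest attachment of any eligible subtree on the traversed path $p^*_R$; I would make this correspondence explicit by tracking how the back edge $(x_r,y_r)$, with $x_r\in T(v_P)\cup\tau_p$, reshapes $path(r_c,r')$ into $p^*_R$ and checking that every other eligible attachment lands higher than $y_d$. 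Because the defining family of $(x_d,y_d)$ excludes the $v_H$-containing subtree, we have $x_d\in\tau_d$ with $\tau_d\neq T(v_R)$.

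Next I would confirm that $\tau_d$ actually persists as a subtree hanging from $p^*_R$, which is where the side hypothesis enters. The edge $(x_d,y_d)$ could fail to be an eligible edge of $p^*_R$ only if $\tau_d$ were itself swallowed by the traversed path; by hypothesis (i) the descent of $p^*_R$ into $T(v_L)$ is directed at $x_r\in T(v_P)\cup\tau_p$, so the only subtree hanging from $path(v_L,v_H)$ that it can absorb is $\tau_p$. Thus $\tau_d$ is consumed only in the degenerate case $\tau_d=\tau_p$ together with $\tau_d$ being traversed by $p^*_R$, exactly the possibility ruled out by the hypothesis ``$\tau_d\neq\tau_p$ or $\tau_d$ is not traversed by $p^*_R$''. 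Therefore $\tau_d$ remains eligible and $(x_d,y_d)$ is a legitimate candidate; combined with the preceding step it is the lowest eligible edge on $p^*_R$. Finally, since $x_d\in\tau_d\neq T(v_R)$, this lowest edge does not come from the subtree containing $T(v_H)$ at all, so it cannot be the forbidden back edge of \Ac, and Lemma~\ref{lem:non-applicable} completes the proof.

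I expect the main obstacle to be the correspondence asserted in the second paragraph: rigorously showing that the extremal choice of $(x_d,y_d)$ (highest on $path(r_c,r')$ from the restricted family) really does translate into the lowest edge on $p^*_R$ from the whole $p_c$-component, ruling out lower competing attachments from the heavy subtree $T(v_R)$ and any direct edges of $p_c$ onto the traversed portion of $\tau$. Carrying the geometry through the back-edge jump $(x_r,y_r)$ is the delicate bookkeeping, and the $\tau_d$-versus-$\tau_p$ side condition is precisely the hypothesis engineered to keep that bookkeeping valid.
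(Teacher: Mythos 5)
Your overall skeleton matches the paper's: conditions \Aa~and \Ab~are hypotheses, so the task reduces to verifying \Ac~by showing the lowest eligible edge on $p^*_R$ is $(x_d,y_d)$ with $x_d\in\tau_d\neq T(v_R)$; and your use of the side hypothesis (``$\tau_d\neq\tau_p$ or $\tau_d$ not traversed'') to guarantee that $\tau_d$ survives as an untraversed eligible subtree is essentially the paper's closing step. However, you leave the pivotal claim unproven and explicitly defer it as ``the main obstacle'': namely, why no eligible subtree created \emph{inside} $T(v_P)$ (and hence inside $T(v_R)\subseteq T(v_P)$) can attach to $path(r_c,r')$ higher than $y_d$, which would displace $(x_d,y_d)$ as the lowest edge on $p^*_R$. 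This is not mere bookkeeping; it is the one place where the specific choice of $(x_p,y_p)$ is used, and without it the lemma does not follow.

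The paper closes this gap with a short but essential observation: $(x_p,y_p)$ was chosen as the edge from $T(v_L)$ to $path(y_d,r')$ with the \emph{lowest} $LCA(x_p,v_H)$, so $T(v_P)$ --- the subtree hanging from $p^*_P$ that contains $T(v_H)$ --- has \emph{no} edge to $path(y_d,r')$ at all (any such edge $(x,y)$ with $x\in T(v_P)$ would have $LCA(x,v_H)$ strictly lower than $LCA(x_p,v_H)$, contradicting minimality). Consequently no subtree of $T(v_P)$, eligible or not, can have an attachment on $path(r_c,r')$ higher than $y_d$, and the lowest eligible edge on $p^*_R$ is found among exactly the same restricted family of subtrees that was queried to compute $(x_d,y_d)$. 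Your proof needs this argument (or an equivalent one exploiting the minimal-LCA property) to be complete; as written, the correspondence you promise to ``track'' is precisely the content of the lemma rather than a routine verification. The remainder of your argument --- $\tau_d$ being absorbed only when $\tau_d=\tau_p$ and traversed, and $\tau_d\neq T(v_R)$ forcing \Ac~--- is sound and aligns with the paper (which derives the disjointness of $T(v_P)$ and $\tau_d$ from $LCA(x_p,v_H)$ being at least as low as $LCA(x_d,v_H)$).
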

\begin{proof}
	Since $(x_p,y_p)$ is the edge to $path(y_d,r')$ with the lowest $LCA(x_p,v_H)$, 
	the subtree $T(v_P)$ does not have an edge to $path(y_d,r')$. 
	Thus, any subtree of $T(v_P)$, even if {\em eligible}, cannot have an edge on $path(r_c,r')$ 
	higher than $(x_d,y_d)$. As a result, if $r$ traversal is performed with $x_r\in T(v_P)\cup \tau_p$, 
	\big(see Figure~\ref{fig:pdfs} (c)\big), 
	the highest edge on $path(r_c,r')$ (and hence lowest on $p^*_R$) can be found by 
	querying the same {\em eligible} subtrees as the ones queried while computing $(x_d,y_d)$. 
	Also, if some part of $path(r_c,r')$ is not traversed by $r$ traversal, 
	it would effect this lowest edge only if it has an edge to $p_c$ or some eligible subtree,
	which is avoided by second condition. 
	Finally, it also requires the subtree containing $x_d$ to remain 
	connected to $p_c$ and leave $x_d$ untraversed.
	Since $LCA(x_p,v_H)$ is atleast as low as $LCA(x_d,v_H)$, $T(v_P)$ and $\tau_d$ are disjoint.
	Hence, the only way in which $r$ traversal traverses $\tau_d$ is if $x_p\in \tau_p$ and $\tau_p=\tau_d$.
%	
%	Finally, it also requires the subtree containing $x_d$ to remain 
%	connected to $p_c$ and leave $x_d$ untraversed, which is ensured by the third condition.
%	Note that in case $\tau_d \neq$ 
\end{proof}

%\begin{lemma}
%	After traversal of a path $p^*$ ending with a suffix of $path(r_c,r')$, 
%	the lowest edge from an {\em eligible} subtree to $p^*$ is $(x_d,y_d)$ if 
%	(i) $path\big(r_c,LCA(v_H,x_p)\big)$ is traversed, 
%	(ii) the eligible subtrees or $p_c$ do not have an edge to the un-traversed part of $path(r_c,r')$, 
%	and (iii) $\tau_d$ is not traversed.
%	\label{lem:dEdgeExist}
%\end{lemma}

%\noindent
%\textbf{Choice of $(x_p,y_p)$  and $(x_d,y_d)$: }
%We find the highest edge on $path(r_c,r')$ from the subtrees hanging from $p^*_L$
%\big(except $T(v_L)$\big) and the subtrees hanging from $path(v_L,v_H)$,
%considering only the subtrees with an edge to $p_c$.
%This edge is stored as $(x_d,y_d)$, where $y_d\in path(r_c,r')$, 
%is the edge to be used in Scenario 3.
%The corresponding subtree to which $x_d$ belongs is $\tau_d$.
%Next, we find the edge $(x_p,y_p)$ on $path(y_d,r')$ from $T(v_L)$, 
%where $x_p\in T(v_L)$, having $LCA(x_p,v_H)$ lowest on $path(v_L,v_H)$. 
%%However, if $LCA(x_p,v_H)=LCA(x_d,v_H)$ we simply choose $(x_d,y_d)$ as $(x_p,y_p)$.
%%Note that since $x_1\in T(v_L)$, $(x_d,y_d)$ will exist even if $x_d\notin T(v_L)$.
%However, if $(x_d,y_d)$ does not exist, $(x_p,y_p)$ is found by considering whole of $path(r_c,r')$
%instead of $path(y_d,r')$. %, which have to exist (again due to $(x_1,y_1)$).
%The existence of back edge $(x_1,y_1)$ implies the following properties of the computed edges 
%$(x_p,y_p)$ and $(x_d,y_d)$ 
%(see Appendix \ref{sec:proofs} for proof).

Thus, Lemma \ref{lem:pEdgeExist} ensures that our traversal can follow $p^*_P$ as
shown in Figure \ref{fig:pdfs} (b). To verify the third condition of {\em applicability} lemma, 
we find the new root for the component having path $p_c$ as follows. 
We find the lowest edge on $p^*_P$ from $p_c$ and the {\em eligible} subtrees hanging from $p^*_P$, 
say $(x_2,y_2)$, where $y_2\in p^*_P$. In case this edge satisfies \Ac, %the third condition of {\em applicability} lemma 
we perform the traversal otherwise move to the next scenario.

\begin{remark}
This scenario is not applicable only if $(x_2,y_2)$ is a back edge with $x_2\in T(v_P)$ and 
	$x_2\notin T(v_H)$.
\end{remark}

\subsubsection*{Scenario 3: {\large $r$} traversal}
%Consider the traversal of $p^*_R=path(r_c,x_r)\cup (x_r,y_r)\cup path(y_r,r')$ shown in Figure~\ref{fig:pdfs}~(c).
%We first choose $(x_r,y_r)$ as follows. 
Consider the traversal shown in Figure~\ref{fig:pdfs}~(c), where 
$p^*_R=path(r_c,x_r)\cup (x_r,y_r)\cup path(y_r,r')$. 
We choose $(x_2,y_2)$ as $(x_r,y_r)$. 
However, while computing $(x_2,y_2)$, $\tau_p$ (if exists) would have been partially traversed.
Hence, if the lowest edge from $\tau_p$ to $path(r_c,r')$, say $(x'_2,y'_2)$, has $y'_2$ lower 
than $y_r$ on $path(r_c,r')$, $\tau_p$ would be connected to both $p_c$ and %$path(par(y_2),r')$.
$path(par(v_l),y_2)\backslash\{y_2\}$.
This creates a component having $p_c$ which is not of type $C_1$ or $C_2$ violating \Aa.
%{\em applicability} lemma.
%However, if the lowest edge from $\tau_p$ to $path(r_c,r')$, 
%say $(x'_2,y'_2)$, is such that $y'_2$ is lower than $y_r$ on $path(r_c,r')$, $\tau_p$ would be 
%connected to both $p_c$ and $path(par(y_2),r')$ which violates {\em applicability} lemma.
In such a case we choose $(x'_2,y'_2)$ as $(x_r,y_r)$.
The existence of back edge $(x_2,y_2)$ implies the following property of $(x_r,y_r)$. 
%(see Appendix~\ref{sec:proofs} for proof).

%\noindent
%\textbf{Choice of $(x_r,y_r)$: }
%Let $\tau_p$ be the subtree hanging from $path(v_L,v_H)$ that contains $x_p$.
%We find lowest edge $(x'_2,y'_2)$ from $\tau_p$ to $path(r_c,r')$ and choose $(x_r,y_r)$ 
%as $(x'_2,y'_2)$ if $y'_2$ is lower than $y_2$ on $path(r_c,r')$, and $(x_2,y_2)$ otherwise.
%%%% to be lower of $(x_2,y_2)$ and $(x'_2,y'_2)$ on $path(r_c,r')$.
%%However, in the {\em special} case where $(x_p,y_p)$ is same as $(x_d,y_d)$,
%%we always choose the $r$ edge to be $(x_2,y_2)$ (recall the third condition of Lemma~\ref{lem:dEdgeExist}).
%The existence of back edge $(x_2,y_2)$ implies the following property of the computed edge $(x_r,y_r)$ 
%(see Appendix \ref{sec:proofs} for proof).

\begin{lemma}
	The edge $(x_r,y_r)$, which is a back edge, always exists and when used for $r$ traversal 
	satisfies \Aa~ and \Ab. %the first two conditions of {\em applicability} lemma.
	\label{lem:rEdgeExist}
\end{lemma}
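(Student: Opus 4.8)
The plan is to follow the template of the proof of Lemma~\ref{lem:pEdgeExist}, with the back edge $(x_2,y_2)$ delivered by the failure of the $p$ traversal playing the role that $(x_1,y_1)$ played there. First I would record why Scenario~3 is entered at all: the $r$ traversal is invoked only after both the $l$ and $p$ traversals are found inapplicable, and by the remark following Scenario~2 the inapplicability of the $p$ traversal means exactly that $(x_2,y_2)$ is a back edge with $x_2\in T(v_P)$ and $x_2\notin T(v_H)$. In the default branch we set $(x_r,y_r)=(x_2,y_2)$, so existence and the back-edge property are immediate and $x_r\in T(v_P)$.

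Next I would handle the special branch, in which $\tau_p$ exists and its lowest edge $(x'_2,y'_2)$ to $path(r_c,r')$ lies below $y_2$, forcing the re-choice $(x_r,y_r)=(x'_2,y'_2)$. Here existence is clear because $\tau_p$ already carries the edge $(x_p,y_p)$ to $path(r_c,r')$ selected in Scenario~2, so its lowest such edge is well defined. For the back-edge property I would observe that $\tau_p\subseteq T(v_L)$ consists solely of descendants of $v_l=LCA(r_c,v_H)$, whereas $path(r_c,r')$ can meet a descendant of $v_l$ only along the ancestor segment $path(v_l,r')$: an edge from $\tau_p$ to the $r_c$-branch $path(r_c,v_l)\setminus\{v_l\}$ would be a cross edge, which cannot occur in the DFS tree $T$. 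Hence $y'_2$ is a proper ancestor of $x'_2$ and $(x'_2,y'_2)$ is a back edge. In both branches $x_r\in T(v_P)\cup\tau_p$, which is precisely premise~(i) of Lemma~\ref{lem:dEdgeExist}.

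It then remains to verify \Aa~and \Ab~for the traversal $p^*_R=path(r_c,x_r)\cup(x_r,y_r)\cup path(y_r,r')$. For \Ab, by construction $p^*_R$ never enters $T(v_H)$ (since $x_r\notin T(v_H)$ in both branches), so $T(v_H)$ survives intact inside the single subtree $T(v_R)$ hanging from $p^*_R$, and the edge witnessing the adjacency of the heavy subtree to $p_c$ is inherited by $T(v_R)$, making it \emph{eligible}; thus the component holding $p_c$ and $T(v_H)$ is of type $C2$ with $T(v_H)$ connected to $p_c$. For \Aa~I would isolate the untraversed ancestor--descendant segment of $path(r_c,r')$ left by the jump $(x_r,y_r)$, namely $path(par(v_l),y_r)\setminus\{y_r\}$, and argue exactly as in Lemma~\ref{lem:pEdgeExist} that no eligible subtree hanging from $p^*_R$ reaches this segment and that the segment itself is not adjacent to $p_c$ (as $x_r\notin p_c$); this forces every resulting component to be of type $C1$ or $C2$.

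The main obstacle is the special branch. Using $(x_2,y_2)$ naively can leave $\tau_p$ simultaneously adjacent to $p_c$ and to the untraversed segment $path(par(v_l),y_2)\setminus\{y_2\}$, merging them into a component around $p_c$ that is neither $C1$ nor $C2$ and hence breaking \Aa. The delicate point is to show that routing the traversal \emph{through} $\tau_p$ via $(x'_2,y'_2)$ removes this bridge---because $\tau_p$ is then fully traversed---while simultaneously keeping $(x_r,y_r)$ a back edge and leaving $T(v_H)$ packaged inside a single eligible subtree $T(v_R)$. Establishing this cleanly requires the same bookkeeping of the relative heights of $y_2$, $y'_2$ and $y_d$ on $path(r_c,r')$, and of the positions of $\tau_p$, $\tau_d$ and $T(v_H)$ along $path(v_L,v_H)$, that underlies Lemmas~\ref{lem:pEdgeExist} and~\ref{lem:dEdgeExist}; once \Aa~and \Ab~are secured, Lemma~\ref{lem:dEdgeExist} certifies that the resulting $r$ traversal is indeed applicable with lowest edge $(x_d,y_d)$.
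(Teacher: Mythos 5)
Your overall route is the paper's: existence and the back-edge property come from the inapplicability of Scenario~2 (so $(x_2,y_2)$ is a back edge with $x_2\in T(v_P)$, $x_2\notin T(v_H)$), plus the observation that $(x'_2,y'_2)$ leaves $\tau_p$ from a proper descendant of $path(v_L,v_H)$ and, cross edges being impossible, must land on an ancestor; and \Aa~is reduced to showing that neither $p_c$ nor any eligible subtree has an edge to the untraversed segment $p'=path(par(v_l),y_r)\setminus\{y_r\}$. However, the one step that is the actual content of this lemma --- why switching to $(x'_2,y'_2)$ restores \Aa~--- is both misjustified and then deferred in your write-up. Your stated reason, that the bridge disappears ``because $\tau_p$ is then fully traversed,'' is false: the $r$ traversal through $(x'_2,y'_2)$ only consumes $path(r_c,x'_2)$, so most of $\tau_p$ (the subtrees hanging off that path) survives and could in principle still touch both $p_c$ and $p'$. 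The correct reason, which you flag as ``requiring the same bookkeeping'' but never supply, is a short minimality argument: $(x_2,y_2)$ is the lowest edge on the traversed path from $p_c$ and from every eligible subtree \emph{except} $\tau_p$ (which was partially traversed and hence excluded from that query), while $(x'_2,y'_2)$ is by definition the lowest edge from $\tau_p$ to $path(r_c,r')$. Taking $(x_r,y_r)$ to be whichever of the two lands lower guarantees that no edge from $p_c$, from any piece of $\tau_p$, or from any other eligible subtree reaches the segment strictly below $y_r$, which is exactly $p'$. That is essentially the entire proof in the paper; the height bookkeeping involving $y_d$ and positions along $path(v_L,v_H)$ that you invoke belongs to Lemma~\ref{lem:dEdgeExist} and to the special case, not to this lemma.
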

\begin{proof}
	Existence of $(x_2,y_2)$ clearly implies the existence of $(x_r,y_r)$.
	Further, $(x_r,y_r)$ is a back edge since both choices for it are back edges, i.e., 
	$(x_2,y_2)$ (see remark in Scenario 2) and $(x'_2,y'_2)$ (as $root(\tau_p)\neq v_L$).  

	Now, consider the $r$ traversal using $(x_r,y_r)$, which produces an untraversed path, 
	say $p'=path(par(v_l),y_r)\backslash \{y_r\}$. 
	To prove that this traversal produces only components of type $C1$ and $C2$ 
	(hence satisfies \Aa),
	%(first condition of {\em applicability} lemma), 
 	we only need to prove that any {\em eligible} subtree (subtree hanging from $p^*_R$ with an edge to $p_c$) 
 	is not connected to $p'$. This is because $p'$ itself is not connected directly to $p_c$ by an edge 
 	(as $x_2\notin p_c$). 
	Now, the only such subtrees not queried while computing $(x_2,y_2)$ is $\tau_p$, which is queried 
	while computing $(x'_2,y'_2)$. Hence for either choice for $(x_r,y_r)$ ($(x_2,y_2)$ or $(x'_2,y'_2)$),
	no {\em eligible} subtree will be connected to $p'$ (ensuring \Ab). 
%	This also ensures the second condition of {\em applicability} lemma.
\end{proof}
%\begin{lemma}
%	After the $r$ traversal using the chosen $r$ edge, the lowest edge from a subtree hanging from 
%	$r$ path having an edge to $p_c$, can belong to $T(v_R)$ only if $x_p\in \tau_d$ and $(x_r,y_r)=(x'_2,y'_2)$.
%	%, where $(x_3,y_3)$ is the corresponding highest edge on $r$ path from {\em eligible} subtrees.
%	\label{lem:rFail}
%\end{lemma}

Thus, Lemma \ref{lem:rEdgeExist} ensures that our traversal can follow $p^*_R$ as shown in 
Figure \ref{fig:pdfs} (c). To verify the third condition of {\em applicability} lemma, 
we find the new root for the component having path $p_c$ as follows. 
We find the lowest edge on $p^*_R$ from $p_c$ and the {\em eligible} subtrees 
hanging from $p^*_R$, say $(x_3,y_3)$, where $y_3\in p^*_R$. In case this edge satisfies 
\Ac, %the third condition of {\em applicability} lemma 
we perform this traversal.

Using Lemma~\ref{lem:dEdgeExist} we shall describe the conditions when this edge satisfies
{\em applicability} lemma. The choice of $(x_r,y_r)$ ensures that either $x_r\in T(v_P)$ 
(see Remark in $p$ traversal) or $x_r\in \tau_p$ (recall the computation of $(x'_2,y'_2)$).
Further, using Lemma~\ref{lem:rEdgeExist} our $r$ traversal satisfies \Aa~ and \Ab.
the first two conditions of {\em applicability} lemma. 
Thus, using Lemma~\ref{lem:dEdgeExist} the scenario is not applicable
only in the {\em special case} where $\tau_d=\tau_p$ and $p^*_R$ traverses $\tau_d$. 
Refer to Procedure~\ref{alg:dfs_c2} in Appendix~\ref{appn:pseudocode} for the pseudo code of this traversal.

We now present the conditions of the {\em special case} and present an overview of how it can be handled. 
Since $\tau_d=\tau_p$, $p^*_R$ would traverse $\tau_d$ only if $x_r\in \tau_d=\tau_p$, i.e., 
$(x_r,y_r)=(x'_2,y'_2)$. Thus, both the lowest and the highest edges on $path(r_c,r')$, i.e., 
$(x_p,y_p)$ and $(x_r,y_r)$, from an eligible subtree hanging from $path(v_L,v_H)$ belong to $\tau_d$.
Moreover, since  $\tau_d$ hangs from  $path(v_L,v_H)$, it does not contain $T(v_H)$. 
This ensures that if modified $r'$ traversal is performed ignoring $\tau_d$, 
%This ensures that if modified $p'$ traversal (or $r'$ traversal) is performed ignoring $\tau_d$, 
%using $(x_3,y_3)$ as $(x_{p'},y_{p'})$ (or $(x_2,y_2)$ as $(x_{r'},y_{r'})$), 
it can be followed by a disconnecting traversal of $\tau_d$
described as follows.
%(see Appendix \ref{sec:SpecialCase} for details).
%Refer to Procedure~\ref{alg:dfs_c2} in Appendix~\ref{appn:pseudocode} for the pseudo code of this traversal.

%\section{Special case of heavy subtree traversal}
\subsection*{Special case of heavy subtree traversal}
\label{sec:SpecialCase}
In this section $(x_p,y_p)$ and $(x_r,y_r)$ correspond to the back edges used in 
$p$ and $r$ traversals described earlier,
%in Section~\ref{sec:heavyP}, 
whereas $(x_{r'},y_{r'})$ corresponds to the back edge used in modified $r'$ traversal described below. 
%In this entire section $(x_p,y_p)$ and $(x_r,y_r)$ correspond to the  back edges used in 
%$p$ and $r$ traversals described in Section~\ref{sec:heavyP}, where as 
%$(x_{p'},y_{p'})$ and $(x_{r'},y_{r'})$ would correspond the  back edges used in modified $p'$
%and $r'$ traversals described below. 
%Our modified $p'$ traversal and $r'$ traversal would use 
%$(x_{p'},y_{p'})=(x_3,y_3)$ and $(x_{r'},y_{r'})=(x_2,y_2)$ respectively, 
%which will be followed by an appropriate traversal (see Figure \ref{fig:pdfsSp}).
We now recall the conditions leading to the {\em special} case, 
and describe its implications on $(x_p,y_p)$, $(x_r,y_r)$ and $(x_d,y_d)$.
%(see Appendix~\ref{sec:proofs} for proof).
%We now recall the conditions leading to the {\em special} case its implications on
%$(x_p,y_p)$, $(x_r,y_r)$ and $(x_d,y_d)$.

\begin{lemma}
	The conditions of the special case are (i) $x_3\in T(v_R)$, (ii) $x_p\in \tau_d$, and 
	(iii) $(x_r,y_r)=(x'_2,y'_2)$. Following are the properties of $(x_p,y_p)$ and $(x_d,y_d)$ 
	in this case.
	\begin{enumerate}
\item $y_d=y_p$,$y_3$ is lower than $y_d$ on $path(r_c,r')$, and $y_r$ is lower than $y_2$ on $path(r_c,r')$.
\item No subtree of $\tau_d$ hanging from $path(root(\tau_d),x_p)$, with an edge to $p_c$, 
		has an edge lower than $y_2$ on $path(r_c,r')$. %\label{obs:disconTd}
\item No subtree of $\tau_d$ hanging from $path(root(\tau_d),x_r)$, with an edge to $p_c$,
		has an edge higher than $y_3$ on $path(r_c,r')$. 	%\label{obs:disconTdx}
	\end{enumerate}			
	\label{lem:propSC}
\end{lemma}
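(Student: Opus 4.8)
The plan is to first recover the three defining conditions (i)--(iii) of the special case from the discussion preceding the lemma, and then to read Properties~1--3 off the extremal definitions of the edges $(x_d,y_d)$, $(x_p,y_p)$, $(x_2,y_2)$, $(x_r,y_r)$ and $(x_3,y_3)$. For the conditions: the special case was characterized just before the lemma as ``$\tau_d=\tau_p$ and $p^*_R$ traverses $\tau_d$''. Since $x_2\in T(v_P)$ (the non-applicability remark of Scenario~2) and $T(v_P)$ is disjoint from $\tau_p$ (established in the proof of Lemma~\ref{lem:dEdgeExist}), the traversed path $p^*_R=path(r_c,x_r)\cup(x_r,y_r)\cup path(y_r,r')$ can meet $\tau_d=\tau_p$ only through its initial segment $path(r_c,x_r)$, which forces $x_r\in\tau_p$ and hence $(x_r,y_r)=(x'_2,y'_2)$; this is condition (iii). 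Condition (ii) is immediate because $\tau_p$ is by definition the subtree containing $x_p$ and $\tau_p=\tau_d$. Condition (i) is exactly the non-applicability criterion of the $r$ traversal, obtained in the same remark-style manner as in Scenarios~1 and~2: the traversal fails precisely when $(x_3,y_3)$ is a back edge from the heavy subtree $T(v_R)$ with its endpoint outside $T(v_H)$, i.e.\ $x_3\in T(v_R)$.

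Next I would establish Property~1. For $y_d=y_p$, I would use that $x_p\in\tau_d$ by (ii), so $(x_p,y_p)$ is itself an edge of $\tau_d$; since $(x_d,y_d)$ is by construction the highest edge on $path(r_c,r')$ over the eligible subtrees it ranges over, and $\tau_d$ is one of them, no edge of $\tau_d$ reaches above $y_d$, whereas the defining constraint $y_p\in path(y_d,r')$ places $y_p$ at or above $y_d$; the two bounds force $y_p=y_d$. For ``$y_r$ lower than $y_2$'', condition (iii) gives $y_r=y'_2$, and $(x'_2,y'_2)$ is selected precisely when $y'_2$ lies strictly below the current candidate $y_2$, so $y_r$ is lower than $y_2$. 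Finally, for ``$y_3$ lower than $y_d$'' I would invoke Lemma~\ref{lem:dEdgeExist}: absent the pathology it guarantees that the lowest edge on $p^*_R$ from an eligible subtree is $(x_d,y_d)$, and the special case is exactly the situation in which this is overridden by a strictly lower edge $(x_3,y_3)$ coming from $T(v_R)$; hence $y_3$ must lie below $y_d$ on $path(r_c,r')$.

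Properties~2 and~3 I would obtain directly from the extremality of $(x_2,y_2)$ and $(x_3,y_3)$ together with the observation that the relevant segments of $\tau_d$ lie on the traversed paths. In the $p$ traversal the segment $path(root(\tau_d),x_p)$ is part of $p^*_P$, so every subtree of $\tau_d$ hanging from it that carries an edge to $p_c$ is an eligible subtree that was queried while computing the lowest edge $(x_2,y_2)$; minimality of $(x_2,y_2)$ then says none of these edges can land below $y_2$, which is Property~2. Symmetrically, in the $r$ traversal the segment $path(root(\tau_d),x_r)$ lies on $p^*_R$, the corresponding subtrees are among those queried for $(x_3,y_3)$, and minimality of $(x_3,y_3)$ bounds their edges, giving Property~3.

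The step I expect to be the main obstacle is the careful bookkeeping of the ``higher''/``lower'' orientation, since the edges are defined as extremal on the traversed paths $p^*_P$ and $p^*_R$ while the Properties quantify their endpoints on the original path $path(r_c,r')$, and one must verify that the set of subtrees queried for each of $(x_d,y_d)$, $(x_p,y_p)$, $(x_2,y_2)$ and $(x_3,y_3)$ coincides exactly with the range over which the corresponding Property quantifies. Once these ranges are matched, each Property is essentially a restatement of the defining extremality of the relevant edge, so the remaining content is combinatorial bookkeeping rather than any substantive inequality.
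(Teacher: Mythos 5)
Most of your proof tracks the paper's: the recovery of conditions (i)--(iii) from the discussion preceding the lemma, the derivation of $y_d=y_p$ from the two-sided constraint ($y_p\in path(y_d,r')$ versus the maximality of $y_d$ over edges of $\tau_d$), the claim that $y_r$ is lower than $y_2$ from the selection rule for $(x_r,y_r)$, and Properties~2 and~3 from the extremality of $(x_2,y_2)$ and $(x_3,y_3)$ over query sets that contain the relevant subtrees of $\tau_d$ --- all of these coincide with the paper's argument.

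The gap is in your argument that $y_3$ is lower than $y_d$ on $path(r_c,r')$. You derive it from Lemma~\ref{lem:dEdgeExist}: in the special case the guaranteed lowest edge $(x_d,y_d)$ on $p^*_R$ is ``overridden by a strictly lower edge $(x_3,y_3)$,'' hence $y_3$ lies below $y_d$. This fails on two counts. First, the orientation is backwards: on the segment $path(y_r,r')$ of $p^*_R$ the traversal runs from $y_r$ up towards $r'$, so ``lowest on $p^*_R$'' means \emph{highest} on $path(r_c,r')$ (this is stated explicitly in the proof of Lemma~\ref{lem:dEdgeExist}); an edge that beats $(x_d,y_d)$ for ``lowest on $p^*_R$'' would therefore have $y_3$ \emph{above} $y_d$ on $path(r_c,r')$, the opposite of what must be shown. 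Second, in the special case the hypothesis of Lemma~\ref{lem:dEdgeExist} is violated ($\tau_d=\tau_p$ is traversed by $p^*_R$), so $(x_d,y_d)$ need not even remain a candidate, and the mere failure of that lemma's conclusion carries no positional information about $y_3$. The paper instead obtains this sub-claim from the extremal choice of $(x_p,y_p)$ in Scenario~2: if $y_3$ were at or above $y_d$ on $path(r_c,r')$, then $(x_3,y_3)$ --- with $x_3\in T(v_R)\subseteq T(v_L)$, $y_3\in path(y_d,r')$, and $LCA(x_3,v_H)$ strictly lower on $path(v_L,v_H)$ than $LCA(x_p,v_H)$ --- would have been selected as $(x_p,y_p)$, a contradiction. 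You need this argument (or an equivalent one); the route through Lemma~\ref{lem:dEdgeExist} does not work.
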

\begin{proof}
Recall the choice of $(x_d,y_d)$ and $(x_p,y_p)$ (see Section~\ref{sec:choose_ep_ed}), 
it was the highest edge from 
$\tau_d$ on $path(r_c,r')$ and $(x_p,y_p)$ was computed such that $y_p\in path(y_d,r')$. 
Hence, if $x_p\in \tau_d$ we necessarily have $y_d=y_p$.
Also, $y_3$ is strictly lower than $y_p$ (or $y_d$) else having a lower $LCA(x_3,v_H)$ 
than $LCA(x_p,v_H)$, $(x_3,y_3)$ would have been selected as $(x_p,y_p)$ earlier.
Finally, $y_2$ is strictly higher than $y_r$ otherwise $(x_2,y_2)$ 
would have been selected as $(x_r,y_r)$ earlier.
Second property holds since $(x_2,y_2)$ (where $x_2\in T(v_P)$) was the lowest edge on $path(r_c,r')$
from the eligible subtrees after $p$ traversal. 
Third property holds since $(x_3,y_3)$ (where $x_3\in T(v_R)$) was the highest edge on $path(r_c,r')$
from the eligible subtrees after $r$ traversal.
\end{proof}

\begin{figure*}[!ht]
	\centering
	\includegraphics[width=\linewidth]{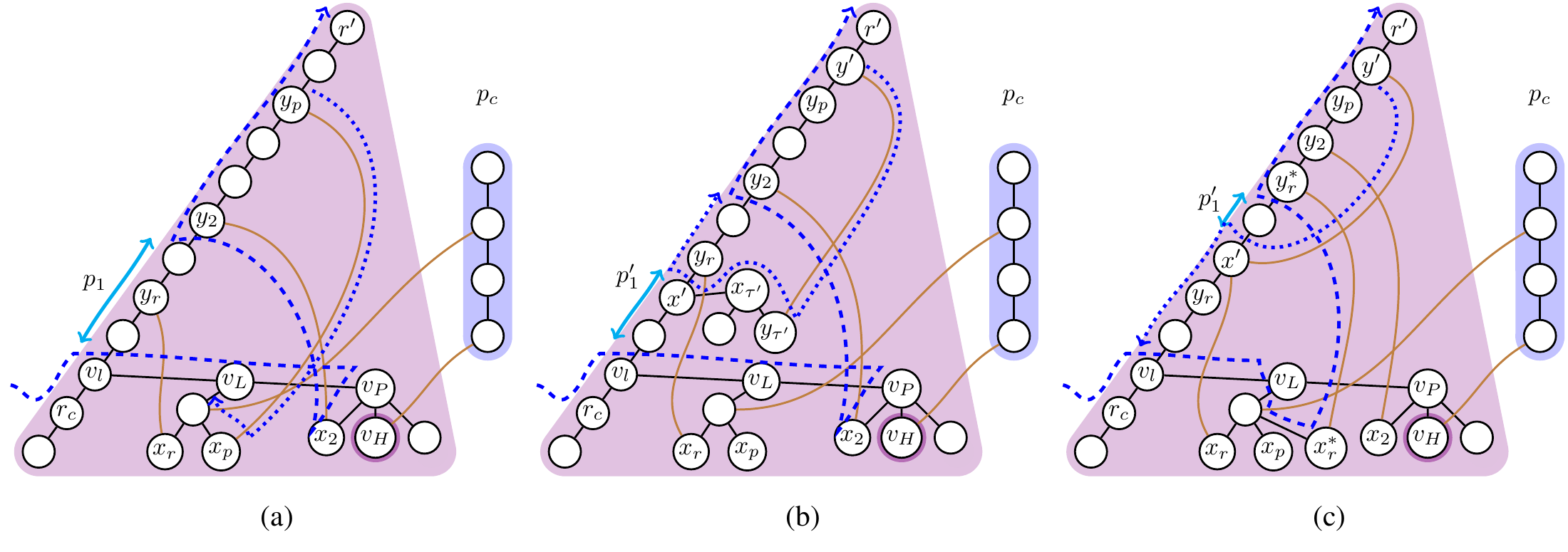}
	\caption{The three traversals for {\em special case} of Heavy Subtree Traversal
		(shown  using blue dotted lines) followed by a modified $r'$ traversal (shown  using blue dashed lines).
		(a) Root traversal of $\tau_d$, 
		(b) Upward cover traversal of $p_1$ through $\tau'$, and 
		(c) Downward cover traversal of $p_1$ using a direct edge $(x',y')$.		
%		(a) Modified $p'$ traversal with $(x_{p'},y_{p'})=(x_3,y_3)$,
%		(b) Modified $r'$ traversal with $(x_{r'},y_{r'})=(x_2,y_2)$, 
%		(c) Residual $r'$ traversal with $(x_{r'},y_{r'})=(x_2,y_2)$.		
	}
	\label{fig:pdfsSp}
\end{figure*}

To handle this special case, we revisit the scenario corresponding to $r$ traversal 
ignoring the {\em eligible} subtree $\tau_d$. 
Hence, we choose $(x_{r'},y_{r'})=(x_2,y_2)$ despite having a lower edge $(x_r,y_r)$.
Now, based on the lowest edge from component containing $p_c$ on the traversed path, 
we append simple traversals to this modified $r'$ traversal in order to satisfy \Aa. 
%the first condition of {\em applicability} lemma. 
We shall shortly see that in these traversals the conditions \Ab~ and \Ac~
%the second and third conditions of {\em applicability} lemma 
are implicitly true.
%Overall, the aim is to ensure that the new root of component containing $p_c$ is in
% $\tau_d$, which satisfies the third condition of {\em applicability} lemma.

\subsection*{Modified {\large $r'$} traversal}
Consider a modified $r'$ traversal using $(x_{r'},y_{r'})=(x_2,y_2)$ %instead of a lower $(x_r,y_r)$
traversing the path shown in Figure~\ref{fig:pdfsSp} (a), 
where $p^*_{R'}=path(r_c,x_2)\cup (x_2,y_2)\cup path(y_2,r')$. 
This leaves an untraversed part of $path(r_c,r')$, i.e., 
$path(par(v_l),y_2)\backslash\{y_2\}$ (say $p_1$).
Using Lemma~\ref{lem:propSC} we know that $y_2$ is strictly higher than $y_r$ ensuring $y_r\in p_1$.
%that necessarily contains $y_r$ 
%as $y_2$ is strictly higher than $y_r$ (see Lemma~\ref{lem:propSC}).
%. This is because $y_2$ is strictly higher than $y_r$
%otherwise $(x_2,y_2)$ would have been selected as $(x_r,y_r)$ earlier.
Now, the computation of $(x_2,y_2)$ ensures that no {\em eligible} subtree (except for $\tau_d$)
is connected to both paths $p_1$, as well as ensuring the lowest edge on $p^*_{R'}$ from an 
{\em eligible} subtree is from $\tau_d$, i.e., $(x_d,y_d)$. 
This implicitly satisfies \Ab~ and \Ac,
%the second and third condition of {\em applicability} lemma,
if the appended traversal does not traverse any part of $T(v_{R'})$ or $p_c$. 
However, since $\tau_d$ is connected to two paths $p_1$ and $p_c$ 
(violating \Aa), %the first condition of {\em applicability} lemma), 
we need to append $p^*_{R'}$ with another traversal which disconnects the 
unvisited part of $\tau_d$ from the unvisited part of $p_1$.

%It also satisfies the first two conditions of 
%{\em applicability} lemma ignoring $\tau_d$.
%%This satisfies the first two conditions of {\em applicability} lemma ignoring $\tau_d$.
%However, since $\tau_d$ is connected to two paths $p_1$ and $p_c$, 
%we need to append $p^*_{R'}$ with another traversal which disconnects the 
%remaining part of $\tau_d$ from the remaining part of $p_1$.
%%which in order to satisfy the first two conditions of {\em applicability} lemma. 

%Since, $\tau_d$ is necessarily connected to two paths $p_1$ and $p_c$, its component will not be of 
%type $C1$ or $C2$, which is not permissible. 
Now, the only {\em eligible} subtree having an edge to $p_1$ is $\tau_d$.
Still the lowest edge on $p^*_{R'}$ from the component having $p_c$ may not be from $\tau_d$.
This is because $p_1$ and subtrees connected to it are also in the component containing $p_c$ 
(since $\tau_d$ has an edge on $p_1$). 
Thus, depending on the lowest edge on $p^*_{R'}$ from the  component containing $p_c$, 
say $(x,y)$ with $y\in p^*_{R'}$, we have three cases (see Figure~\ref{fig:pdfsSp}).
If $x\in \tau_d$, we simply perform a {\em root traversal} of $\tau_d$ 
exploiting property 2 of Lemma~\ref{lem:propSC} to disconnect $\tau_d$ and $p_1$. 
On the other hand if $x\in p_1$ or some subtree connected to $p_1$,
we shall perform a {\em cover traversal} of $p_1$, which visits all vertices on $p_1$ that are connected to $\tau_d$.
As a result, the unvisited part of $p_1$ is disconnected from $\tau_d$.
Since $y_r$ is the lowest edge from $\tau_d$ on $p_1$, in case $x$ is lower than $y_r$ on $p_1$,
we simply traverse upwards covering $y_d$ and other endpoints of edges from $\tau_d$ incident on $p_1$.
Else we traverse downwards after making sure that no endpoints of edges from $\tau_d$ that incident on $p_1$
are above $x$.

\subsubsection*{Root Traversal of {\large $\tau_d$}}
In this case the lowest edge from component containing $p_c$ on $p^*_{R'}$ is from $\tau_d$, i.e., $(x_d,y_d)$.
The traversal of $p^*_{R'}$ is followed by the traversal of $p_{R'}=(y_p,x_p)\cup (x_p,root(\tau_d))$
as shown in Figure~\ref{fig:pdfsSp} (a).
Hence, using property~2 in Lemma~\ref{lem:propSC}, no subtree of $\tau_d$ hanging from $p_{R'}$ 
is connected to $p_c$
%Further, the computation of $(x_2,y_2)$ ensures that $T(v_{R'})$ is not connected $p_1$.
(satisfying  \Aa). %first condition of {\em applicability} lemma. 
As described earlier, since $T(v_{R'})$ does not have an edge to $\tau_d$, 
after the traversal of $p^*_{R'}\cup p_{R'}$, the component having $p_c$ would have 
the new root in $\tau_d$ or $p_c$ (satisfying \Ac). %This satisfies the third condition of {\em applicability} lemma. 

\subsubsection*{Cover traversal of {\large $p_1$}}
In this case the lowest edge from component containing $p_c$ on $p^*_{R'}$ is from $p_1$ (say $(x',y')$) 
or some subtree $\tau'$ (say $(y_{\tau'},y')$) which is connected to $p_1$, where $y'\in p^*_{R'}$.
%Let the lowest edge on $p^*_{R'}$ from component containing $p_c$ be  (if directly connected) or 
%$(x_{\tau'},x')$ (if connected through $\tau'$), where $x'\in p^*_{R'}$. 
If connected through $\tau'$, we choose the highest edge $(x',x_{\tau'})$ from $\tau'$ on $p_1$, with $x'\in p_1$.
If $x'$ is lower than $y_r$, we perform the {\em upward traversal} towards $y_2$.
Other wise we perform the {\em downward traversal} towards $v_l$.
In case of upward traversal, when connected through $tau'$, we update $(x',x_{\tau'})$ to be the lowest edge 
from $\tau'$ on $p_1$, which still maintains $x'$ to be lower than $y_r$.
Note that this choice of $(x',x_{tau'})$ ensures that $path(y_{tau'},x_{tau'})$ is a {\em disconnecting} 
traversal of $\tau'$ from $p_1$ in both upward and downward traversals.
%In case (a), $(y',y_{\tau'})$ is the lowest edge from $\tau'$ on $p_1$.
%Whereas in case (b),   $(y',y_{\tau'})$ is the highest edge from $\tau'$ on $p_1$.
%As we shall shortly see, this difference in choosing $(y',y_{\tau'})$ ensures the 
%{\em disconnecting} traversal of $\tau'$ from $p_1$ in each case.
%%Let this edge be $(y',y_{\tau'})$, with $y'\in p_1$.  
We also define this path from $y'$ to $x'$ as $p_{\tau'}$, i.e.,
when connected through $\tau'$  (see Figure~\ref{fig:pdfsSp} (b)), we have 
$p_{\tau'}=(y',y_{\tau'})\cup path(y_{\tau'},x_{\tau'})\cup (x_{\tau'},x')$.
Otherwise, in case of direct edge (see Figure~\ref{fig:pdfsSp} (c)), we have $p_{\tau'}=(y',x')$.

%
%Even though the only subtree with an edge 
%to both $p_c$ and $p_1$ is $\tau_d$, the path $p_1$ and subtrees connected to it are also in the component 
%containing $p_c$. Thus, depending on the lowest edge on $p^*_{R'}$ from the component containing $p_c$, 
%we have three mutually exhaustive cases:

%\begin{enumerate}
%\item Lowest edge on $p^*_{R'}$ from component containing $p_c$ is from $\tau_d$, i.e., $(x_p,y_p)$.
%\item Lowest edge on $p^*_{R'}$ from component containing $p_c$ is from $p_1$, 
%	  either a direct edge $(x',y')$ or through a subtree $\tau'$ with an edge to $p_1$.
%	  \begin{enumerate}
%	  \item If the direct edge $(x',y')$ or highest edge from $\tau'$ ends on $p_1$ atleast as low than $y_r$.
%	  \item If the direct edge $(x',y')$ or highest edge from $\tau'$ ends on $p_1$ higher than $y_r$.
%	  \end{enumerate}
%\end{enumerate}
%Thus, the subtree $\tau_d$ is connected to two paths $p_1$ and $p_c$, which is also the only subtree
%connected to both $p_1$ and $p_c$ (recall computation of $(x_2,y_2)$). 
%This is not permissible as the component having $\tau_d$ will not be of 
%type $C1$ or $C2$. 
%%This is clearly not permissible as the component having $\tau_d$ and $p_c$ will not be of 
%%type $C1$ or $C2$. 

%\subsubsection*{Traversal for case 2:}
\begin{enumerate}
\item \textbf{Upward traversal on $p_1$}\\
In case $x'$ is lower than $y_r$, the traversal of $p^*_{R'}$ is followed by the traversal of 
$p_{R'}=p_{\tau'}\cup path(x',y_2)\backslash\{y_2\}$ as shown in Figure~\ref{fig:pdfsSp}~(b). 
Since $p_{R'}$ is a {\em disconnecting} traversal of $\tau'$ from $p_1$,
the unvisited part of $p_1$, say $p'_1$, is not connected to the unvisited part of $\tau'$.
Also, $p'_1$ is not connected to $\tau_d$ and hence the component containing $p_c$ as $y_r\notin p'_1$.
Since the unvisited part of $\tau'$ is also not connected to $p_c$, \Aa~ 
%the first condition of {\em applicability} lemma 
is satisfied.
As described earlier, the component having $p_c$ will have the new root in $\tau_d$, 
being the only part of the component connected to $p_{R'}$. 
Since $\tau_d$ is not a heavy subtree, \Ac~ %the third condition of {\em applicability} lemma
is also satisfied. 

\item \textbf{Downward traversal on $p_1$}\\
In case $x'$ is higher than $y_r$ and we follow the traversal downwards, 
$path(x',y_2)\backslash\{y_2\}$ might still have edges from $\tau_d$. 
Hence, we modify the traversal of $p^*_{R'}$ as follows. Let the lowest 
edge on $path(x',y_2)\setminus\{x',y_2\}$ from $\tau_d$ be $(x^*_r,y^*_r)$, 
where $y^*_r\in p_1$. In case $(x^*_r,y^*_r)$ doesn't exist, we simply choose 
$(x^*_r,y^*_r)=(x_2,y_2)$. We now perform a modified $r''$ traversal using 
$(x_{r''},y_{r''})=(x^*_r,y^*_r)$ traversing the path 
$p^*_{R''}=path(r_c,x^*_r)\cup (x^*_r,y^*_r)\cup path(y^*_r,r')$  
as shown in Figure~\ref{fig:pdfsSp} (c). Since $y^*_r$ is higher than $x'$, 
again the path from lowest edge on $p^*_{R''}$ from the component containing $p_c$, 
to $p_1$ would correspond to $p_{\tau'}$. This traversal is then followed by the 
traversal of $p_{R''}=p_{\tau'}\cup path(x',par(v_l))$ as shown in Figure~\ref{fig:pdfsSp}~(c). 
Since $p_{R''}$ is a {\em disconnecting} traversal of $\tau'$ from $p_1$, 
the unvisited part of $p_1$, say $p'_1$, is not connected to the unvisited part of $\tau'$.
Also, $p'_1$ would not be connected to $\tau_d$ and hence component containing $p_c$,
because $y^*_r$ was the lowest edge above $x'$ on $p_1$ from $\tau_d$.
Since the unvisited part of $\tau'$ is not connected to $p_c$,
\Aa~ %the first condition of {\em applicability} lemma 
is satisfied.
As described earlier, the component having $p_c$ will have the new root in $\tau_d$, 
being the only part of the component connected to $p_{R'}$. 
Since $\tau_d$ is not a heavy subtree, \Ac~ %the third condition of {\em applicability} lemma
is also satisfied. 
\end{enumerate}

Thus, in all the cases of Special Case of heavy path traversal, one of the traversals described above 
is necessarily applicable. Refer to Procedure~\ref{alg:dfs_c2sp} in Appendix~\ref{appn:pseudocode} for 
pseudo code.

\subsection*{Correctness:}
To prove the correctness of our algorithm, it is sufficient to prove two properties.
\textit{Firstly}, the components property is satisfied in each traversal mentioned above.
\textit{Secondly},  every component in a phase/stage, 
abides by the size constraints defining the phase/stage.
%To prove the correctness of the algorithm we need to ensure that the components
%property is satisfied in each traversal mentioned above.
%A We also need to ensure that every component in a phase/stage, 
%abides by the size constraints defining the phase/stage.
By construction, we always choose the lowest edge from a component to the 
recently added path in $T^*$ ensuring that the components property is satisfied.
Furthermore, in the different traversals we have clearly proved how the 
stage/phase is progressed ensuring the size constraints.
Thus, the final tree $T^*$ returned by the algorithm is indeed the 
DFS tree of the updated graph.

\subsection*{Analysis}
We now analyze a stage of the algorithm for processing a component $c$.
In each stage, our algorithm performs at most $O(1)$ traversals of each type described above. 
Let us first consider the queries performed on the data structure $\cal D$.
Every traversal described above performs $O(1)$ sets of these queries sequentially, 
where each set may have $O(|c|)$ parallel queries (refer to Appendix~\ref{appn:pseudocode} for the pseudo code). 
Moreover, each of these sets is an {\em independent} set of parallel queries on $\cal D$
(recall the definition of {\em independent} queries in Section~\ref{sec:prelim}).
This is because in each set of parallel queries, different queries are performed 
 %{\em descendant} vertices in different queries are 
 either on different untraversed subtrees of currently processed subtree
or on the traversed path in the currently processed subtree.
The remaining operations (excluding queries to $\cal D$) clearly requires only the knowledge of the current DFS tree $T$ (and not whole $G$). 
Hence, they can be performed locally in the distributed and semi-streaming environment. Performing 
these operations efficiently in parallel shall be described in Section~\ref{sec:p_imp}. 
Since our algorithm requires $\log n$ phases each having $\log n$ stages, we get the following theorem.

\begin{theorem}
	Given an undirected graph and its DFS tree $T$, any subtree $\tau$ of $T$ can be rerooted at any vertex $r'\in \tau$ 
	by sequentially performing $O(\log^2 n)$ sets of $O(|\tau|)$ independent queries on $\cal D$, 
	in addition to local computation requiring only the subtree $\tau$.
	\label{thm:rerootG}
\end{theorem}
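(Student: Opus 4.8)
Theorem~\ref{thm:rerootG} asserts a bound on the cost of rerooting a subtree $\tau$ in terms of the number of independent-query sets issued against $\mathcal{D}$. The plan is to bound the query complexity by counting the total number of stages the algorithm executes and then charging each stage a constant number of independent query sets of size $O(|\tau|)$. Since the algorithm is explicitly organized into $\log n$ phases, each comprising $\log n$ stages, the top-level structure already suggests the $O(\log^2 n)$ factor; the work is in verifying that every individual stage really does cost only $O(1)$ independent query sets and that the size of each set is $O(|\tau|)$ (indeed $O(|c|)$ for the component $c$ being processed, which never exceeds $|\tau|$).

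First I would establish that within a single stage, the traversal of a component $c$ invokes only a constant number of traversal procedures. Here I would appeal to the case analysis already laid out in Section~\ref{sec:pdfs}: depending on the location of $r_c$, the component is handled by a disintegrating traversal, path halving, a disconnecting traversal, or a heavy subtree traversal, and in the hardest case (heavy subtree traversal, including its special case) the algorithm checks Scenarios~1--3 in turn and then appends at most one simpler traversal. Crucially, the \emph{Applicability Lemma} (Lemma~\ref{lem:non-applicable}) together with Lemmas~\ref{lem:pEdgeExist}, \ref{lem:dEdgeExist}, and \ref{lem:rEdgeExist} guarantees that one of these scenarios is always applicable, so the algorithm never loops: it performs $O(1)$ traversals per stage. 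Each such traversal, as noted in the Analysis paragraph, issues $O(1)$ \emph{sequential} sets of queries on $\mathcal{D}$, so a stage costs $O(1)$ sets in total.

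Next I would verify the two structural claims that make each set a legitimate \emph{independent} query set of the right size. For independence, I would invoke the observation that within any single parallel set, distinct queries have disjoint \emph{descendant} vertices, because each query is directed either at a distinct untraversed subtree of the currently processed subtree or at the freshly added path in $T^*$; this is exactly the condition defining independence in Section~\ref{sec:prelim}. For the size bound, the parallel queries in one set range over the subtrees and path of the component $c$, so their number is $O(|c|) \le O(|\tau|)$, since the invariant guarantees every component arising during the rerooting of $\tau$ is contained in $\tau$. I would also note that the correctness of treating these as components of type $C1$/$C2$ follows from the invariant maintained throughout and the correctness argument already given.

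Finally I would assemble the count: the algorithm runs $\log n$ phases, each with $\log n$ stages, giving $O(\log^2 n)$ stages overall; multiplying by the $O(1)$ independent query sets per stage yields $O(\log^2 n)$ sets, each of size $O(|\tau|)$. The non-query operations—finding $v_H$, locating LCAs, identifying hanging subtrees, testing back edges—depend only on the tree $T$ restricted to $\tau$ and not on the non-tree edges of $G$, so they constitute the local computation requiring only $\tau$, as claimed. I expect the main obstacle to be not the arithmetic but the bookkeeping needed to confirm that \emph{every} branch of the heavy subtree traversal, and especially the special case with its upward and downward cover traversals, genuinely terminates after $O(1)$ traversals and issues only independent query sets; the heavy-subtree analysis is where a subtle repeated or dependent query could inflate the bound, so the crux is checking that the scenario selection is exhaustive and non-repeating (which the applicability lemmas provide) and that the $p_{\tau'}$/cover-traversal steps reuse the same eligible subtrees already queried rather than spawning new dependent query rounds.
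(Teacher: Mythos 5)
Your proposal is correct and follows essentially the same route as the paper: count $\log n$ phases times $\log n$ stages, charge each stage $O(1)$ traversals each issuing $O(1)$ sequential sets of $O(|c|)\le O(|\tau|)$ queries, argue independence from the fact that distinct queries in a set target distinct untraversed subtrees or the traversed path, and observe that all non-query work uses only the tree structure of $\tau$. The paper's own argument is exactly this accounting, with the exhaustiveness of the traversal scenarios (which you correctly flag as the crux) delegated to the applicability lemmas established earlier.
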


\section{Implementation in the Parallel Environment}
%\section{Implementation on different models of computation}
%\subsection{Parallel Setting}
\label{sec:p_imp}
We assign $|c|$ processors to process a component $c$, requiring overall $n$ processors. 
%For implementing our algorithm in the parallel model we require 
We first present efficient implementation of $\cal D$ and the operations on $T$ used by our algorithm.
%These results are fairly simple or derived directly from classical results and hence may not be of 
%independent interest (see Appendix~\ref{sec:ds} for details).

\subsection{Basic Data Structures}
%\subsection{Data Structures for Parallel Implementation}
%\subsection{Data Structure}
\label{sec:ds}
The data structure maintained by our algorithm uses the following classical results for 
finding the properties of a tree on an EREW PRAM.

\begin{theorem}[Tarjan and Vishkin~\cite{TarjanV84}]
A rooted tree on $n$ vertices can be processed in $O(\log n)$ time using $n$ processors to compute
post order numbering of the tree, level and number of descendants for each vertex on a EREW PRAM.
\label{thm:eulerT}
\end{theorem}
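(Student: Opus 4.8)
The plan is to obtain all three quantities from a single application of the Euler-tour technique followed by parallel list ranking. First I would replace each tree edge $\{u,par(u)\}$ by its two directed arcs $(u,par(u))$ and $(par(u),u)$, producing $2(n-1)=O(n)$ arcs, and splice them into one closed walk that starts and ends at the root. Assigning $O(1)$ arcs to each of the $n$ processors, the successor pointer of every arc can be fixed in $O(1)$ time directly from the cyclically ordered adjacency lists: an arc entering a vertex $v$ is linked to the arc leaving $v$ toward the next child, and the arc from the last child is linked back to the return arc $(v,par(v))$. Cutting this cycle at the root yields a linked list $L$ on which every later computation runs.

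Next I would compute the rank of each arc in $L$ by list ranking. The core subroutine is pointer jumping: every processor repeatedly replaces its successor by its successor's successor while accumulating a partial sum, halting after $\lceil \log(2n)\rceil$ rounds. Because the successor field of a linked list is a permutation, in each round every memory cell is read and written by exactly one processor, so the routine executes on an EREW PRAM in $O(\log n)$ time with $n$ processors. This is the one genuinely non-trivial step, and it is precisely the ingredient drawn from the cited work; the main obstacle is simply verifying that the permutation structure keeps all accesses exclusive so that no concurrent read or write ever arises.

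With ranks in hand, each target quantity reduces to a weighted prefix sum along $L$, computed by the same machinery with a suitable arc weight. For the \emph{level} I assign weight $+1$ to each downward arc $(par(u),u)$ and $-1$ to each upward arc $(u,par(u))$, so that the prefix sum just after the arc entering $v$ equals the depth of $v$. For the \emph{post-order numbering} I assign weight $1$ to every upward arc and $0$ otherwise; the prefix sum at the return arc $(v,par(v))$ counts exactly the vertices finished no later than $v$, which is its post-order index (the root is numbered last separately). For the \emph{number of descendants} I assign weight $1$ to each downward arc: each proper descendant of $v$ is discovered by a unique downward arc lying strictly between $v$'s first and last appearance in $L$, so the corresponding prefix-sum difference counts the proper descendants, and adding one gives the subtree size.

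Finally, since every weight assignment and every read-off touches only the two arcs owned by a given vertex, these closing steps are again EREW-safe and cost $O(\log n)$ time with $n$ processors. Composing the $O(1)$-time tour construction, the $O(\log n)$-time list ranking, and the $O(\log n)$-time prefix sums gives the claimed $O(\log n)$ time bound using $n$ processors on an EREW PRAM, establishing the theorem.
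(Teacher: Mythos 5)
Your proof is correct and follows essentially the same route as the cited source: Theorem~\ref{thm:eulerT} is quoted by the paper without proof from Tarjan and Vishkin, whose argument is precisely the Euler-tour construction plus pointer-jumping list ranking and weighted prefix sums that you reconstruct, including the key observation that injectivity of the successor function keeps every round exclusive-read/exclusive-write. The weight assignments you give for level, post-order index, and subtree size are the standard ones and are all sound.
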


\begin{theorem}[Schieber and Vishkin~\cite{SchieberV88}]
A rooted tree on $n$ vertices can be preprocessed in $O(\log n)$ time using $n$ processors on an EREW PRAM
such that $k$ LCA queries can be answered in $O(1)$ time using $k$ processors on a CREW PRAM.
\label{thm:lca}
\end{theorem}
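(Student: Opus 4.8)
The plan is to prove Theorem~\ref{thm:lca} by combining the Euler tour technique with the range-minimum reformulation of the LCA problem, both of which are standard but must be carefully stated to respect the EREW (for preprocessing) and CREW (for queries) restrictions. First I would construct an Euler tour of the rooted tree: using the adjacency structure, build for each vertex its list of children, form the standard Euler tour array that visits $2n-1$ positions, and record for each position the vertex visited together with its level (depth) in $T$. This tour can be produced in $O(\log n)$ time with $n$ processors by list ranking on an EREW PRAM, and the level information is exactly what Theorem~\ref{thm:eulerT} already supplies. The point of the Euler tour is the classical reduction: $LCA(x,y)$ equals the vertex of minimum level occurring in the tour between the first occurrence of $x$ and the first occurrence of $y$.

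Next I would preprocess the resulting level array for range-minimum queries. The key technical step is to build a sparse-table / prefix-minimum structure that, for every starting index $i$ and every power-of-two length $2^k$, stores the position of the minimum-level entry in the block $[i, i+2^k)$. There are $O(n\log n)$ such blocks, and they can be filled level by level in $k = 0,1,\dots,\log n$ rounds, where each round computes its entries from the previous round in $O(1)$ parallel time using the combining identity that a block of length $2^{k+1}$ is the better of its two halves of length $2^k$. This gives $O(\log n)$ total preprocessing time. The main subtlety for the EREW model is that filling these tables naively causes concurrent reads of shared entries; I would handle this by first broadcasting/duplicating the needed source entries using a balanced binary tree of copies (each broadcast to $p$ readers takes $O(\log p)$ EREW time), so the whole preprocessing stays within $O(\log n)$ time and $n$ processors without ever requiring a concurrent read or write.

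For the query phase I would show each LCA query reduces to one range-minimum lookup answerable in $O(1)$ time: given $x$ and $y$, look up their first-occurrence indices, take the interval between them, cover it by two overlapping precomputed blocks whose lengths are the largest power of two not exceeding the interval length, and return the vertex attaining the smaller of the two stored minima. Each query reads a constant number of table cells, so $k$ independent queries can be answered by $k$ processors in $O(1)$ time. Here concurrent reads are permitted (the model is CREW for queries), which is exactly why distinct queries may freely read overlapping table entries; I only need to guarantee no concurrent \emph{writes}, and since queries are read-only this is immediate.

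The main obstacle I expect is the EREW constraint during preprocessing: the sparse-table construction is inherently a pattern of many-to-one reads, and proving it stays in $O(\log n)$ time with only exclusive reads requires the explicit broadcasting argument sketched above rather than the one-line CRCW description usually given in textbooks. A secondary care point is keeping the Euler-tour list ranking and the first-occurrence indexing exclusive-read as well; both follow from standard $O(\log n)$-time EREW list-ranking and sorting primitives, so I would cite those and focus the detailed argument on the range-minimum table. Once these pieces are assembled, the stated time, processor, and model bounds follow directly, completing the proof.
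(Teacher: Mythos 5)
This theorem is not proved in the paper at all: it is imported verbatim as a known result of Schieber and Vishkin~\cite{SchieberV88}, whose actual construction is quite different from yours --- they decompose the tree via inorder numbers and ``inlabels'' determined by lowest set bits, obtaining $O(1)$-query LCA from an $O(n)$-space structure. Your route --- Euler tour, reduction of LCA to range-minimum over the level array, and a sparse table of power-of-two block minima --- is a correct and self-contained alternative derivation of the stated time and processor bounds, and your handling of the EREW issues is sound (indeed each sparse-table cell is read by only two cells at the next level, so duplicating each entry into two copies suffices and the general $O(\log p)$ broadcast is overkill; the only other care point, computing $\lfloor\log_2\ell\rfloor$ in $O(1)$ per query, is handled by a precomputed table of size $O(n)$). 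The one substantive difference worth flagging is space: your sparse table occupies $O(n\log n)$ cells, whereas Schieber--Vishkin achieve $O(n)$. The theorem as stated makes no space claim, so your proof does establish it; but the paper later asserts in Theorem~\ref{thm:DSx} that the tree structures built from Theorems~\ref{thm:eulerT} and~\ref{thm:lca-adv} have total size $O(n)$, which your construction would not support as a drop-in replacement. To recover $O(n)$ space you would need either the $\pm 1$-RMQ block decomposition (Berkman--Vishkin style) layered on top of your reduction, or the original Schieber--Vishkin scheme.
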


Using the standard simulation model~\cite{JaJa92} for converting a CRCW PRAM algorithm to EREW PRAM algorithm
at the expense of extra $O(\log n)$ factor in the time complexity, we get the following theorem.

\begin{theorem}
A rooted tree on $n$ vertices can be preprocessed in $O(\log n)$ time using $n$ processors on an EREW PRAM
such that any $k$ LCA queries can be answered in $O(\log n)$ time using $k$ processors on an EREW PRAM.
\label{thm:lca-adv}
\end{theorem}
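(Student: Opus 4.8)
The plan is to leave the preprocessing phase untouched and only re-engineer the query phase. The preprocessing guaranteed by Theorem~\ref{thm:lca} already runs on an EREW PRAM in $O(\log n)$ time using $n$ processors, so it meets the claimed bounds verbatim and nothing has to change there. The entire gap between Theorem~\ref{thm:lca} and the statement to be proved lies in the query phase: Theorem~\ref{thm:lca} answers $k$ queries in $O(1)$ time, but only on a CREW PRAM, whereas here we demand an EREW PRAM.

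First I would observe that the only feature of the CREW query algorithm that an EREW PRAM cannot execute directly is the \emph{concurrent read}: several of the $k$ query processors may simultaneously read the same precomputed cell of the tree's shared data. Since a CREW PRAM is a special case of a CRCW PRAM, the query algorithm falls within the scope of the standard PRAM-simulation result of~\cite{JaJa92}, which simulates one step of a $p$-processor CRCW (hence CREW) algorithm by $O(\log p)$ steps of a $p$-processor EREW algorithm.

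Next I would apply this simulation to each of the $O(1)$ parallel steps of the CREW query algorithm, taking $p = k$. Each concurrent read is replaced by sorting the $k$ read requests by target address and broadcasting the fetched values back to the requesting processors, which an EREW machine performs in $O(\log k)$ time; the exclusive writes of the original algorithm need no modification. Composing the $O(1)$ simulated steps yields an EREW query procedure using $k$ processors in $O(\log k)$ time. Because in our application the number of simultaneous queries never exceeds the $n$ available processors, we have $k \le n$ and therefore $\log k = O(\log n)$, giving query time $O(\log n)$ as required.

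The step I expect to be routine rather than obstructive is the simulation itself, as it is a black-box invocation of~\cite{JaJa92}; the only point meriting care is confirming that $k \le n$ wherever the theorem is invoked, so that the $\log k$ blow-up is genuinely $O(\log n)$ and the processor count stays within budget.
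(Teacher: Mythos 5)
Your proposal matches the paper's own argument: the paper likewise keeps the Schieber--Vishkin preprocessing as is and invokes the standard CRCW/CREW-to-EREW simulation of~\cite{JaJa92} on the query phase, paying the extra $O(\log n)$ factor in time. Your additional detail on how the simulation resolves concurrent reads and the check that $k \le n$ are fine elaborations but do not change the route.
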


%Further, using $O(n^2)$ processors we can make $n$ copies of the data structure requiring $O(n^2)$ space 
%such that any $k<n$ queries can be performed in EREW PRAM on different copy of the structure.
%Thus, we have the following theorem
%
%\begin{theorem}[Schieber and Vishkin~\cite{SchieberV88}]
%	A rooted tree on $n$ vertices can be preprocessed in $O(\log n)$ time using $n^2$ processors 
%	to produce a data structure of size $O(n^2)$ such that $k<n$ LCA queries can be processed in 
%	$O(1)$ time using $k$ processors on a EREW PRAM.
%	\label{thm:lca}
%\end{theorem}

We also use the following classical result to sort and hence report maximum/minimum of a set of $n$ numbers
on an EREW PRAM. 

\begin{theorem}[Cole~\cite{Cole88,Cole93}]
A set of $n$ numbers can be sorted using parallel merge sort in $O(\log n)$ time using $n$ processors on an EREW PRAM.
\label{thm:sort}
\end{theorem}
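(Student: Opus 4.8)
The statement is Cole's classical pipelined parallel merge sort bound, so the plan is to reconstruct its proof rather than invent a new one. The starting point is the binary \emph{merge tree}: a balanced tree with the $n$ input keys at its leaves, where each internal node $v$ is ultimately responsible for the sorted list $L(v)$ obtained by merging the sorted lists of its two children. Since the tree has depth $\log n$ and a single parallel merge of two sorted lists of total size $k$ costs $\Omega(\log k)$ time even with $k$ processors, merging the levels one after another would yield only an $O(\log^2 n)$ bound. The whole point of the proof is to \emph{pipeline} the $\log n$ merges so that a node begins contributing to its parent long before it has finished its own list, collapsing the cost to $O(\log n)$.

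First I would set up the stage-by-stage sampling schedule. Each node $v$ maintains, at every stage $s$, a sorted sample $U_s(v) \subseteq L(v)$, and passes upward a subsample $SAMP_s(v)$ consisting of every fourth element of $U_s(v)$. A node is declared \emph{active} once its children start feeding it samples and \emph{complete} once $U_s(v)=L(v)$; after completing, it spends three further stages sending up every fourth, then every second, then every element, so that the sample handed to its parent quadruples in three stages. The merge performed at $v$ in each stage takes the two incoming child subsamples and produces the next sample $U_{s+1}(v)$. I would then prove the geometric \emph{wave} claim: because samples quadruple every three stages, completion propagates from the leaves to the root at a constant number of stages per level, so the root completes after $O(\log n)$ stages.

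The heart of the proof, and the step I expect to be the main obstacle, is showing that each per-stage merge can be carried out in $O(1)$ time per element by \emph{maintaining cross-ranks} rather than recomputing them. The tool is the \emph{cover} property: a sorted list $J$ is a $c$-cover of a list $K$ if each gap between consecutive elements of $J$ (with $\pm\infty$ sentinels) contains at most $c$ elements of $K$. I would prove, by induction over stages, the key invariant that the sample a node holds is always a constant-cover of the samples it is about to merge (the \emph{good sampler} lemma), so that when the sample advances from stage $s$ to stage $s+1$ every newly inserted element lands in a gap containing only $O(1)$ previously ranked elements. Given the cross-ranks of $SAMP_s(v_1)$ into $SAMP_s(v_2)$ from the previous stage, each processor then locates the position of its new element in $O(1)$ comparisons, updating all ranks in $O(1)$ time. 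Establishing that the cover property is preserved under the quadrupling-sample step, and that the ranking information transfers correctly across a stage, is exactly the delicate combinatorial core of Cole's argument.

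Finally I would handle the resource accounting. Summing the sizes of the maintained samples over all nodes at any fixed stage gives a geometric series dominated by the nodes nearest the root, bounding the total active data at $O(n)$, so a single stage is executed by $n$ processors in $O(1)$ time; over $O(\log n)$ stages this is $O(\log n)$ time and $O(n\log n)$ work, which is comparison-optimal. The remaining EREW obligation is to verify that the stored ranks let every processor compute in advance the address it must read or write, so that no two processors ever touch the same cell simultaneously; since each element's position and rank are known before the stage begins, all reads and writes can be laid out exclusively, discharging the EREW requirement and completing the bound.
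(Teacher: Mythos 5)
The paper does not prove this statement at all: Theorem~\ref{thm:sort} is imported verbatim as a citation to Cole's work \cite{Cole88,Cole93} and used as a black box, so there is no in-paper proof to compare against. Your reconstruction is a faithful outline of Cole's actual argument --- the merge tree, the every-fourth-element sampling schedule with the quadrupling ramp-up after a node completes, the $c$-cover invariant enabling $O(1)$-time rank maintenance per stage, the geometric bound of $O(n)$ total active data per stage, and the $O(\log n)$ stage count. The one caveat is that what you correctly identify as ``the delicate combinatorial core'' --- proving that the cover property is preserved across the sampling/merging step and that cross-ranks transfer correctly --- is exactly the part you do not carry out, and it is where essentially all of the difficulty of Cole's proof lives; likewise the EREW adaptation (as opposed to CREW) requires the additional bookkeeping that occupies the follow-up paper \cite{Cole93}, which your last paragraph gestures at but does not discharge. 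As a plan it is accurate and complete in structure; as a proof it leaves the hard lemma as a stub, which is acceptable here only because the paper itself treats the entire theorem as a known external result.
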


\subsection{Implementation of $\cal D$}
\label{sec:dsD}
%We now describe the data structure maintained by our algorithm.
Given the DFS tree $T$ of the graph, we build the data structures 
described in Theorem~\ref{thm:eulerT} and Theorem~\ref{thm:lca-adv} on it. 
Now, given the post order traversal of $T$, 
we assign each vertex with a value equal to its rank in the post order traversal. 
Now, for each vertex $v$ we perform a parallel merge sort on the set of 
neighbors of the vertex using $degree(v)$ processors, requiring overall $m$ processors.
Thus, each vertex stores its neighbors \big(say $N(v)$\big) in the 
increasing order of their post order indexes. Due to absence of cross edges in a DFS tree $T$, 
the neighbors of every vertex would be sorted in the order they appear on 
the path from $root(T)$ to the vertex. Thus, the data structure $\cal D$ can be built 
in $O(\log n)$ time (for sorting) using $m$ processors on an EREW PRAM. 
This allows us to answer the following queries efficiently.

\begin{enumerate}
\item $Query\big(w,path(x,y)\big):$ among all the edges from $w$ that are incident on $path(x,y)$ %from $w$ 
					 in $G$, return an edge that is incident nearest to $x$ on $path(x,y)$.%
\item $Query\big(T(w),path(x,y)\big):$ among all the edges from $T(w)$ that are incident on $path(x,y)$ %from
					in $G$, return an edge that is incident nearest to $x$ on $path(x,y)$.
\item $Query\big(path(v,w),path(x,y)\big):$ among all the edges from $path(v,w)$ that are incident on 
					 $path(x,y)$ in $G$, return an edge that is incident nearest to $x$ on $path(x,y)$.%
\end{enumerate}

%Each of these queries can be performed in $O(\log n)$ using $|w|,|T(w)|$ and $max(|path(v.w)|,|path(x,y)|)$ processors respectively
%on a CREW PRAM as follows. 
We now describe how to perform a set of {\em independent} queries to $\cal D$ (recall definition of {\em independent} queries in Section~\ref{sec:prelim}) 
in $O(\log n)$ time on an EREW PRAM as follows.
We assign one processor for each vertex $u\in \{w\},T(w)$ or $path(x,y)$ 
(depending on the type of query) to perform the following in parallel.
For the vertex $u$, we would first perform a binary search for the range 
given by the post order indexes of $x$ and $y$ on $N(u)$ to find the required edge. 
However, since all vertices of $path(x,y)$ may not be ancestors of $u$, 
$N(u)$ may include some edges not on $path(x,y)$ too in the given range, 
corrupting the search results. Hence, the search would be performed on a 
modified range described as follows. Firstly, assuming $x$ is an ancestor of $y$, 
if $LCA(u,x)$ is not equal to $x$ the search would not be performed (as $x$ is not an ancestor of $u$).
Otherwise, the search is performed on the range given by post order indexes of $x$ and $LCA(u,y)$. 
However, in case of $Query\big(path(v,w),path(x,y)\big)$ we surely know 
that no vertex of $path(v,w)$ is a descendant of $path(x,y)$ (recall its definition in Section~\ref{sec:prelim}).
Thus, we reverse the roles of the paths taking maximum or minimum accordingly using $|path(x,y)|$ processors. 
Thus, each of these queries would require $O(\log n)$ time on an EREW PRAM.
Now, given a set of \emph{independent} queries on $\cal D$, each processor 
shall be using different $N(u)$ for finding the corresponding edge. 
Hence, all the queries can be performed simultaneously on different memory cells 
abiding the constraints of an EREW PRAM. 
Now, the highest or lowest edge 
among all the edges returned by different processors can be found by taking the 
maximum or minimum in $O(\log n)$ time on an EREW PRAM (Theorem~\ref{thm:sort}). 
Thus, we have the following theorem.

\begin{theorem}
%\newtheorem{ThmDS}{Theorem~\ref{thm:DS}}
%\begin{ThmDS}
The DFS tree $T$ of a graph can be preprocessed to build a data structure $\cal D$ 
of size $O(m)$ in $O(\log n)$ time using $m$ processors such that a set of independent queries of
types $Query\big(w,path(x,y)\big)$, $Query\big(T(w),path(x,y)\big)$ and $Query\big(path(v,w),path(x,y)\big)$  
on $T$ can be answered simultaneously in $O(\log n)$ time using $1,|T(w)|$ and $|path(x,y)|$ processors 
respectively on an EREW PRAM.
%\end{ThmDS}
\label{thm:DS}
\end{theorem}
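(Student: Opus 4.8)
The plan is to realize ${\cal D}$ as the collection of \emph{sorted} adjacency lists of $T$, augmented with the standard tree-preprocessing primitives, and to answer each query by a single guarded binary search per descendant vertex. First I would apply Theorem~\ref{thm:eulerT} to compute, in $O(\log n)$ time with $n$ processors, the post-order index, level, and subtree size of every vertex, and Theorem~\ref{thm:lca-adv} to preprocess $T$ for batched $O(\log n)$-time LCA queries on an EREW PRAM. Then, using Theorem~\ref{thm:sort}, I would sort each neighbor list $N(v)$ by post-order index, assigning $degree(v)$ processors to vertex $v$ so that the total is $m$ processors, the total space is $O(m)$, and all lists are sorted in parallel in $O(\log n)$ time. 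This is the entire construction.

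Correctness rests on the DFS structure: since $T$ has no cross edges, every neighbor of a vertex $u$ is an ancestor or a descendant of $u$, and the post-order index is monotone along any root-to-vertex path. Hence, after sorting, the neighbors of $u$ that lie on a fixed ancestor chain appear in $N(u)$ in the order in which they occur along that chain, and each subtree occupies a contiguous post-order interval. Consequently the neighbor of $u$ incident nearest to $x$ on an ancestor-descendant $path(x,y)$ is determined by a single binary search over a post-order interval.

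For $Query\big(w,path(x,y)\big)$ and $Query\big(T(w),path(x,y)\big)$ I would assign one processor to each descendant vertex $u$ (namely $u=w$, or each $u\in T(w)$). The subtlety I expect to be the \emph{main obstacle} is that not every vertex of $path(x,y)$ is an ancestor of $u$, so a naive search over the post-order range bounded by $x$ and $y$ may return an edge to a vertex off the path. I would correct this with LCA: taking $x$ to be the ancestor endpoint, if $LCA(u,x)\neq x$ then $u$ has no ancestor on $path(x,y)$ and contributes nothing; otherwise the ancestors of $u$ that lie on $path(x,y)$ are exactly those in the post-order interval bounded by $\mathrm{post}(x)$ and $\mathrm{post}\big(LCA(u,y)\big)$, and a binary search of $N(u)$ restricted to this interval returns in $O(\log n)$ time the candidate from $u$ incident nearest to $x$. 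Taking the extreme (nearest to $x$) over all $u$ via the min/max reduction of Theorem~\ref{thm:sort} costs a further $O(\log n)$ and yields the answer.

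For $Query\big(path(v,w),path(x,y)\big)$ the hypothesis that no vertex of $path(v,w)$ is a descendant of $path(x,y)$ makes the two paths play symmetric roles, so I would reverse them, assign one processor to each of the $|path(x,y)|$ vertices, search their sorted lists over the interval induced by $path(v,w)$, and again combine by min/max. Finally I would verify EREW compliance: because the query set is \emph{independent}, the descendant sets are pairwise disjoint, so distinct processors read distinct lists $N(u)$ and never touch the same cell; the batched LCA calls are EREW by Theorem~\ref{thm:lca-adv}, and the closing min/max reduction is EREW by Theorem~\ref{thm:sort}. Summing the costs, preprocessing runs in $O(\log n)$ time with $m$ processors, and a set of independent queries is answered in $O(\log n)$ time using $1$, $|T(w)|$, or $|path(x,y)|$ processors respectively, establishing the stated bounds.
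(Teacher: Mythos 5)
Your proposal is correct and follows essentially the same route as the paper: sort each adjacency list $N(v)$ by post-order index using $degree(v)$ processors, guard the per-vertex binary search with the $LCA(u,x)=x$ test and restrict the search interval to $\big[\mathrm{post}(x),\mathrm{post}(LCA(u,y))\big]$, reverse the roles of the two paths for the path-path query, and rely on independence of the descendant sets for EREW compliance. The construction, the handling of the off-path-neighbor subtlety, and the final min/max reduction all match the paper's argument.
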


\subsubsection*{Extension to handle multiple updates}
Consider a sequence of $k$ updates on graph, let $T^*_i$ represent the DFS tree computed by our algorithm 
after $i$ updates in the graph. We also denote the corresponding data structure $\cal D$ built on $T^*_i$ 
as ${\cal D}_i$. We now show that any query of the type $Query\big(w,path(x,y)\big)$, 
$Query\big(T^*_i(w),path(x,y)\big)$ and $Query\big(path(v,w),path(x,y)\big)$ on ${\cal D}_i$, 
can be performed on ${\cal D}_0$ if $path(x,y)$ is an ancestor-descendant path in $T$.
Recall that each such query is performed by querying the $N(x)$ corresponding to each 
descendant vertex $x$ separately, whose results are later combined. 
Thus, even if $T^*_i(w)$ is not a subtree of $T$ or $path(v,w)$ is not an ancestor-descendant path of $T$,
it does not affect the processing of the query, as long as $path(x,y)$ is an ancestor-descendant path of $T$. 

The only extra procedure to be performed to answer such queries correctly using ${\cal D}_0$, 
is to update the $N(x)$ for any vertex $x$ whose adjacency list is affected by the graph update. 
For insertion/deletion of a vertex $x$, we simply add/delete the corresponding list $N(x)$.
For insertion of vertex we additionally sort it according to post order traversal of $T$ 
using $n$ processors in $O(\log n)$ time. 
Note that we do not need to update the $N(y)$ for each neighbor $y$ of $x$, 
as the query path being an ancestor-descendant path of both $T^*_i$ and $T$ would not contain $x$.
However, on insertion of a vertex $x$, such a query can be made with the entire path representing only $x$.
Hence, we assign the highest post order number to $x$, and add it to the end of $N(y)$ for each neighbor $y$ of $x$.
This can be done using $n$ processors in $O(1)$ time on an EREW PRAM. 
Insertion/deletion of single edges can be taken care of individually by each search 
procedure taking $O(\log n+k)$ time to perform search after $k$ updates. Thus, we have the following theorem

\begin{theorem}
The data structure $\cal D$ built on the DFS tree $T$ of a graph $G$, 
can be used to perform a set of independent queries on ${\cal D}_k$ of types 
$Query\big(w,path(x,y)\big)$, $Query\big(T^*_k(w),path(x,y)\big)$ and 
$Query\big(path(v,w),path(x,y)\big)$, in $O(\log n+k)$ 
time using $1,|T^*_k(w)|$ and $|path(x,y)|$ processors respectively on an EREW PRAM,
if $path(x,y)$ is an ancestor-descendant path of $T$.
\end{theorem}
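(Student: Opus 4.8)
The plan is to show that every query on ${\cal D}_k$ can be answered by the same per-vertex search machinery of Theorem~\ref{thm:DS} run on ${\cal D}_0$, with the $k$ updates absorbed cheaply. Recall from Theorem~\ref{thm:DS} that each query is decomposed into independent binary searches on the lists $N(\cdot)$ of the descendant vertices, every search restricting to the post-order range cut out by $path(x,y)$, the results combined by a min/max. The first thing I would establish is the key extension insight: the correctness and cost of each individual search depend only on $path(x,y)$ being an ancestor--descendant path of $T$, and not on the descendant set being a subtree or an ancestor--descendant path of $T$. Thus $T^*_k(w)$ may be a subtree of $T^*_k$ rather than of $T$, and $path(v,w)$ need not be an ancestor--descendant path of $T$; these structures only determine which lists are searched, so the decomposition and the processor counts of Theorem~\ref{thm:DS} carry over verbatim provided $path(x,y)$ is ancestor--descendant in $T$.

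Next I would make the lists $N(\cdot)$ of ${\cal D}_0$, which are built on the original graph, agree with the graph after $k$ updates. For a vertex insertion of $x$ I would create $N(x)$ sorted in post-order of $T$ in $O(\log n)$ time with $n$ processors (Theorem~\ref{thm:sort}); a deletion of $x$ merely drops its list. I would then argue that no other neighbour list needs editing: because $path(x,y)$ is ancestor--descendant in both $T$ and $T^*_k$, it lies in $V(T)\cap V(T^*_k)$ and can contain neither an inserted vertex (absent from $T$) nor a deleted vertex (absent from $T^*_k$). By the range structure underlying Theorem~\ref{thm:DS}, any entry a search returns must lie on $path(x,y)$, so a stale entry of a deleted vertex is never returned and an inserted vertex is never a search target. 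The sole exception is a query whose target path is the single freshly inserted vertex $x$; I handle it by giving $x$ the maximum post-order index and appending it to the end of each neighbour's list in $O(1)$ time with $n$ processors, which leaves every genuine $T$-path query untouched since its range lies within the original post-order values.

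At most $k$ single-edge insertions and deletions then remain; I keep them in a side structure indexed by endpoint and fold them into each per-vertex search. After the $O(\log n)$ binary search locates the top of the valid range in $N(u)$, I scan downward past at most $k$ consecutively deleted edges to the nearest surviving one, and separately test the $\le k$ inserted edges at $u$ against the range, retaining the edge nearest to $x$; this costs $O(k)$ extra, so one search runs in $O(\log n + k)$. The searches are independent and touch disjoint lists, so they proceed simultaneously on an EREW PRAM with the processor counts of Theorem~\ref{thm:DS}, and the final min/max combination is $O(\log n)$ by Theorem~\ref{thm:sort}, yielding the claimed $O(\log n + k)$ time using $1$, $|T^*_k(w)|$ and $|path(x,y)|$ processors respectively.

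The step I expect to be the main obstacle is the safety argument that leaving neighbour lists stale causes no error. It rests entirely on the invariant that $path(x,y)$ is ancestor--descendant in $T$, which forces the searched range of each $N(u)$ to single out exactly the true $T$-ancestors of $u$ lying on that path, together with the observation that an updated vertex can never sit on a path that is simultaneously a path of $T$ and of $T^*_k$; carving out the single-vertex-path corner case for an inserted vertex without perturbing the ordinary queries is the subtle part.
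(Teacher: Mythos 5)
Your proposal is correct and follows essentially the same route as the paper: reduce each query on ${\cal D}_k$ to the per-vertex binary searches of Theorem~\ref{thm:DS} on the lists of ${\cal D}_0$, observing that only $path(x,y)$ need be an ancestor--descendant path of $T$; handle vertex updates by adding/deleting whole lists, with the inserted vertex given the maximum post-order index and appended to its neighbours' lists to cover the singleton-path corner case; and absorb the at most $k$ edge updates into each search at $O(k)$ extra cost. Your elaborations (why stale entries for deleted vertices are never returned, and how the $O(k)$ overhead is charged inside each search) merely spell out details the paper states tersely.
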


\subsection{Implementation of operations on $T$}
As described earlier several properties of $T$ can be reported in $O(1)$ time 
using the data structures described in Theorem~\ref{thm:eulerT} and Theorem~\ref{thm:lca-adv}.

\begin{enumerate}
	\item \textbf{Determine whether an edge $(x,y)$ is a back edge in $T$}\\
	This query can easily be answered by finding $l= LCA(x,y)$. 
	If $l=x$ or $l=y$ the edge $(x,y)$ is a back edge in $T$.
	Hence, reporting whether an edge is a back edge can be reduced to finding $LCA$
	 of two vertices in $T$.
	\item \textbf{Finding length of a path}\\
	Compare the level of the two end points as reported by structure in Theorem~\ref{thm:eulerT}.
	\item \textbf{Given $x\in T(y)$, find child $y'$ of $y$ such that $x\in T(y')$}\\
	For each vertex $v$ of the graph perform the following in parallel (using $|T(y)|$ processors), 
	if $par(v)$ is $y$ and $LCA(v,x)$ is $v$ then report $v$. This query too reduces to 
	finding $LCA$ of two vertices in $T$.
	\item \textbf{Determine whether $x$ lies on $path(y,z)$, where $y$ is ancestor of $z$}\\
	If $LCA(x,z)=x$ and $LCA(x,y)=y$, then $x$ lies on $path(y,z)$.
	\item \textbf{Subtrees hanging from a $path(x,y)$}\\
	For each vertex $v$ of the graph perform the following in parallel (using total $n$ processors),
	if $LCA(v,y)=par(v)$ then $T(v)$ is a subtree hanging from the path.
\end{enumerate}

The number of processors required for the last three queries is equal to the size of the corresponding component,
remaining queries requiring a single processor each.
Thus, using Theorem~\ref{thm:sort} and procedures described above we have the following theorem

\begin{theorem}
%\newtheorem{ThmDSx}{Theorem~\ref{thm:DSx}}
%\begin{ThmDSx}
The DFS tree $T$ of a graph can be preprocessed to build a data structure of size $O(n)$ in $O(\log n)$ time 
using $n$ processors such that following queries can be answered in parallel in $O(\log n)$ time on an EREW PRAM
\begin{itemize}
\item LCA of two vertices, size of a subtree, testing if an edge is back edge and length of a path
using a single processor per query.
\item Finding vertices on a path, subtrees hanging from a path, child of a vertex containing a given vertex,  
highest/lowest edge among $k$ edges, using $k$ processors per query, where k is the size of the corresponding component.
\end{itemize}
\label{thm:DSx}
%\end{ThmDSx}
\end{theorem}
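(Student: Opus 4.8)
The plan is to assemble the claimed structure by layering the two preprocessing primitives of Theorems~\ref{thm:eulerT} and~\ref{thm:lca-adv} and then realizing each query through the LCA-based reductions already exhibited for queries~1--5, with order statistics delegated to Theorem~\ref{thm:sort}. First I would run the Euler-tour computation of Theorem~\ref{thm:eulerT} to tabulate, for every vertex, its post-order index, level, and descendant count, and in parallel build the LCA oracle of Theorem~\ref{thm:lca-adv}. Each primitive runs in $O(\log n)$ time on $n$ processors and occupies $O(n)$ space, so their union respects the preprocessing budget of the statement.

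Next I would dispatch the single-processor queries against these tables. The size of a subtree $T(x)$ is read off directly from the descendant count in $O(1)$ time; the length of $path(y,z)$ is the combination of the levels of $y$, $z$, and $LCA(y,z)$; and an edge $(x,y)$ is a back edge precisely when $LCA(x,y)\in\{x,y\}$. Each reduces to at most one LCA call, so Theorem~\ref{thm:lca-adv} bounds it by $O(\log n)$ time on one processor. For the $k$-processor queries I would assign one processor per candidate vertex $v$ and evaluate the membership predicates already stated: $v$ lies on $path(y,z)$ iff $LCA(v,z)=v$ and $LCA(v,y)=y$; $T(v)$ hangs from $path(x,y)$ iff $LCA(v,y)=par(v)$; and $v$ is the child of $y$ on the path to $x$ iff $par(v)=y$ and $LCA(v,x)=v$. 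These are $k$ independent LCA queries, served in $O(\log n)$ EREW time by Theorem~\ref{thm:lca-adv}, after which the highest or lowest incident edge is obtained as the minimum or maximum of the relevant post-order indices via Theorem~\ref{thm:sort}, again in $O(\log n)$ time on $k$ processors.

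The main obstacle is the EREW restriction rather than the reductions themselves. A batch of LCA queries that shares an argument (say every $v$ computing $LCA(v,z)$), or several processors reading a common level cell, would force concurrent reads, which EREW forbids; this is exactly why I would invoke Theorem~\ref{thm:lca-adv}, whose statement already absorbs the CRCW-to-EREW simulation at the cost of an extra $\log n$ factor, in place of the CREW oracle of Theorem~\ref{thm:lca}. The one point left to check is that every listed query genuinely decomposes into a batch of independent LCA queries followed by a single min/max reduction and introduces no other shared read; inspecting the five reductions above confirms this, giving the stated $O(\log n)$ EREW bounds for all queries and completing the proof of Theorem~\ref{thm:DSx}.
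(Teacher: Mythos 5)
Your proposal is correct and follows essentially the same route as the paper: build the Euler-tour tables of Theorem~\ref{thm:eulerT} together with the EREW LCA oracle of Theorem~\ref{thm:lca-adv}, reduce each listed query to a constant number of (or a batch of $k$ independent) LCA/level/descendant-count lookups exactly as in the paper's five reductions, and finish the highest/lowest-edge selection with the min/max reduction of Theorem~\ref{thm:sort}. The only difference is cosmetic (you compute path length via the LCA rather than by directly comparing levels of the ancestor--descendant endpoints), so nothing further is needed.
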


\subsection{Analysis}
Using these data structures we can now analyze the time required by the 
{\em reduction} algorithm on an EREW PRAM.
Since the queries on $\cal D$ and $LCA$ queries 
on $T$ can be answered  in $O(\log n)$ time using $n$ processors 
as described above,
%(see Theorem~\ref{thm:DS} and Theorem~\ref{thm:DSx}),
Theorem~\ref{thm:convertApp} reduces to the following theorem.

%We first analyze the time required by the {\em reduction} algorithm. 
%Since the queries on the data structure $\cal D$ and $LCA$ queries 
%on $T$ can be answered  in $O(\log n)$ time using $n$ processors 
%(see Theorem~\ref{thm:DS} and Theorem~\ref{thm:DSx}),
%Theorem~\ref{thm:convertApp} reduces to the following theorem.

\begin{theorem}
Given the DFS tree $T$ of a graph and the data structure $\cal D$ built on it, 
any update on the graph can be reduced to independently rerooting disjoint subtrees 
of the DFS tree using $n$ processors in $O(\log n)$ time on an EREW PRAM.
\label{thm:convert}
\end{theorem}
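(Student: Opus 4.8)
The plan is to assemble this theorem by combining the structural reduction of Theorem~\ref{thm:convertApp} with the parallel query primitives established in Theorems~\ref{thm:lca-adv}, \ref{thm:DS}, and \ref{thm:DSx}. Theorem~\ref{thm:convertApp} already reduces an arbitrary graph update to independently rerooting disjoint subtrees of $T$, at the cost of $O(1)$ sets of independent queries on $\cal D$, $O(1)$ sets of LCA queries on $T$, and the various auxiliary tree operations (testing back edges, locating the child subtree containing a vertex, listing subtrees hanging from a path, etc.). So it suffices to show that each such set can be evaluated in $O(\log n)$ time using at most $n$ processors on an EREW PRAM, and then to use the fact that only $O(1)$ such sets are ever required.

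First I would handle a single set of independent queries on $\cal D$. By Theorem~\ref{thm:DS}, queries of types $Query(w,path(x,y))$, $Query(T(w),path(x,y))$, and $Query(path(v,w),path(x,y))$ are answered simultaneously in $O(\log n)$ time using $1$, $|T(w)|$, and $|path(x,y)|$ processors per query respectively. The decisive point is that the set is \emph{independent}, so the descendant vertex sets ($\{w\}$, $T(w)$, or $path(v,w)$) of distinct queries are pairwise disjoint. Disjointness yields two things at once: the total processor count sums to at most $n$, and --- since each query reads only the sorted adjacency lists $N(u)$ of its own descendant vertices $u$ --- no two queries ever touch the same memory cell, so the simultaneous evaluation is legal on an EREW PRAM. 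Reducing the per-query answers to the single overall highest/lowest edge costs an additional $O(\log n)$ time and $n$ processors by Theorem~\ref{thm:sort}.

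Next I would dispatch the LCA queries and the auxiliary tree operations. A set of up to $n$ LCA queries is answered in $O(\log n)$ time with $n$ processors directly by Theorem~\ref{thm:lca-adv}, and Theorem~\ref{thm:DSx} supplies the remaining operations (back-edge tests, path lengths, child-containing-a-vertex, subtrees hanging from a path, highest/lowest edge among $k$ candidates) within the same $O(\log n)$ time and $n$-processor budget, once more because the components being processed are disjoint and hence their processor demands sum to at most $n$. Since Theorem~\ref{thm:convertApp} needs only $O(1)$ such sets in total, executing them sequentially incurs $O(1)\cdot O(\log n)=O(\log n)$ time with $n$ processors, which is exactly the claimed bound.

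The main obstacle --- more a matter of care than of depth --- is verifying EREW-legality jointly with the global processor budget for the simultaneous independent queries. One must check that the definition of \emph{independent} queries (disjoint descendant vertices) is precisely what simultaneously guarantees collision-free access into the adjacency lists $N(\cdot)$ and a summed processor count bounded by $n$; once this is pinned down, the remainder is routine accounting over $O(1)$ rounds each costing $O(\log n)$.
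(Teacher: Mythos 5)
Your proposal is correct and follows essentially the same route as the paper, which derives Theorem~\ref{thm:convert} directly by combining Theorem~\ref{thm:convertApp} with the $O(\log n)$-time evaluation of each of the $O(1)$ sets of independent $\cal D$-queries and LCA/tree queries via Theorems~\ref{thm:DS}, \ref{thm:lca-adv}, and \ref{thm:DSx}. Your added care about why independence simultaneously bounds the processor count by $n$ and guarantees EREW-legality is exactly the (implicit) content of the paper's one-line argument.
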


\subsubsection*{Implementation details}
All operations required for each stage of our rerooting algorithm to reroot a subtree $\tau$, 
can be performed in $O(\log n)$ time using $|\tau|$ processors using Theorem~\ref{thm:DS} and Theorem~\ref{thm:DSx} as follows.
Both $root(\tau_c)$ and vertex $v_H$ required by our algorithm while processing a component $c$ can be computed in parallel 
by comparing the size of each subtree using $|c|$ processors. Adding a path $p$ to $T^*$ essentially involves
marking the corresponding edges as tree edges, which can be performed by informing the vertices on $p$.
All the other operations of the rerooting algorithm (refer to pseudo code in Appendix~\ref{appn:pseudocode}) 
are trivially reducible to the operations described in Theorem~\ref{thm:DSx}. 
Since our rerooting algorithm requires $\log n$ phases each having $\log n$ stages, we get the following theorem for 
rerooting disjoint subtrees using our rerooting algorithm.
 
\begin{theorem}
	Given an undirected graph with the data structure $\cal D$ build on its DFS tree, 
	independently rerooting disjoint subtrees of the DFS tree can be performed in $O(\log^3 n)$ 
	time using $n$ processors on an EREW PRAM model. 
	\label{thm:reroot}
\end{theorem}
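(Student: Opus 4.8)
The plan is to assemble Theorem~\ref{thm:reroot} from the structural result of Theorem~\ref{thm:rerootG} together with the parallel implementation guarantees established in Section~\ref{sec:p_imp}. Theorem~\ref{thm:rerootG} already tells us that rerooting a subtree $\tau$ of $T$ at any new root $r'\in\tau$ can be accomplished by \emph{sequentially} performing $O(\log^2 n)$ sets of independent queries on $\cal D$, where each set contains $O(|\tau|)$ queries, plus local computation that touches only $\tau$. So the burden of this theorem is purely to cost out each of those $O(\log^2 n)$ query-sets and the accompanying local work in the EREW PRAM model, and then to check that rerooting the disjoint subtrees \emph{simultaneously} still fits within $n$ processors.

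First I would recall that the $\log^2 n$ factor decomposes as $\log n$ phases, each with $\log n$ stages (as set up in Section~\ref{sec:pdfs}), and that within any single stage a component $c$ undergoes $O(1)$ traversals of the types analyzed there, each consisting of $O(1)$ independent query-sets on $\cal D$ and $O(1)$ batches of tree operations. Next I would invoke Theorem~\ref{thm:DS} to charge each independent query-set on $\cal D$ at $O(\log n)$ time using at most $|c|$ processors, and Theorem~\ref{thm:DSx} to charge each batch of tree operations (LCA, path lengths, subtrees hanging from a path, child-subtree membership, highest/lowest edge among $k$ edges) at $O(\log n)$ time using $|c|$ processors. The preprocessing cost of $O(\log n)$ time with $n$ processors to build these structures (Theorems~\ref{thm:eulerT}, \ref{thm:lca-adv}, \ref{thm:DS}, \ref{thm:DSx}) is a one-time additive cost. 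Multiplying $O(1)$ traversals per stage by $O(\log n)$ time per query-set/operation gives $O(\log n)$ time per stage, and the two nested $\log n$ loops then yield the claimed $O(\log^2 n)\cdot O(\log n)=O(\log^3 n)$ bound.

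The processor accounting is the point requiring the most care. A component $c$ is allotted $|c|$ processors, and within a stage the parallel queries are \emph{independent} in the sense of Section~\ref{sec:prelim}: their descendant vertex-sets are disjoint subtrees or segments of the currently processed subtree. I would confirm, citing the independence argument already given in the Analysis subsection preceding Theorem~\ref{thm:rerootG}, that the disjointness lets each processor operate on a distinct $N(u)$, so the total processor count across all simultaneously-active queries in one set never exceeds $|c|$, respecting the EREW restriction. Because the distinct components of the unvisited graph are vertex-disjoint and each gets processors proportional to its own size, summing $\sum_c |c|\le n$ shows that rerooting all disjoint subtrees concurrently uses $n$ processors overall, which is exactly the stated resource bound.

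The main obstacle I anticipate is not the arithmetic but verifying that \emph{every} operation appearing in the various traversals --- including the intricate heavy-subtree and special-case traversals of Section~\ref{sec:heavyP} --- genuinely reduces to the primitives licensed by Theorem~\ref{thm:DS} and Theorem~\ref{thm:DSx} within a constant number of $O(\log n)$-time rounds, and that no hidden serialization (e.g.\ chasing a path vertex-by-vertex, or a query-set whose descendant sets secretly overlap) sneaks in. Concretely I would check that computing $v_H$ and $root(\tau_c)$ reduces to a parallel max over subtree sizes (Theorem~\ref{thm:sort}), that appending a path $p^*$ to $T^*$ is a parallel edge-marking over the vertices of $p^*$, and that each edge-selection step ($(x_p,y_p)$, $(x_d,y_d)$, $(x_r,y_r)$, the cover-traversal edges) is an instance of $Query(T(w),path(x,y))$ or $Query(path(v,w),path(x,y))$ followed by a global min/max, all of which Theorem~\ref{thm:DSx} handles in $O(\log n)$ time with $|c|$ processors. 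Once this case-by-case reduction is confirmed, the product of the loop depths with the per-operation cost closes the proof.
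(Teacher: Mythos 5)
Your proposal is correct and follows essentially the same route as the paper: it charges each of the $O(\log^2 n)$ stages at $O(\log n)$ time via Theorem~\ref{thm:DS} for the independent query-sets on $\cal D$ and Theorem~\ref{thm:DSx} for the tree operations (computing $v_H$ and $root(\tau_c)$ by a parallel max over subtree sizes, marking paths added to $T^*$, and reducing edge selections to queries followed by a min/max), with the same per-component processor allotment summing to $n$. The paper's own argument is just a terser version of this same accounting.
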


Using Theorem~\ref{thm:convert}, Theorem~\ref{thm:reroot} and Theorem~\ref{thm:DS}, we can prove our main result as follows.

\newtheorem*{ThmMainResult}{Theorem~\ref{main-result}}
\begin{ThmMainResult}
Given an undirected graph and its DFS tree, it can be preprocessed to
	 build a data structure of size $O(m)$ in $O(\log n)$ time using $m$ processors on an EREW PRAM 
	 such that for any update in the graph, a DFS tree of the updated graph can be computed in 
	 $O(\log^3 n)$ time using $n$ processors on an EREW PRAM.
\end{ThmMainResult}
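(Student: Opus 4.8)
The plan is to assemble the main result directly from the three building blocks established earlier in the paper, namely Theorem~\ref{thm:convert} (reduction of any update to rerooting disjoint subtrees), Theorem~\ref{thm:reroot} (parallel rerooting of disjoint subtrees), and Theorem~\ref{thm:DS} (construction of the data structure $\cal D$). The argument is essentially a bookkeeping composition of the preprocessing cost and the per-update cost, so no new combinatorial work is needed; the substance has all been discharged in the preceding sections.

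First I would handle the \emph{preprocessing} claim. Given the graph $G$ and its DFS tree $T$, Theorem~\ref{thm:DS} builds the data structure $\cal D$ of size $O(m)$ in $O(\log n)$ time using $m$ processors on an EREW PRAM. In parallel (or sequentially, absorbing the constant), Theorem~\ref{thm:DSx} builds the auxiliary tree structures (post-order numbering, levels, descendant counts, and the LCA structure of Theorem~\ref{thm:lca-adv}) in $O(\log n)$ time using $n$ processors. Since $n \le m$ for a connected graph (recall the dummy-root augmentation from Section~\ref{sec:prelim} keeps the graph connected), the combined preprocessing uses $m$ processors, runs in $O(\log n)$ time, and produces a data structure of total size $O(m)$, as required.

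Next I would handle the \emph{update} claim. By Theorem~\ref{thm:convert}, any update (edge/vertex insertion or deletion) is reduced, using $n$ processors in $O(\log n)$ time, to the task of independently rerooting a collection of disjoint subtrees of $T$. By Theorem~\ref{thm:reroot}, this rerooting is performed in $O(\log^3 n)$ time using $n$ processors on an EREW PRAM. Adding the $O(\log n)$ reduction cost is dominated by the $O(\log^3 n)$ rerooting cost, so the updated DFS tree $T^*$ is computed in $O(\log^3 n)$ time using $n$ processors. Correctness of $T^*$ follows from the correctness argument at the end of Section~\ref{sec:pdfs}, where it is shown that every traversal respects the components property and the phase/stage size invariants, guaranteeing that $T^*$ is a genuine DFS tree of the updated graph.

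The main obstacle is not in this final composition, which is routine, but in ensuring the \emph{processor and space accounting is consistent} across the assembled theorems: the preprocessing is stated with $m$ processors while the update is stated with only $n$ processors, and one must verify that the update step never needs to touch or rebuild the full $O(m)$-size structure $\cal D$ with more than $n$ processors. This is exactly what the analysis in Section~\ref{sec:pdfs} secures: each stage issues only \emph{independent} sets of at most $n$ queries on $\cal D$, and by Theorem~\ref{thm:DS} such an independent set is answered with $n$ processors (one per descendant vertex) in $O(\log n)$ time without concurrent access. Thus the $O(m)$-processor budget is confined to the one-time preprocessing, while each update stays within the $n$-processor, $O(\log^3 n)$-time envelope, and the theorem follows.
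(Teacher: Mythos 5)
Your proposal is correct and follows essentially the same route as the paper, which proves the main result by directly composing Theorem~\ref{thm:DS} (preprocessing $\cal D$ with $m$ processors in $O(\log n)$ time), Theorem~\ref{thm:convert} (reducing any update to rerooting disjoint subtrees), and Theorem~\ref{thm:reroot} (performing the rerooting in $O(\log^3 n)$ time with $n$ processors). Your additional remarks on the processor-budget accounting and the independence of query sets are consistent with the paper's analysis in Section~\ref{sec:pdfs} and Section~\ref{sec:p_imp}.
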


Now, in order to prove our result for Parallel Fully Dynamic DFS and Parallel Fault Tolerant DFS  
we need to first build the DFS tree of the original graph from scratch during preprocessing stage.
This can be done using the static DFS algorithm~\cite{Tarjan72} or any advanced deterministic parallel 
algorithm~\cite{AggarwalA88,GoldbergPV93}. 
Thus, for processing any update we always have the current DFS tree built 
(either originally during preprocessing or by the update algorithm).
We can thus build the data structure $\cal D$ using Theorem~\ref{thm:DS} 
reducing Theorem~\ref{main-result} to the following theorem.

\begin{theorem}[Parallel Fully Dynamic DFS]
	Given an undirected graph, we can maintain its DFS tree under any arbitrary online 
	sequence of vertex or edge updates in $O(\log^3 n)$ time per update using $m$ processors 
	in parallel EREW model.
\end{theorem}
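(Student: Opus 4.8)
The plan is to reduce the fully dynamic problem to a repeated application of Theorem~\ref{main-result}, maintaining throughout the update sequence the invariant that at the start of each update we hold both the current DFS tree $T$ of the graph and the data structure $\mathcal{D}$ built on $T$. First I would establish the base case during a one-time preprocessing stage: compute an initial DFS tree of the input graph from scratch, using either the classical static algorithm~\cite{Tarjan72} or a deterministic parallel DFS routine~\cite{AggarwalA88,GoldbergPV93}, and then build $\mathcal{D}$ on it via Theorem~\ref{thm:DS} in $O(\log n)$ time using $m$ processors. This cost is incurred only once and lies outside the per-update bound.

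For the inductive step I would handle each update in two phases. Assuming the invariant holds, the first phase invokes Theorem~\ref{main-result}: since we already possess $T$ and $\mathcal{D}$, the DFS tree $T^*$ of the updated graph can be computed in $O(\log^3 n)$ time using $n$ processors on an EREW PRAM. The key observation is that $\mathcal{D}$ is now stale, having been built on the old tree $T$ rather than on $T^*$, so the second phase rebuilds $\mathcal{D}$ on the freshly computed $T^*$ using Theorem~\ref{thm:DS}, at a cost of $O(\log n)$ time with $m$ processors. After these two phases we again hold the current DFS tree together with its companion data structure, re-establishing the invariant and closing the induction over the online sequence.

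Summing the two phases gives a per-update cost of $O(\log^3 n) + O(\log n) = O(\log^3 n)$ time, while the processor requirement is $\max(n,m) = m$ (the rebuild of $\mathcal{D}$ dominates the processor count, the rerooting dominates the time); note that $m \geq n$ because of the dummy root $r$ joined to every vertex. Since all invoked primitives --- the reduction of Theorem~\ref{thm:convert}, the rerooting of Theorem~\ref{thm:reroot}, and the construction of $\mathcal{D}$ in Theorem~\ref{thm:DS} --- run on an EREW PRAM, the whole scheme stays within the EREW model, yielding the claimed bound.

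The step that needs the most care is keeping the invariant intact across an arbitrary, adversarial online sequence spanning all four update types (edge or vertex insertion and deletion): one must verify that the tree returned in the first phase is a genuine DFS tree of the post-update graph --- which is precisely the correctness already argued for the rerooting procedure --- and that the second phase rebuilds $\mathcal{D}$ from this tree rather than from any stale or intermediate structure. Everything else is a routine accounting of the time and processor counts already supplied by the cited theorems.
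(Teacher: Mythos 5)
Your proposal matches the paper's own argument: a one-time preprocessing step builds the initial DFS tree from scratch (via the static or deterministic parallel algorithm), and then each update is handled by building $\cal D$ on the current tree with Theorem~\ref{thm:DS} ($O(\log n)$ time, $m$ processors) and invoking Theorem~\ref{main-result} ($O(\log^3 n)$ time, $n$ processors), maintaining the invariant that the current DFS tree is always available for the next round. The accounting and the observation that rebuilding $\cal D$ is what forces the $m$-processor requirement are exactly as in the paper.
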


However, if we limit the number of processors to $n$, our fully dynamic algorithm cannot update the DFS tree 
in $\tilde{O}(1)$ time, only because updating $\cal D$ in $\tilde{O}(1)$ time requires $O(m)$ processors 
(see Theorem~\ref{thm:DS}).
%Thus, we build the data structure $\cal D$ using Theorem~\ref{thm:DS} during preprocessing itself,
%using Theorem~\ref{main-result} to get the following result.
%\begin{theorem}[Parallel Fault Tolerant DFS]
%	 Given an undirected graph, it can be preprocessed to
%	 build a data structure of size $O(m)$ such that for any update in the graph, 
%	 a DFS tree of the updated graph can be computed in 
%	 $O(\log^3 n)$ time using $n$ processors on an EREW PRAM.
%\end{theorem}
%\begin{remark} 
%Limiting the updates to edge/vertex failures we get the Fault Tolerant algorithm. 
%This also gives an $O(n\log^3 n)$ time sequential algorithm for updating a DFS tree after
%a single update in the graph achieving similar bounds as Baswana et al.~\cite{BaswanaCCK15}. However, our algorithm
%uses much simpler data structure $\cal D$ at the cost of a more complex algorithm.
%\end{remark} 
Thus, we build the data structure $\cal D$ using Theorem~\ref{thm:DS} during preprocessing itself,
and attempt to use it to handle multiple updates.

\subsection*{Extending to multiple updates}
Consider a sequence of $k$ updates on graph, let $T^*_i$ represent the DFS tree computed by our algorithm 
after $i$ updates in the graph. We also denote the corresponding data structure $\cal D$ built on $T^*_i$ as ${\cal D}_i$.
Now, consider any stage of our algorithm while building the DFS tree $T^*_i$.
% in every stage of our algorithm, for each component $O(1)$ disjoint
For each component in parallel, $O(1)$ ancestor-descendant paths of $T^*_{i-1}$ are added to $T^*_{i}$. 
Thus, any ancestor-descendant path $p$ of $T^*_i$, is built by adding $O(\log^2 n)$ such paths of $T^*_{i-1}$,
corresponding to $O(\log n)$ phases each having $O(\log n)$ stages.
Hence, $p$ is union of $O(\log^2 n)$ ancestor-descendant paths of $T^*_{i-1}$, say $p_1,...,p_k$.

Using this reduction, it can be shown that a set of independent queries on path $p$ in ${\cal D}_i$,
can be reduced to $O(\log^2 n)$ sets of independent queries on corresponding $O(\log^2 n)$ paths $p_1,...,p_k$
on ${\cal D}_{i-1}$ (see Section~\ref{sec:dsD}).  
Again, each of these paths $p_1,...,p_k$, being an ancestor-descendant path of $T^*_{i-1}$,
is a union of $O(\log^2n)$ ancestor-descendant paths of $T^*_{i-2}$, and so on.
Thus, any set of independent queries on ${\cal D}_i$ can be performed by $O(\log^{2(i-1)} n)$ sets of 
independent queries on $\cal D$, which takes $O(\log^{2i-1})$ time on an EREW PRAM using $n$ processors when $k\leq \log n$ 
(see Theorem~\ref{thm:DS} and Section~\ref{sec:dsD}). The other data structures on $T^*_{i-1}$ can be built in $O(\log n)$ time using $n$
processors (see Theorem~\ref{thm:DSx}). This allows our algorithm to build the DFS tree $T^*_i$ from $T^*_{i-1}$ 
using $\cal D$ in $O(\log^{2i+1})$ time on an EREW PRAM using $n$ processors (see Theorem~\ref{thm:rerootG}). 
Thus, for a given set of $k$ updates we build each $T^*_i$ one by one using $T^*_{i-1}$ and $\cal D$, 
to get the following theorem.

\begin{theorem}[Parallel Fault Tolerant DFS]
	 Given an undirected graph, it can be preprocessed to
	 build a data structure of size $O(m)$ such that for any set of $k$ ($\leq \log n$) updates in the graph, 
	 a DFS tree of the updated graph can be computed in $O(k\log^{2k+1} n)$ time using 
	 $n$ processors on an EREW PRAM.
\end{theorem}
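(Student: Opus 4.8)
The plan is to reuse a single data structure $\cal D$, built once on the original DFS tree $T=T^*_0$ during preprocessing, and to answer every query needed while constructing the intermediate trees $T^*_1,\dots,T^*_k$ by reducing it to a query on this fixed $\cal D$. First I would carry out the preprocessing: compute $T$ from scratch with a parallel static DFS algorithm~\cite{AggarwalA88,GoldbergPV93} (or the sequential one~\cite{Tarjan72}) and build $\cal D$ on $T$ via Theorem~\ref{thm:DS}, obtaining a structure of size $O(m)$. The point of reusing ${\cal D}_0=\cal D$ rather than rebuilding ${\cal D}_i$ on each $T^*_i$ is that Theorem~\ref{thm:DS} needs $m$ processors to build $\cal D$, whereas the recovery phase is limited to $n$ processors; hence every query that would naturally be posed on some ${\cal D}_i$ must instead be answered indirectly through $\cal D$.

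The heart of the proof is a telescoping reduction of queries. By Theorem~\ref{thm:rerootG}, building $T^*_i$ from $T^*_{i-1}$ reduces to rerooting disjoint subtrees of $T^*_{i-1}$ using $O(\log^2 n)$ sets of independent queries, all posed on ancestor-descendant paths of $T^*_{i-1}$, plus local tree computation. I would first prove, by induction on $j$, that every ancestor-descendant path of $T^*_j$ is a union of $O(\log^2 n)$ ancestor-descendant paths of $T^*_{j-1}$: the rerooting algorithm assembles each such path from the paths it appends over its $\log n$ phases and $\log n$ stages, and in each stage only $O(1)$ ancestor-descendant paths of $T^*_{j-1}$ are appended. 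Unfolding this from level $i-1$ down to level $0$ expresses each query path of $T^*_{i-1}$ as a union of $O(\log^{2(i-1)} n)$ ancestor-descendant paths of $T=T^*_0$, so a single query set used while building $T^*_i$ becomes $O(\log^{2(i-1)} n)$ query sets on $\cal D$.

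This reduction is sound by the extension argument of Section~\ref{sec:dsD}: each query of the three types is answered by independently binary-searching the post-order-sorted neighbor list $N(x)$ of every descendant vertex $x$ and then combining the results, so only the query path need be an ancestor-descendant path of the original $T$, while the descendant side ($w$, $T^*_{i-1}(w)$, or $path(v,w)$) may be arbitrary with respect to $T$. The only upkeep is patching the $O(k)$ lists $N(x)$ whose adjacency changed over the updates, which makes each search cost $O(\log n+k)$. For $k\le\log n$ this is $O(\log n)$, so with $n$ processors one base query set costs $O(\log n)$ (Theorem~\ref{thm:DS}), a query set for $T^*_i$ costs $O(\log^{2(i-1)} n\cdot\log n)=O(\log^{2i-1} n)$, and, multiplying by the $O(\log^2 n)$ query sets of Theorem~\ref{thm:rerootG} and absorbing the $O(\log n)$-time local work of Theorem~\ref{thm:DSx}, building $T^*_i$ costs $O(\log^{2i+1} n)$. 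Processing the $k$ updates one after another gives a total of $\sum_{i=1}^{k}O(\log^{2i+1} n)\le k\cdot O(\log^{2k+1} n)=O(k\log^{2k+1} n)$ time with $n$ processors on an EREW PRAM.

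The step I expect to be the main obstacle is making the telescoping rigorous: proving carefully that every ancestor-descendant path of $T^*_j$ decomposes into only $O(\log^2 n)$ ancestor-descendant paths of $T^*_{j-1}$ (which requires tracking exactly what each phase and stage of the rerooting algorithm appends to $T^*$), and verifying that the per-descendant-vertex search on $\cal D$ stays correct even though the descendant vertices are drawn from a tree $T^*_{i-1}$ structurally different from $T$. The hypothesis $k\le\log n$ is precisely what keeps the search cost $O(\log n+k)$ and the blowup $O(\log^{2(i-1)} n)$ under control, so that the total cost stays polylogarithmic for constant $k$, matching the claimed $O(k\log^{2k+1} n)$ bound.
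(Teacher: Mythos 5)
Your proposal follows essentially the same route as the paper: build $\cal D$ once on the original tree $T$, telescope each ancestor-descendant query path of $T^*_{i-1}$ into $O(\log^{2(i-1)} n)$ paths of $T$ via the $O(\log^2 n)$-paths-per-level decomposition, invoke the per-descendant-vertex search argument of Section~\ref{sec:dsD} (with the $O(\log n+k)$ patched searches) to justify answering these on ${\cal D}_0$, and sum $O(\log^{2i+1} n)$ over the $k$ sequential updates. The argument and the accounting match the paper's proof, so no further comparison is needed.
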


\begin{remark} 
For $k=1$, our algorithm also gives an $O(n\log^3 n)$ time sequential algorithm for updating a DFS tree 
after a single update in the graph, achieving similar bounds as Baswana et al.~\cite{BaswanaCCK15}. 
However, our algorithm uses much simpler data structure $\cal D$ at the cost of a more complex algorithm.
\end{remark}

\section{Applications in other models of computation}
We now briefly describe how our algorithm can be easily adopted to the semi-streaming model
and distributed model.

\subsection{Semi-Streaming Setting}
Our algorithm only stores the current DFS tree $T$ and the partially built DFS tree $T^*$ taking $O(n)$ space.
Thus, all operations on $T$ can be performed without any passes over the input graph.
A set of independent queries on $\cal D$ is evaluated by performing a single pass over all the edges 
of the input graph using $O(n)$ space. This is because each set has $O(n)$ queries (see Theorem~\ref{thm:convertApp} and Theorem~\ref{thm:rerootG}) and we are required to store only one edge per query (partial solution based on edges visited by the pass). 
%This can be performed using $O(n)$ space as each set has $O(n)$ queries (see Theorem~\ref{thm:convertApp} and Theorem~\ref{thm:rerootG}). 
Note that here the role of $\cal D$ is performed by a pass over the input graph. 
Hence, the algorithm is first executed for all the components in turn until each instance of the algorithm 
queries the data structure $\cal D$. This is followed by a pass on the input graph to answer these queries and so on. 
Since each stage requires $O(1)$ steps (and hence $O(1)$ sequential queries on $\cal D$), 
it can be performed using $O(1)$ passes. Thus, our algorithm requires $O(\log^2 n)$ passes 
to update the DFS tree after a graph update by executing $\log n$ stages for each of the $\log n$ phases.
Thus, we get the following theorem.

\begin{theorem}
	Given any arbitrary online sequence of vertex or edge updates, 
	we can maintain a DFS tree of an undirected graph 
	using $O(\log^2 n)$ passes over the input graph per update 
	by a semi-streaming algorithm using $O(n)$ space.
	\end{theorem}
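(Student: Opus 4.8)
The plan is to translate the query-count bounds already established for the parallel algorithm into pass-count bounds for the streaming model, exploiting the fact that the only operations touching the non-tree edges of $G$ are the queries on $\cal D$. First I would observe that the entire state the algorithm must retain between passes is the current DFS tree $T$ together with the partially rebuilt tree $T^*$ and the bookkeeping that labels each unvisited component by its type ($C1$ or $C2$), its path $p_c$, its subtree set, and its designated root $r_c$; all of this has size $O(n)$, so the space constraint is met. Every operation on $T$ used by the reduction and rerooting routines --- LCA, testing whether an edge is a back edge, locating subtrees hanging from a path, and so on (Theorem~\ref{thm:DSx}) --- depends only on $T$ and can therefore be carried out in local memory with no pass over the stream.

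The key step is to show that a single pass suffices to answer one set of \emph{independent} queries on $\cal D$. For each query in the set I would keep a single candidate edge (the current best edge incident nearest to the relevant endpoint $x$), so a set of at most $n$ queries needs $O(n)$ working space. As each edge of $G$ streams by, independence guarantees that its \emph{descendant} endpoint lies in the descendant set of at most one query, so the edge can be routed to that query and used to update its candidate in $O(1)$ additional space; after the pass each query holds its correct answer. Thus one batch of independent queries costs exactly one pass.

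It then remains to count the batches. By Theorem~\ref{thm:convertApp} the reduction of any update to a collection of rerooting tasks uses only $O(1)$ sets of independent queries, and by Theorem~\ref{thm:rerootG} the rerooting itself proceeds in $O(\log n)$ phases, each of $O(\log n)$ stages, where every stage issues only $O(1)$ sequential sets of independent queries (one per traversal type). Summing, the whole update uses $O(\log^2 n)$ sets of independent queries, each answered by a single pass, giving $O(\log^2 n)$ passes in total while never storing more than $O(n)$ words.

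I expect the main obstacle to be the middle step: arguing that one pass really does answer an entire batch of queries within $O(n)$ space. This hinges on the definition of independence --- that the descendant vertex sets of the queries in a batch are pairwise disjoint --- which is exactly what lets each streamed edge update at most one candidate and caps the total candidate storage at $O(n)$. One must also verify that the queries the algorithm actually issues in a given stage are genuinely independent (the descendant sets being distinct untraversed subtrees of the processed subtree, or the single traversed path), which the analysis accompanying the traversals in Section~\ref{sec:pdfs} already guarantees.
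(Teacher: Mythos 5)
Your proposal is correct and follows essentially the same route as the paper: operations on $T$ and $T^*$ are done in $O(n)$ local memory, each set of independent queries on $\cal D$ is answered in one pass by keeping one candidate edge per query, and the $O(\log^2 n)$ pass bound follows from Theorem~\ref{thm:convertApp} and the $\log n$ phases times $\log n$ stages of Theorem~\ref{thm:rerootG} with $O(1)$ query sets per stage. Your middle step is in fact spelled out in slightly more detail than the paper's own one-line justification, but the argument is the same.
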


\subsection{Distributed Setting}
\label{sec:dist-overview}
%We now describe how our algorithm can be easily adopted to the distributed environment.
Our algorithm stores only the current DFS tree $T$ and the partially built DFS tree $T^*$ at each node.
Thus, the operations on $T$ are performed locally at each node and the distributed
computation is only used to evaluate the queries on $\cal D$. 
Using Theorem~\ref{thm:convertApp} and Theorem~\ref{thm:rerootG}, each update is performed by 
$O(\log^2 n)$ sequential sets of $O(n)$ independent queries on $\cal D$.
Evaluation of a set of $O(n)$ independent queries on $\cal D$ can be essentially reduced to 
propagation of $O(n)$ words (partial solutions of $n$ queries) throughout the network.
Using the standard technique of pipelined broadcasts and convergecasts~\cite{Peleg00}, 
we can propagate these $O(n)$ words in $O(D)$ rounds using messages of size $O(n/D)$. 
This proves our distributed algorithm described in Section~\ref{sec:results}.
We now describe the implementation details in the distributed model.
%Refer to Appendix~\ref{sec:distributed} for details of implementation.

%%Each node also stores the adjacency list of the corresponding vertex in addition to $T$ and $T^*$. 
%%Also, using Theorem~\ref{thm:convertApp} and Theorem~\ref{thm:rerootG} each update reduces to $O(\log^2 n)$ 
%%sequential sets of $O(n)$ independent queries on $\cal D$.
%%Now, evaluation of a set of $O(n)$ independent queries on $\cal D$ can be essentially be reduced to convergecast and 
%%broadcast of $O(n)$ bits (partial solutions of $n$ queries) throughout the network.
%%
%%For the $k-$machine model these queries on $\cal D$ can easily be evaluated by communicating the partial
%%solutions based on the partial graph stored at the corresponding node to every other node in a single round. 
%%In the ${\cal CONGEST}(n/D)$ model each node only stores the adjacency list of the corresponding vertex in addition to $T$ and $T^*$. 
%Now, any broadcast in this model would use $\Omega(D)$ rounds (where $D$ is diameter of the graph).
%Thus, we use the standard technique of pipelined broadcasts and convergecasts~\cite{Peleg00} to perform 
%broadcast of these $O(n)$ bits in $O(D)$ rounds using $O(n/D)$ sized messages 
%satisfying the ${\cal CONGEST}(n/D)$ model. This proves our results for fully dynamic maintenance of DFS tree in the 
%distributed environment described in Section~\ref{sec:results}.
%Refer to Appendix~\ref{sec:distributed} for details of implementation.
% 

\subsubsection*{Implementation in distributed environment}
\label{sec:distributed}
%% UPDATE THE DS SECTION SO THAT DISTRIBUTED CAN DIRECTLY USE LEMMAS FROM THERE
%% [DONE] Key constraints (space, rounds). Computation time per round not an issue.
%% [DONE] Component discovery and root selection
%% Two types of messages Data Query and Broadcast.
%% Each processor runs maintains copy of original and current DFS tree and data on the current node.
%% Each processor runs the central algorithm for parallel DFS

We now present how our algorithm can be implemented on the distributed model.
In the synchronous ${\cal CONGEST}(B)$ model a processor is present at every node of the graph
and communication links are restricted to the edges of the graph.
The communication occurs in synchronous rounds, where each nodes can send a message of $O(B)$ words 
along each communication link. 
%As described earlier (see Section~\ref{sec:intro_exist_results}), 
%we only allow local storage of size $O(n)$ at each node.
%In the absence of this restriction, the whole graph can be saved at each node, where an algorithm can trivially 
%propagate the update to each node and the updated solution can be computed locally.
Our model includes a {\em preprocessing} stage followed by an alternating sequence of {\em update} and 
{\em recovery} stages. The graph is updated in the {\em update} stage, after which the {\em recovery} stage starts
in which the algorithm updates the DFS tree of the graph.
The model allows the algorithm to complete updating the DFS tree (completing the recovery stage)
before the next update is applied to the graph (update stage).
Similar model was earlier used by Henzinger et al.~\cite{HenzingerKN13}.
We use an additional constraint of a space restriction of $O(n)$ size at each node.
In the absence of this restriction, the whole graph can be stored at each node, 
where an algorithm can trivially propagate the update to each node and the updated solution can be computed locally.
Finally, we also allow the deletion updates to be {\em abrupt}, i.e., 
the deleted link/node becomes unavailable for use instantly after the update. 

Each node stores the current DFS tree $T$ and the partially built DFS tree $T^*$.
Thus, all the operations on $T$ can be performed locally at each node, where the distributed
computation is used only to evaluate the queries on $\cal D$. 
Also, using Theorem~\ref{thm:convertApp} and Theorem~\ref{thm:rerootG} each update reduces to $O(\log^2 n)$ 
sequential sets of $O(n)$ independent queries on $\cal D$.
Thus, we shall only focus on how to evaluate such queries efficiently in the distributed environment.

%In the distributed setting, $\cal D$ is merely the set of all edges in the graph,
%which are divided into mutually disjoint and collectively exhaustive parts depending upon the model used 
%(each part having only a subset of all the edges).
%Since a query on $\cal D$ is merely highest/lowest edge among a set of eligible edge, it can be easily
%evaluated for the whole graph using the partial solutions of the same query 
%performed on each part $\cal D$.

\subsubsection{Optimality of message size}
We first prove that any distributed algorithm maintaining the DFS tree at each node
requires a message size of $\Omega(n/D)$ to update the DFS tree in $O(D)$ rounds. 
Consider the insertion of a vertex, such that the final DFS tree uses $O(n)$ of the newly inserted edges.
This is clearly possible if the current DFS tree has $O(n)$ branches, where leaf of each branch is 
connected to the inserted vertex.
Thus, the information of at least these $O(n)$ new edges needs to be propagated throughout the network
by any algorithm maintaining DFS tree at each node.
Now, broadcasting $m$ messages on a network with diameter $D$ requires $\Omega(m+D)$ rounds~\cite{Peleg00}.
In order to limit the number of rounds to $O(D)$, we can send only $O(D)$ messages.
Thus, any algorithm sending $O(n)$ words of information using $O(D)$ messages would require a message size of $\Omega(n/D)$.
We thus use the ${\cal CONGEST}(n/D)$ model for our distributed algorithm.

\subsubsection{Evaluation of queries on $\cal D$}
Now, each node only stores the adjacency list of the corresponding vertex in addition to $T$ and $T^*$
described above. Recall that a query on $\cal D$ is merely highest/lowest edge among a set of eligible edges.
Hence, it can be easily evaluated for the whole graph by combining the partial solutions of the same query 
performed on each adjacency list locally at the node. Thus, the focus is to broadcast the partial solution 
from each node to reach the whole graph, where each node can then combine them locally to get the solution to the query. 
Moreover, these partial solutions can also be combined during broadcasting 
itself to avoid sending too many messages as described below.

\subsubsection*{Performing broadcasts efficiently} 
Broadcasts can be performed efficiently by using a spanning tree of the graph.
To ensure efficiency of rounds we use a BFS tree as follows.
%Thus, after every update the root $r'$ of each the component (as described above)
After every update, any vertex (say vertex with the smallest index) %each the component (as described above)
starts building a BFS tree ${\cal B}$ rooted at it. The depth of $\cal B$ is $O(D)$ and 
it can be built in $O(D)$ rounds using $O(m)$ messages~\cite{Peleg00}. 
All the broadcasts are now performed only on the tree edges of $\cal B$ as follows.
We first describe it for a single query then extend it to handle $O(n)$ queries. 
Note that it is a trivial extension of the standard pipelined broadcasts and convergecasts algorithm~\cite{Peleg00}.
Each node waits for partial solutions to the query from all its children in $\cal B$,
updates its solution and sends it to its parent. On receiving the partial solutions from all the children, the root
computes the final solution and sends it back to all nodes along the tree edges of $\cal B$. 
Clearly, this process requires $O(D)$ rounds and $O(n)$ messages each of size $O(1)$ (partial solution of a query is a single edge).
In order to perform $O(n)$ independent queries efficiently in parallel, on each edge we send $D$ messages 
of size $O(n/D)$ in a pipelined manner (one after the other) to achieve the broadcast in $O(D)$ time (see pipelined broadcast in~\cite{Peleg00}). 
The total number of messages sent would be $O(nD)$. 
Since the rerooting algorithm requires $O(\log^2 n)$ sequential sets of $O(n)$ queries (see Theorem~\ref{thm:rerootG}), 
we get the following theorem.
 
\begin{theorem}
	Given any arbitrary online sequence of vertex or edge updates, 
	we can maintain a DFS tree in $O(D\log^2 n)$ rounds per update 
	in a distributed setting using $O(nD\log^2 n+m)$ messages each of size $O(n/D)$
	and $O(n)$ local space on each processor, where $D$ is diameter of the graph. 
\end{theorem}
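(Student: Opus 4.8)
The plan is to show that, apart from evaluating queries on ${\cal D}$, every step of the rerooting algorithm is purely local, and then to charge all distributed cost to the query evaluation. First I would have each node store the current DFS tree $T$ together with the partially built tree $T^*$ and the adjacency list of its own vertex; since a DFS tree has $O(n)$ edges this fits the $O(n)$ space budget, and it lets every operation on $T$ (LCA, subtree sizes, testing back edges, locating the relevant subtrees, marking tree edges) be carried out without communication. Invoking Theorem~\ref{thm:convertApp} together with Theorem~\ref{thm:rerootG}, I would reduce the processing of a single update to $O(\log^2 n)$ \emph{sequential} sets, each consisting of $O(n)$ independent queries on ${\cal D}$. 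The key observation is that ${\cal D}$ need not be materialised at all: a query asks only for the highest or lowest eligible edge, so each node can compute a partial answer from its local adjacency list, and the global answer is obtained by combining these partial answers under the associative max/min operation.

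Next I would build the communication infrastructure. Immediately after an update, a canonical node (say the one of smallest index) constructs a BFS tree ${\cal B}$ of depth $O(D)$ rooted at itself, which by the standard construction takes $O(D)$ rounds and $O(m)$ messages; this is the sole source of the additive $O(m)$ term and is paid once per update. All query evaluation is then routed along ${\cal B}$. For a single query I would run a convergecast: each node waits for the partial answers of its children in ${\cal B}$, combines them with its own local partial answer, and forwards the result to its parent; the root then broadcasts the final edge back down. Since ${\cal B}$ has depth $O(D)$ this costs $O(D)$ rounds and $O(n)$ messages of size $O(1)$.

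To answer a whole set of $O(n)$ independent queries at once I would pipeline the convergecast and the broadcast along ${\cal B}$, as in the standard pipelined scheme: packing $\Theta(n/D)$ partial answers per message, the $O(n)$ answers travelling up (or down) any tree edge occupy $O(D)$ messages of size $O(n/D)$, and pipelining over depth $O(D)$ completes the convergecast and the subsequent broadcast in $O(D)$ rounds while sending $O(nD)$ messages. Within the same downward broadcast I would also disseminate the newly appended tree paths, so that by the end of the recovery stage every node holds the full updated DFS tree $T^*$; this adds only $O(n)$ further words and is absorbed in the same bounds. Thus each of the $O(\log^2 n)$ sets costs $O(D)$ rounds and $O(nD)$ messages of size $O(n/D)$.

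Multiplying through, one update takes $O(D\log^2 n)$ rounds and $O(nD\log^2 n)$ query messages plus the $O(m)$ messages for ${\cal B}$, giving $O(nD\log^2 n + m)$ messages of size $O(n/D)$ with $O(n)$ space per node, as claimed. The main obstacle I anticipate is the pipelined convergecast/broadcast analysis: one must verify both the round bound and the message accounting when $\Theta(n/D)$ answers are batched per message, and check that the combining rule is genuinely associative so that answers may be merged during the upward pass and that partial paths added to $T^*$ in different components do not interfere. This is essentially an adaptation of Peleg's pipelined broadcast technique, but the \emph{independence} of the $O(n)$ queries within a set (guaranteed by the rerooting analysis) is what makes the simultaneous pipelining correct, and that is the point I would be most careful to justify.
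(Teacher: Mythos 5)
Your proposal follows essentially the same route as the paper: store $T$, $T^*$, and the local adjacency list at each node so that only the ${\cal D}$-queries need communication, reduce each update to $O(\log^2 n)$ sequential sets of $O(n)$ independent max/min-combinable queries via Theorems~\ref{thm:convertApp} and~\ref{thm:rerootG}, and evaluate each set by a pipelined convergecast/broadcast over a BFS tree built in $O(D)$ rounds with $O(m)$ messages. The only point the paper additionally flags (as a remark, since the pseudo-root is unavailable when links are physical) is the need to maintain a DFS \emph{forest} and handle component splits and merges, which it shows fits in the same bounds.
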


\begin{remark} 
%\subsubsection*{Maintaining a DFS forest} 
Our initial assumption of adding a pseudo root (see Section~\ref{sec:prelim}) connected to 
every vertex of the graph is no longer valid in the distributed system. 
This is because both the processors and communication links are fixed in our model.
%This is because each processor is present only on nodes of the graph and can communicate only through 
%edges of the graph. 
Thus, we need to maintain a DFS forest instead of a DFS tree requiring to handle the cases
when some component is partitioned into several components and when two or more components merge 
as a result of a graph update. The following section describes how this can be achieved in the same bounds 
described above.
\end{remark} 

\subsubsection*{Maintaining a DFS forest} 
%In both cases we proceed as follows. 
After every update in the graph, a neighboring vertex of the affected link/node shall broadcast the information about the update 
to all the vertices in the component. However, in order to limit the number of messages 
transmitted, exactly {\em one} vertex from each component so formed needs to initiate the broadcast.
We shall shortly describe how to choose this vertex.
The chosen vertex also chooses the  new root for the DFS tree of the component (say the node with the smallest index).
The new root then makes the corresponding BFS tree as described above to perform efficient broadcasts.
In case two or more components are merged due to the update, the DFS tree of each component computed
earlier is broadcasted to the entire component by the original roots of two components.
Since, the size of broadcast (DFS tree) is $O(n)$, it can be performed under the same bounds as described above.
%proceeding with the same algorithm as described above. Thus, we have the following theorem.

We now describe how to choose the broadcast vertex efficiently.
In case of vertex/edge insertion, we choose the inserted vertex or 
endpoint of the inserted edge with smaller index respectively.
In case of vertex/edge deletion, for each component so formed, we choose the neighbor of deleted node/link in $T$ that 
has the smallest index. 
For this each neighbor of the deleted node/link needs to know the resultant components formed as a result of the deletion.
This can be easily computed locally if each node also stores the articulation points/bridges of the current DFS tree $T$.
Hence, after computing the DFS tree, each node computes the articulation points/bridges of the DFS tree
according of the subgraph induced by the edges of $T$ and the adjacency list stored at the node.
The vertices/edges present in {\em all} the sets of articulation points/bridges computed at different 
nodes will be the articulation points/bridges of the whole graph. 
Again, this requires each vertex to send $O(n)$ words of information where the partial solutions can be combined.
Thus, it can be performed similar to the queries on $\cal D$ using the same bounds.

\section{Conclusion}
Our parallel dynamic algorithms take nearly optimal time on an EREW PRAM. 
However, the work efficiency of our fully dynamic algorithm is $\tilde{O}(m)$ whereas that 
of the best sequential algorithm~\cite{BaswanaCCK16} is $\tilde{O}(\sqrt{mn})$.
Even though our fault tolerant algorithm is nearly work optimal, its only for constant number of updates.
The primary reason behind these limitations is the difficulty in updating the data structure $\cal D$
using $n$ processors. Our fault tolerant algorithm avoids updating $\cal D$, by naively using 
the original $\cal D$ to simulate the queries of updated $\cal D$. 
It would be interesting to see if an algorithm can process significantly more updates 
using only $n$ processors  in $\tilde{O}(1)$ time (similar extension was performed by 
Baswana et al.~\cite{BaswanaCCK16} in the sequential setting). This may also lead to 
a fully dynamic algorithm that is nearly time optimal with better work efficiency. 
%We have presented parallel dynamic algorithms that update the DFS tree of an undirected graph
%in $\tilde{O}(1)$ time on an EREW PRAM. Using $m$ processors, we can maintain the DFS tree under 
%fully dynamic edge/vertex updates in these bounds. However, if the number of processors are limited to $n$, 
%we can only maintain a fault tolerant DFS tree, handling upto constant number of updates. 
%The primary reason behind this limitation is the difficulty in updating the data structure $\cal D$
%using $n$ processors. Our current fault tolerant algorithm avoids updating $\cal D$, by naively using 
%the original $\cal D$, to simulate the queries of updated $\cal D$. It would be interesting to see
%if an algorithm is able to process significantly more updates using only $n$ processors  
%in $\tilde{O}(1)$ time (similar extension was performed by Baswana et. al \cite{BaswanaCCK16} 
%in the sequential setting). Moreover, the work efficiency of our fully dynamic algorithm 
%is $\tilde{O}(m)$ per update in comparison to $\tilde{O}(\sqrt{mn})$~\cite{BaswanaCCK16} 
%in the sequential setting. It would be interesting to see if an algorithm can achieve 
%better work efficiency using $\tilde{O}(1)$ time per update for fully dynamic DFS in the
% parallel environment.

Further, 
%our parallel and semi-streaming algorithms are nearly optimal (upto $poly\log n$ factors). But, 
our distributed algorithm works only on a substantially restricted synchronous ${\cal CONGEST}(n/D)$ model. Moreover, the number of messages passed during an update in the distributed algorithm is 
$O(nD\log^2 n+m)$, which is way worse than the number of messages required to compute a DFS from 
scratch i.e. $O(n)$. It would be interesting to see if dynamic DFS can be maintained in near optimal 
rounds in more stronger ${\cal CONGEST}$ or ${\cal LOCAL}$ models.
\section*{Acknowledgement}
%{
I am grateful to Christoph Lenzen for suggesting this problem.
I would also like to express my sincere gratitude to my advisor 
Prof. Surender Baswana for valuable discussions and key insights 
that led to this paper. The idea of extending the parallel 
algorithm to the distributed setting was rooted in these discussions.
Finally, I would like to thank the anonymous reviewers and 
my advisor whose reviews significantly helped me to refine the paper.
%This work is supported by Google India under the Google India PhD Fellowship Award.
%}
%\end{acks}

\bibliography{paper}

\appendix
\section{Pseudo codes of Traversals in Rerooting Algorithm}
\label{appn:pseudocode}

%begin{figure*}[h]
	\begin{procedure*}[!h]
		\tcc{Let current phase be $\PP_i$ and current stage be $\St_j$}
		$\tau_c \leftarrow$ Heaviest tree in $\TT_c$\;
		$T(v_H)\leftarrow$ Smallest subtree having size at least $n/2^i$\;
		\lIf{$|\tau_c|\leq n/2^{i}$}
		{Return Reroot-DFS($r_c,p_c,\TT_c$) in next phase}
		\lIf{$|p_c|\leq n/2^{j}$}
		{Return Reroot-DFS($r_c,p_c,\TT_c$) in next stage}
		\BlankLine
		\tcc{Disintegrating Traversal}
		\lIf{$p_c=\phi$ or $r_c= root(\tau_c)$}
		{Return DisInt-DFS($r_c,p_c,\TT_c$)}
		\BlankLine
		\tcc{Disconnecting Traversal}
		\lIf{$r_c\notin \Th_c\cup\{p_c\}$ or $r_c\in T(v_H)$}
		{Return DisCon-DFS($r_c,p_c,\TT_c$)}
		\BlankLine		
		\lIf(\tcc*[f]{Path Halving}){$r_c\in p_c$}{Return Path-Halving($r_c,p_c,\TT_c,\phi$)}
		Heavy-DFS($r_c,p_c,\TT_c$)
		\tcc*{Heavy Subtree Traversal}
		
		\caption{Reroot-DFS($r_c,p_c,\TT_c$): Traversal enters through $r_c$ 
			into the component $c$ containing a path $p_c$ and set of trees $\TT_c$.}
		\label{alg:parallel_dfs}
	\end{procedure*}
%\end{figure*}

%\begin{figure}[h]
	\begin{procedure*}
		\ForEach(\tcc*[f]{$p=path(x,y)$, where $x$ lower in $T^*$})
		{$p \in {\cal P}$}
		{
			\ForEach{$\tau\in \TT$ in parallel using $|\tau|$ processors}
			{
				\If(\tcc*[f]{$\exists$ edge from $\tau$ to $p$}){$Query\big(\tau,path(x,y)\big)\neq \phi$}
				{
					$\TT\leftarrow \TT\setminus\{\tau$\}, $\TT_p\leftarrow \TT_p\cup \{\tau\}$\;
				}
			}
		
			\For(\tcc*[f]{$p'=path(x',y')$, where $x'$ lower in $T^*$})
			{$p' \in \{p_3,p_2,p_1\}$ }
			{
				$\{x_p,y_p\} \leftarrow Query\big(p,path(x',y')\big)$
				\tcc*{where $x_p\in p$}
			
				\lForEach(\tcc*[f]{where $x_{\tau}\in \tau$}){$\tau\in \TT_p$}
				{
					$(x_{\tau},y_{\tau})\leftarrow$  $Query\big(\tau,path(x',y')\big)$	
				}
				$\{x_p,y_p\} \leftarrow $ Lowest edge on $T^*$ among $(x_p,y_p)$ and $(x_\tau,y_\tau),\forall \tau\in \TT_p$\;
			
				\lIf{$(x_p,y_p)$ is a valid edge}{break}
			}
		Add $(x_p,y_p)$ to $T^*$\;
		Reroot-DFS($x_p,p,\TT_p$) in current stage\;
		}

		\ForEach{$\tau\in \TT$ in parallel using $|\tau|$ processors}
		{
			$\TT\leftarrow \TT\setminus \tau$\;
			\For(\tcc*[f]{$p'=path(x',y')$, where $x'$ lower in $T^*$})	
			{$p' \in \{p_3,p_2,p_1\}$ }	
			{
				$(x_{\tau},y_{\tau})\leftarrow$  $Query\big(\tau,path(x',y')\big)$
				\tcc*{where $x_{\tau}\in \tau$}
				\lIf{($x_\tau,y_\tau)$ is a valid edge}{break}
			}
			Add $(y_{\tau},x_{\tau})$ to $T^*$\;
			Reroot-DFS($x_{\tau},\phi,\{\tau\}$) in next stage\;	
		}
		\caption{Process-Comp(${\cal P},\TT,p^*$): 
			Moves components created of type $C1$ and components created with $p\in {\cal P}$ of type $C2$ to the next stage,
			after traversal of $p^*=p_1\cup p_2\cup p_3$, the newly added path in $T^*$.
			Here, $p\in {\cal P},p_1,p_2$ and $p_3$ are ancestor-descendant paths of $T$ and traversal of $p^*$
			ensures components of type $C1$ and $C2$ with paths in $\cal P$ only.
	%		The procedure returns the subtrees and new root of the component connected to $p_c$.
		}
		\label{alg:dfs_procC}
	\end{procedure*}
	\begin{procedure*}
%		\lIf(\tcc*[f]{current phase $P_i$}){$|\tau_c|\leq n/2^{i}$}
%		{Move to next phase}
%		\lIf(\tcc*[f]{current stage $S_j$}){$|p_c|\leq n/2^{j}$}
%		{Move to next stage}
%		\BlankLine
		$T(v_H)\leftarrow$ Smallest subtree $\tau'$ of $\tau$, where $|\tau'|>n/2^i$\;
		
		$\TT\leftarrow$ Subtrees hanging from $path\big(r_c,root(\tau)\big)$\;
		$T(v_h)\leftarrow $ Subtree from $\TT$ containing $v_H$\;
		$\TT\leftarrow \TT\setminus T(v_h) \cup$ Subtrees hanging from $path(v_h,v_H)$\;
	
		Add $path(r_c,v_H)$ to $T^*$\;
		\BlankLine	
		\lIf(\tcc*[f]{Component of type $C2$}){$|p_c|\neq 0$}{$p\leftarrow p_c$;
				 $\TT\leftarrow \TT\cup\TT_c\backslash \tau$}
		\lElse(\tcc*[f]{remaining part of $path\big(r_c,root(\tau)\big)$})
				{	$p\leftarrow path\Big(par\big(par(v_h)\big),root(\tau)\Big)$}
		\BlankLine
		Process-Comp\big($\{p\},\TT,path(r_c,v_H)$\big)
		\tcc*{Goes to DisCon-DFS or next phase}
		\caption{DisInt-DFS($r_c,p_c,\TT_c$): Disintegrating Traversal of a component $c$
			having a path $p_c$ and a set of trees $\TT_c$ through the root $r_c\in \tau\in \mathbb{T}_c$, 
			where either $|p_c|=0$ or $r_c=root(\tau)$.
		}
		\label{alg:dfs_c1}
	\end{procedure*}
	\begin{procedure*}
	%	\BlankLine
		%\tcc{Let $p_c = path(x,y)$ where $|path(x,r_c)|\geq |path(y,r_c)|$}
		
		$p_c\leftarrow p_c\setminus path(r_c,x)$
		\tcc*{$p_c = path(x,y)$ where $|path(x,r_c)|\geq |path(y,r_c)|$}

		$p^*\leftarrow path(r_c,x)$\;
		
		Add $p^*$ to $T^*$\;
	%	$\{x,\TT_c\} \leftarrow $ Separate-C1($p_c,\TT_c,p^*$)\;
		Process-Comp\big($\{p_c\},\TT_c,p^*$\big)
		\tcc*{Goes to next stage}

		\caption{Path-Halving-DFS($r_c,p_c,\TT_c$): Traversal of a component $c$
			having a path $p_c$ and a set of trees $\TT_c$ through the root $r_c\in p_c$.}

%		\caption{Path-Halving-DFS($r_c,p_c,\TT_c,p^*$): Traversal enters through $r_c\in p_c$ 
%			into the component $c$ containing a path $p_c$ and set of trees $\TT_c$, where $p^*$ is the already 
%			traversed path during this stage.}
		\label{alg:dfs_c2_3}
	\end{procedure*}
	\begin{procedure*}
		\BlankLine
%		$(x,y)\leftarrow$ Lowest edge from $\tau$ to $p_c$
%		\tcc*{where $x\in \tau$}
		\tcc{$p_c=path(u,v)$, where $u$ is ancestor of $v$}
		\uIf
		{$\tau$ has an edge to upper half of $p_c$}
		{	$(x,y)\leftarrow$ Lowest edge from $\tau$ to $p_c$;
			$p'_c\leftarrow path(y,u)$
			\tcc*{where $x\in \tau$}
		}
		\lElse(\tcc*[f]{where $x\in \tau$})
		{$(x,y)\leftarrow$ Highest edge from $\tau$ to $p_c$;				
			$p'_c\leftarrow path(y,v)$
			}

		\BlankLine
	
		$\TT\leftarrow$ Subtrees hanging from $path\big(r_c,root(\tau)\big)$\;
		$T(v)\leftarrow $ Subtree from $\TT$ containing $x$\;
		$\TT\leftarrow \TT\setminus T(v) \cup$ Subtrees hanging from $path(v,x)$\;
		$p\leftarrow path\Big(par\big(par(v)\big),root(\tau)\Big)$
		\tcc*{remaining part of $path\big(r_c,root(\tau)\big)$}
		$p^*\leftarrow path(r_c,x)\cup (x,y)\cup p'_c$\;
		Add $p^*)$ to $T^*$\;
	
		Process-Comp\big($\{p,p_c\setminus p'_c\},\TT\cup\TT_c\setminus\{\tau\},p^* $\big)
		\tcc*{Goes to next stage/phase}
		\caption{DisCon-DFS($r_c,p_c,\TT_c$): Disconnecting Traversal of a component $c$
			having a path $p_c$ and a set of trees $\TT_c$ through the root $r_c$, 
			where either $r_c\in \tau\notin \mathbb{T}_c$ or  $r_c\in T(v_H)$.
		}
		\label{alg:dfs_c2_2}
	\end{procedure*}
%\end{figure}

%%%%%%%%%%%%%%%%%%%%%%%%%%%%%%%%%%%%%%%%%%%%%%%%%%%%%%%%%%%%%%%%%%%%%%%%
%%%%%%%%%%%%%%%%%%%% CODE FOR COMPONENT C2-1 %%%%%%%%%%%%%%%%%%%%%%%%%%%
%%%%%%%%%%%%%%%%%%%%%%%%%%%%%%%%%%%%%%%%%%%%%%%%%%%%%%%%%%%%%%%%%%%%%%%%

%\begin{figure}[h]
\begin{procedure*}
$T(v_H)\leftarrow$ Smallest subtree $\tau'$ of $\tau$, where $|\tau'|>n/2^i$\;

\tcc{Considering Scenario 1.}
$\TT\leftarrow$ Subtrees hanging from $path\big(r_c,r'\big)$ with edge in $p_c$\;
$T(v_L)\leftarrow$ Subtree from $\TT$ containing $v_H$\;
$p^*\leftarrow path\big(r_c,r'\big)$\;
$(x_1,y_1)\leftarrow$ Highest edge to $p^*$ from $\tau'\in\TT$ and $p_c$
\tcc*{where $y_1\in p^*$}
\BlankLine
\If{$x_1\notin T(v_L)$\textbf{or} $x_1\in T(v_H)$ \textbf{or} $x_1=v_L$ \textbf{or} $x_1\in p_c$}
{
Add $p^*$ to $T^*$\;
$\TT\leftarrow$ Subtrees hanging from $p^*$\;
Return Process-Comp\big($\{p_c\},\TT\cup\TT_c\backslash\tau,p^*$\big)\;
%		\tcc*{Goes to next stage}	
%$\{x,\TT\} \leftarrow $ Separate-C1($p_c,\TT,p^*$)\;
%Return Reroot-DFS($x,p_c,\TT\cup\TT_c\backslash\tau$)
\tcc*[f]{Goes to DisConn, DisInt or Path-Halving}	
}

\tcc{Considering Scenario 2.}
$\TT\leftarrow \TT\setminus T(v_L)\cup$ Subtrees hanging from $path(v_L,v_H)$ with edge in $p_c$\;
$(x_d,y_d)\leftarrow$ Highest edge to $p^*$ from $\tau'\in\TT$
\tcc*{where $x_d\in \TT'$}
\lIf{$(x_d,y_d)= \phi$}{$y_d=r_c$}

$(x_p,y_p)\leftarrow$ $\{(x',y'): x'\in T(v_L),y'\in path(y_d,r')$ of minimum $LCA(x',v_H)\}$\;
%\lIf{$LCA(x_p,v_H)=LCA(x_d,v_H)$}{$(x_p,y_p)\leftarrow (x_d,y_d)$}

\BlankLine

$p^*\leftarrow path(r_c,x_p)\cup(x_p,y_p)\cup path\big(y_p,par(v_l)\big)$\;

$\TT\leftarrow$ Subtrees hanging from $path\big(r_c,r'\big)$ with edge in $p_c$\;
$\TT\leftarrow \TT\setminus T(v_L)\cup$ Subtrees hanging from $path(v_L,x_p)$ with edge in $p_c$\;

$(x_2,y_2)\leftarrow$ Lowest edge to $p^*$ from $\tau'\in\TT$ or $p_c$
\tcc*{where $y_2\in p^*$}

$T(v_P)\leftarrow $ The subtree hanging from $path(v_L,x_p)$ having $v_H$\;

\BlankLine
\If{$x_2\notin T(v_P)$\textbf{or} $x_2\in T(v_H)$ \textbf{or} $x_2=v_P$ \textbf{or} $x_2\in p_c$}
{
Add $p^*$ to $T^*$\;
$\TT \leftarrow$ Subtrees hanging from $path(r_c,r')$ and $path(v_L,x_p)$ \;
Return Process-Comp\big($\{p_c, path\big(par(y_p),r'\big)\},\TT\cup\TT_c\backslash\tau,p^*$\big)\;%		\tcc*{Goes to next stage}	
%$\{x,\TT\} \leftarrow $ Separate-C1($p_c,\TT,p^*$)\;
%Return Reroot-DFS($x,p_c,\TT\cup\TT_c\backslash\tau$)
\tcc*[f]{Goes to DisConn, DisInt or Path-Halving}	
}

\tcc{Considering Scenario 3.}
$\tau_d\leftarrow $ Subtree hanging on $path(v_L,v_H)$ having $x_d$\;
$(x'_2,y'_2)\leftarrow$ Lowest edge from $\tau_d$ to $(r_c,y_p)$\;

\lIf{$y_2$ lower than $y'_2$}{$(x_r,y_r)\leftarrow (x'_2,y'_2)$}
\lElse{$(x_r,y_r)\leftarrow$ $(x_2,y_2)$}

$p^*\leftarrow path(r_c,x_r)\cup (x_r,y_r)\cup path\big(y_r,r'\big)$\;
$\TT\leftarrow$ Subtrees hanging from $path\big(r_c,r'\big)$  with edge to $p_c$\;
$\TT\leftarrow \TT\setminus T(v_L)\cup$ Subtrees hanging from $path(v_L,x_r)$ with edge to $p_c$\;
$(x_3,y_3)\leftarrow$ Lowest edge to $p^*$ from $\tau'\in\TT$ or $p_c$ 
\tcc*{where $y_3\in p^*$}

$T(v_R)\leftarrow $ The subtree hanging from $path(v_L,x_r)$ having $v_H$\;

\BlankLine
\If{$x_3\notin T(v_R)$\textbf{or} $x_3\in T(v_H)$ \textbf{or} $x_3=v_P$ \textbf{or} $x_3\in p_c$}
{
	Add $p^*$ to $T^*$\;
	$\TT \leftarrow$ Subtrees hanging from $path(r_c,r')$ and $path(v_L,x_r)$ \;
	Return Process-Comp\big($\{p_c, path\big(par(v_l),y_r\big)\setminus\{y_r\}\},\TT\cup\TT_c\backslash\tau,p^*$\big)\;
\tcc*[f]{Goes to DisConn, DisInt or Path-Halving}	
}
Heavy-Special($r_c,p_c,\TT_c$);

\caption{Heavy-DFS($r_c,p_c,\TT_c$): Heavy Subtree Traversal of a component $c$
	having a path $p_c$ and a set of trees $\TT_c$ through the root $r_c\in \tau\in \mathbb{T}_c$, where $r'=root(\tau)$ }
\label{alg:dfs_c2}
\end{procedure*}
%\end{figure}

%\begin{figure}[h]
	\begin{procedure*}
		
		\tcc{Modified $r'$ traversal.}		
		$p^*_{R'}\leftarrow path(r_c,x_2)\cup (x_2,y_2)\cup path\big(y_2,root(\tau)\big)$\;
		$p_1\leftarrow path(par(v_l),y_2)\backslash\{y_2\}$\;
		$\TT\leftarrow$ Subtrees hanging from $path\big(r_c,root(\tau)\big)$  with edge to $p_1$\;
		$\TT\leftarrow \TT\setminus T(v_L)\cup$ Subtrees hanging from $path(v_L,x_2)$ with edge to $p_1$\;
		$(x',y')\leftarrow$ Lowest edge to $p^*_{R'}$ from $\tau'\in\TT$ or $p_1$ 
		\tcc*{where $y'\in p^*_{R'}$}
		
		\BlankLine
		\If{$y'$ at most as high as $y_p$}
		{
			\tcc{Root Traversal of $\tau_d$.}
			$p_{R'}\leftarrow (y_p,x_p)\cup path(x_p,root(\tau_d))$\;
			Add $p^*_{R'}$ and $p_{R'}$ to $T^*$\;
			$\TT \leftarrow$ Subtrees hanging from $path(r_c,root(\tau))\backslash \{T(v_L)\}$ \;
			$\TT \leftarrow$ $\TT\cup$ Subtrees hanging from $path(v_L,x_2)\backslash \{\tau_d\}$ \;
			$\TT \leftarrow$ $\TT\cup$ Subtrees hanging from $path(x_d,root(\tau_d))$ \;
			Return Process-Comp\big($\{p_c, p_1\},\TT\cup\TT_c\backslash\tau,p^*_{R'} \cup p_{R'}$\big)
			\tcc*[f]{Goes to DisConn, DisInt or Path-Halving}	
		}

		\BlankLine
		\tcc{Cover traversal of $p_1$}
		\If(\tcc*[f]{Connected to $p_1$ through $\tau'$}){$x'\notin p_1$}
		{
			$y_{\tau'}\leftarrow x'$\;
			$\{x',x_{\tau'}\}\leftarrow $ Highest edge on $p_1$ from $\tau'$
			\tcc*{where $x'\in p_1$}
		}
		
		\BlankLine
		\uIf{$x'$ at most as high as $y_r$ on $p_1$}
		{
		\tcc{Upward Cover Traversal of $p_1$.}
			\lIf{$x_{\tau'}\neq \phi$}
			{
			$\{x',x_{\tau'}\}\leftarrow $ Lowest edge on $p_1$ from $\tau'$
			\tcc*[f]{where $x'\in p_1$}
			}
		
			$p_{R'}\leftarrow p_{\tau'}\cup path(x',y_2)\backslash \{y_2\}$\;
			$p'_1\leftarrow path(par(v_l),x')\backslash\{x'\}$\;
			$\TT \leftarrow$ Subtrees hanging from $path(v_L,x_2)$ \;
		}
		\Else
		{
		\tcc{Lower Cover Traversal of $p_1$.}
		$(x^*_r,y^*_r)\leftarrow $ 
		Highest edge from $\tau_d$ to $path(par(y_r),y_2)\backslash\{y_2\}$
		\tcc*{where $y^*_r\in path(y_r,y_2)$}		
		\lIf{$(x^*_r,y^*_r)=\phi$}{$(x^*_r,y^*_r)\leftarrow (x_2,y_2)$}
		$p^*_{R'}\leftarrow path(r_c,x^*_r)\cup (x^*_r,y^*_r)\cup path\big(y^*_r,root(\tau)\big)$\;
		$p_{R'}\leftarrow p_{\tau'}\cup path(x',par(v_l))$\;
		$p'_1\leftarrow path(par(x'),y^*_r)\backslash\{y^*_r\}$\;
		$\TT \leftarrow$ $\TT\cup$ Subtrees hanging from $path(v_L,x^*_r)$ \;
		}

		\uIf(\tcc*[f]{Connected to $p_1$ through $\tau'$}){$x_{\tau'}\neq \phi$}
		{
		$p_{\tau'}\leftarrow (y',y_{\tau'})\cup path(y_{\tau'},x_{\tau'})\cup (x_{\tau'},x')$\;
		$p^*_{\tau'}\leftarrow path(LCA(x_{\tau'},y_{\tau'}),root(\tau'))$\;
		$\TT_{\tau'} \leftarrow$ Subtrees hanging from $path(x_{\tau'},y_{\tau'})$ 
						and $path(LCA(x_{\tau'},y_{\tau'}),root(\tau'))$ \;
		}
		\lElse{$p_{\tau'}\leftarrow (y',x')$; $p^*_{\tau'}=\phi$; $\TT_{\tau'}=\phi$}

			Add $p^*_{R'}$ and $p_{R'}$ to $T^*$\;
		$\TT \leftarrow$ $\TT\cup \TT_{\tau'}\cup$
				Subtrees hanging from $path(r_c,root(\tau))\backslash \{T(v_L)\}$ \;
		
		Return Process-Comp\big($\{p_c, p'_1, p^*_{\tau'}\},\TT\cup\TT_c\backslash\tau,p^*_{R'} \cup p_{R'}$\big)\;
			\tcc{Goes to DisConn, DisInt or Path-Halving}	

		\caption{Heavy-Special($r_c,p_c,\TT_c$): Special Case of Heavy Subtree Traversal of a component $c$
			having a path $p_c$ and a set of trees $\TT_c$ through the root $r_c\in \tau\in \mathbb{T}_c$. }
		\label{alg:dfs_c2sp}
	\end{procedure*}

\end{document}